\newtheorem{Theorem}{Theorem}[part]
\newtheorem{Definition}{Definition}[part]
\newtheorem{Proposition}{Proposition}[part]
\newtheorem{Assumption}{Assumption}[part]
\newtheorem{Lemma}{Lemma}[part]
\newtheorem{Corollary}{Corollary}[part]
\newtheorem{Remark}{Remark}[part]
\newcommand{\nc}{\newcommand}
\nc{\esssup}{\mathop{\mathrm{ess\,sup}}}
\nc{\essinf}{\mathop{\mathrm{ess\,inf}}}
\nc{\argmax}{\mathop{\mathrm{arg\,max}}}
\def \P{\mathbb{P}}
\def \N{\mathbb{N}}
\def \R{\mathbb{R}}
\def \E{\mathbb{E}}
\def \G{\mathbb{G}}
\def \Q{\mathbb{Q}}
\def \1{\mathds{1}}
\def \Ac{{\cal A}}
\def \Bc{{\cal B}}
\def \Cc{{\cal C}}
\def \Dc{{\cal D}}
\def \Fc{{\cal F}}
\def \Gc{{\cal G}}
\def \Sc{{\cal S}}
\begin{document}

\title{Exponential utility maximization and  indifference price in an incomplete market with defaults}
\author{Thomas LIM \thanks{Laboratoire de Probabilit\'es et Mod\`eles Al\'eatoires,
CNRS, UMR 7599, Universit{\'e}s Paris 6-Paris 7, \sf
tlim@math.jussieu.fr}
~~~ Marie-Claire QUENEZ\thanks{Laboratoire de Probabilit\'es et Mod\`eles Al\'eatoires,
CNRS, UMR 7599, Universit{\'e}s Paris 6-Paris 7, \sf
quenez@math.jussieu.fr}
}

\maketitle
\begin{abstract}
In this paper, we study the indifference pricing of a contingent claim via the maximization of exponential utility over a set of admissible strategies. We consider a financial market with a default time inducing a discontinuity in the price of stocks. We first consider the case of strategies valued in a {\em compact} set. Using a verification theorem, we show that in the case of bounded coefficients the value function of the exponential utility maximization problem can be characterized as {\em the solution of a Lipschitz BSDE} (backward stochastic differential equation). Then, we consider the case of non constrained strategies. By using dynamic programming technics, we state that the value function is the {\em maximal subsolution of a BSDE}. Moreover, the value function is the {\em limit of a sequence of processes}, which are the value functions associated with some subsets of bounded admissible strategies. In the case of bounded coefficients, these approximating processes are the solutions of Lipschitz BSDEs, which leads to possible numerical computations. These properties can be applied to the indifference pricing problem and they can be generalized to the case of several default times or a Poisson process.
\end{abstract}

\vspace{1cm}

\textbf{Keywords} Indifference pricing, optimal investment, exponential utility, default time, default intensity, dynamic programming principle, backward stochastic differential equation.\\

\textbf{JEL Classification:} C61, G11, G13.\\

\textbf{MSC Classification (2000):} 49L20, 93E20.
\newpage

\section{Introduction}
\setcounter{equation}{0} \setcounter{Assumption}{0}
\setcounter{Theorem}{0} \setcounter{Proposition}{0}
\setcounter{Corollary}{0} \setcounter{Lemma}{0}
\setcounter{Definition}{0} \setcounter{Remark}{0}

In this paper, we study the indifference pricing problem in a market where the underlying traded assets are assumed to be local martingales driven by a Brownian motion and a default indicating process. We denote by $S_t=(S_t^i)_{1\leq i \leq n}$ for all $t\in [0,T]$ the price of these assets where $T<\infty$ is the fixed time horizon and $n$ is the number of assets. The price process $(S_t)$ is defined on a filtered space $(\Omega,\Gc,(\Gc_t)_{0\leq t \leq T},\P)$. Following Hodges and Neuberger \cite{hodneu89}, we define the (buying) indifference price $p(\xi)$ of a contingent claim $\xi$, where $\xi$ is a $\Gc_T$-measurable random variable, as the implicit solution of the equation
\begin{equation}
\sup_{\pi}\E\Big[U\Big(x+\int_0^T\pi_tdS_t\Big)\Big]=\sup_{\pi}\E\Big[U\Big(x-p(\xi)+\int_0^T\pi_tdS_t+\xi\Big)\Big],
\label{equation indifference}
\end{equation}
where the suprema are taken over admissible portfolio strategies $\pi$, $x\in \R$ is the initial endowment and $U$ is a given utility function. In other words, the price of the contingent claim is defined as the amount of money $p(\xi)$ to withdraw to his initial wealth $x$ that allows the investor to achieve the same supremum of the expected utility as the one he would have had with initial wealth $x$ without buying the claim. 
A lot of papers study the indifference pricing problem. Among them, we quote Rouge and El Karoui \cite{rounek00} for a Brownian filtration, Biagni \emph{et al.} \cite{biafrigra08} for the case of general semimartingales, Bielecki and Jeanblanc \cite{biejea08} for the case of a discontinuous filtration. An extensive survey of the recent literature on this topic can be found in Carmona \cite{car08}. \\

Throughout this paper, the utility function $U$ is assumed to be the exponential utility. By (\ref{equation indifference}), the study of the indifference pricing of a given contingent claim is clearly linked to the study of the utility maximization problem. 

Recall that concerning the study of the maximization of the utility of terminal wealth, there are two possible approaches:
\begin{itemize}
\item The first one is the {\em dual approach} formulated in a {\em static} way.
This dual approach has been largely studied in the literature. Among them, in a Brownian framework, we quote Karatzas \emph{et al.} \cite{karlehshr87} in a complete market and Karatzas \emph{et al.} \cite{karlehshrxue91} in an incomplete market. In the case of general semimartingales, we quote Kramkov and Schachermayer \cite{krasch99}, Shachermayer \cite{sch01} and Delbaen \emph{et al.} \cite{deletal02} for the particular case of an exponential utility function. For the case with a default in a markovian setting we refer to Lukas \cite{luk01}. Using this approach, these different authors solve the utility maximization problem in the sense of finding the optimal strategy and also give a characterization of the optimal strategy via the solution of the dual problem.
\item The {\em second approach} is the {\em direct} study of the primal problem(s) by using stochastic control technics such as {\em dynamic programming}. 
Recall that these technics had been used in finance but only in a markovian setting for along time. For example the reference paper of Merton \cite{mer71} uses the well known Hamilton-Jacobi-Bellman verification theorem to solve the utility maximization problem of consumption/wealth in a complete market. The use in finance of stochastic dynamic technics (presented in El Karoui's course \cite{nek79} in a general setting) is more recent. One of the first work in finance using these technics is that of El Karoui and Quenez \cite{nekque95}. Also, recall that the backward stochastic differential equations (BSDEs) have been introduced by Duffie and Epstein \cite{dufeps92} in the case of recursive utilities and by Peng \cite{pen90} for a general Lipschitz coefficient. In the paper of El Karoui \emph{et al.} \cite{nekpenque97}, several applications to finance are presented. Also, an interesting result of this paper is a {\em verification} theorem which allows to characterize the dynamic value function of an optimization problem as the solution of a Lipschitz BSDE. This principle stated in the Brownian case has many applications in finance. One of them can be found in Rouge and El Karoui \cite{rounek00} who study the exponential utility maximization problem in the incomplete Brownian case and characterize the dynamic indifference price as the solution of a quadratic BSDE (introduced by Kobylanski \cite{kob00}). Concerning the exponential utility maximization problem, there is also the nice work of Hu \emph{et al.} \cite{huimkmul05} still in the Brownian case. By using a {\em verification} theorem (different from the previous one), they characterize the logarithm of the dynamic value function as the solution of a quadratic BSDE. 
\end{itemize}

The case of a discontinuous framework is more difficult. One reason is that there are less results on BSDEs with jumps than in the Brownian case. Concerning the study of the exponential utility maximization problem in this case, we refer to Morlais \cite{mor09}. She supposes that the price process of stock is modeled by a local martingale driven by an independent Brownian motion and a Poisson point process. She mainly studies the interesting case of admissible strategies valued in a compact set (not necessarily convex). Using the same approach as in Hu \emph{et al.} \cite{huimkmul05}, she states that the logarithm of the associated value function is the unique solution of a quadratic BSDE (for which she shows an existence and a uniqueness result). In the non constrained case, she obtains formally a quadratic BSDE. She proves the existence of a solution of this BSDE by using an approximation method but she does not obtain uniqueness result. Hence, in this case, this does not allow to characterize the value function in terms of BSDEs.\\
\indent In this paper, we first consider the case of strategies valued in a {\em compact} set. By using a 
verification theorem, which is a generalization of that of El Karoui \emph{et al.} \cite{nekpenque97} to the case of jumps, we show that the value function of the exponential utility maximization problem can be characterized as the solution of a {\em Lipschitz BSDE}. Second, we consider the case of {\em non constrained} strategies. We use the dynamic programming principle to show directly that the value function is characterized as the maximal solution or the {\em maximal subsolution} of a BSDE. Moreover, we give {\em another characterization} of the value function as the {\em nonincreasing limit of a sequence of processes}, which are the value functions associated with some subsets of bounded admissible strategies. In the case of bounded coefficients, these approximating processes are the solutions of Lipschitz BSDEs. As a direct consequence, this suggests some possible {\em numerical computations} in order to approximate the value function and the indifference price. Also, we generalize these results to the case of several default times and several stocks, and to the case of a Poisson process instead of a hazard process.

 The outline of this paper is organized as follows. In Section 2, we present the market model and the maximization problem in the case of only one risky asset ($n=1$). In Section 3, we study the case of strategies valued in a {\em compact} set. In Section 4, we consider the {\em non constrained case} and  state a first characterization of the value function as the maximal subsolution of a BSDE. In Section 5, we give a second characterization of the value function as the nonincreasing limit of a sequence of processes. In Section 6, we consider the classical case where the {\em coefficients are bounded} which simplifies the two previous characterizations of the value function. In Section 7, we study the case of unbounded {\em coefficients} which satisfy {\em some exponential integrability} conditions. Finally in Section 8, we study the {\em indifference price} for a contingent claim. In the last section, we generalize the previous results to the case of several assets ($n\geq 1$) and several default times and we also extend these results to a Poisson jump model.

\section{The market model}
\label{modele}
\setcounter{equation}{0} \setcounter{Assumption}{0}
\setcounter{Theorem}{0} \setcounter{Proposition}{0}
\setcounter{Corollary}{0} \setcounter{Lemma}{0}
\setcounter{Definition}{0} \setcounter{Remark}{0}

Let $(\Omega, \mathcal{G},\mathbb{P})$ be a complete probability space. We assume that all processes are defined on a finite time horizon $[0,T]$. Suppose that this space is equipped with two stochastic processes: a unidimensional standard Brownian motion $(W_t)$ and a jump process $(N_t)$ defined by $N_t=\1_{\tau\leq t}$ for any $t\in[0,T]$, where $\tau$ is a random variable which modelizes a default time (see Section \ref{plusieur defaut} for several default times). We assume that this default can appear at any time that is $\P(\tau>t)>0$ for any $t\in[0,T]$. We denote by $\mathbb{G}=\{\mathcal{G}_t,0\leq t \leq T\}$ the completed filtration generated by these processes. The filtration is supposed to be right-continuous and $(W_t)$ is a $\mathbb{G}$-Brownian motion.\\
\indent We denote by $(M_t)$ the compensated martingale of the process $(N_t)$ and by $(\Lambda_t)$ its compensator. We assume that the compensator $(\Lambda_t)$ is absolutely continuous with respect to Lebesgue's measure, so that there exists a process $(\lambda_t)$ such that $\Lambda_t=\int_0^t\lambda_sds$. Hence, the $\mathbb{G}$-martingale $(M_t)$ satisfies
\begin{equation}
M_t=N_t-\int_0^t\lambda_sds\,.
\label{M}
\end{equation}

We introduce the following sets:
\begin{itemize}
\item $\Sc^{+,\infty}$ is the set of positive $\G$-adapted $\P$-essentially bounded c\`ad-l\`ag processes on $[0,T]$.
\item $L^{1,+}$ is the set of positive $\G$-adapted c\`ad-l\`ag processes on $[0,T]$ such that $\E[Y_t]<\infty$ for any $t\in [0,T]$.
\item $L^2(W)$ (resp. $L^2_{loc}(W)$) is the set of $\G$-predictable processes on $[0,T]$ under $\P$ with
 \begin{equation*}
 \mathbb{E}\Big[\int_0^T|Z_t|^2dt\Big]<\infty~~\text{(resp. }\int_0^T|Z_t|^2dt<\infty~a.s. \text{ )}.
 \end{equation*}
\item $L^2(M)$ (resp. $L^2_{loc}(M)$, $L^1_{loc}(M)$) is the set of $\G$-predictable processes on $[0,T]$ such that 
\begin{equation*}
\E\Big[\int_0^T\lambda_t|U_t|^2dt\Big]<\infty~~\text{(resp. }\int_0^T\lambda_t|U_t|^2dt<\infty, \int_0^T\lambda_t|U_t|dt<\infty~a.s. \text{ ).}
\end{equation*}\end{itemize}

We recall the useful martingale representation theorem (see Jeanblanc \emph{et al.} \cite{jeayorche09}):
\begin{Lemma}\label{theoreme representation}
Any $(\P,\G)$-local martingale has the representation 
\begin{equation}\label{equation representation}
m_t=m_0+\int_0^ta_sdW_s+\int_0^tb_sdM_s, ~\forall\,t\in[0,T] ~ a.s.,
\end{equation}
where $a\in L^2_{loc}(W)$ and $b\in L^1_{loc}(M)$. If $(m_t)$ is a square integrable martingale, each term on the right-hand side of the representation (\ref{equation representation}) is square integrable. 
\end{Lemma}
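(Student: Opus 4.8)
The plan is to establish the predictable representation property for the pair $(W,M)$ with respect to $\G$, first for square-integrable martingales and then to extend it to arbitrary local martingales by localization. The starting observation is that $W$ and $M$ are orthogonal $\G$-martingales: $W$ is continuous while $M$ is a finite-variation pure-jump martingale whose only jump occurs at $\tau$, so their predictable bracket $\langle W,M\rangle$ vanishes. Consequently the two families of stochastic integrals $\{\int_0^\cdot a_s\,dW_s : a\in L^2(W)\}$ and $\{\int_0^\cdot b_s\,dM_s : b\in L^2(M)\}$ span two orthogonal stable subspaces of the Hilbert space $\Hc^2_0$ of square-integrable $\G$-martingales null at $0$ (here the completeness and right-continuity of $\G$ are what make the general martingale machinery available).

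The first substantial step is to show that these two subspaces together exhaust $\Hc^2_0$. Writing $\Hc$ for their closed sum, it suffices to prove that any $m\in\Hc^2_0$ orthogonal to $\Hc$ is identically zero. Orthogonality to $\int a\,dW$ for every $a$ forces $\langle m,W\rangle\equiv 0$, and orthogonality to $\int b\,dM$ for every $b$ forces $\langle m,M\rangle\equiv 0$; the task is then to show that a square-integrable martingale with no bracket against either $W$ or $M$ must be trivial. The cleanest route is to use that $\Gc_T$ is, up to negligible sets, generated by $W$ and $N$, together with a totality argument: the Dol\'eans-Dade exponentials $\Ec(X)_T$, where $X_\cdot=\int_0^\cdot f_s\,dW_s+\int_0^\cdot g_s\,dM_s$ for bounded predictable $f,g$, form a total set in $L^2(\Gc_T)$. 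Each such variable is the terminal value of a martingale satisfying $d\Ec(X)_t=\Ec(X)_{t^-}\big(f_t\,dW_t+g_t\,dM_t\big)$, hence already of the representable form and lying in $\Hc$; density then yields $\Hc=\Hc^2_0$, i.e. the $L^2$ representation.

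Having the representation for $m\in\Hc^2$, I would pass to a general $(\P,\G)$-local martingale by localization. A localizing sequence of stopping times $(\rho_n)$ reduces $m$ to square-integrable martingales $m^{\rho_n}$ (stopping first, if necessary, to control the single jump so that the stopped process is bounded); applying the $L^2$ result on each $[0,\rho_n]$ and invoking uniqueness of the representation shows that the integrands obtained on successive intervals agree, so they paste into globally defined predictable processes $a$ and $b$. The integrability classes then come out exactly as stated: the Brownian integrand satisfies $\int_0^T|a_s|^2\,ds<\infty$ a.s., i.e. $a\in L^2_{loc}(W)$, while the integral against $M$ is well defined as a local martingale as soon as $\int_0^T\lambda_s|b_s|\,ds<\infty$ a.s., i.e. $b\in L^1_{loc}(M)$. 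When $m$ is moreover square integrable, the isometry $\E\big[\int_0^T|a_s|^2\,ds+\int_0^T\lambda_s|b_s|^2\,ds\big]=\E[\langle m\rangle_T]<\infty$ (using $\langle M\rangle_t=\int_0^t\lambda_s\,ds$ and the orthogonality of $W$ and $M$) forces each term on the right-hand side to be square integrable, giving the final assertion.

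The main obstacle is the second step, namely proving that the stable subspace generated by $W$ and $M$ is all of $\Hc^2_0$. This is precisely where the specific structure of $\G$ enters: being generated by the Brownian motion and a single default with absolutely continuous compensator, it carries no randomness beyond what $W$ and $M$ encode, and the totality of the stochastic exponentials $\Ec(X)_T$ — equivalently, a monotone-class argument on cylindrical functionals of $W$ and $N$ — is the device that turns this heuristic into a rigorous density statement.
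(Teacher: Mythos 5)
First, a point of comparison: the paper does not prove this lemma at all --- it is quoted from Jeanblanc, Yor and Chesney \cite{jeayorche09} --- so your proposal must be measured against the standard proof of that cited result, which establishes the $L^2$ representation (as you do) and then treats the jump at $\tau$ by direct compensation rather than by localization.

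The genuine gap is in your passage from square-integrable martingales to general local martingales. You propose to localize, ``stopping first, if necessary, to control the single jump so that the stopped process is bounded''; but stopping cannot control the jump at $\tau$: on $\{\tau\leq\rho_n\}$ the stopped process $m^{\rho_n}$ still carries the full jump $\Delta m_\tau$, and in this filtration a local martingale need not be locally square integrable. Concretely, pick a predictable $b$ with $\int_0^T\lambda_s|b_s|\,ds<\infty$ a.s.\, but $b_\tau\in L^1\setminus L^2$; then $m=\int b\,dM$ is a local martingale whose single jump is not square integrable, and no localizing sequence $(\rho_n)$ with $\P(\rho_n\geq\tau)\to1$ makes $m^{\rho_n}$ square integrable. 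This is exactly why the lemma asserts only $b\in L^1_{loc}(M)$ and not $b\in L^2_{loc}(M)$: a proof that obtains the general case purely by pasting $L^2$ representations would yield the wrong (too strong) integrability class, and in fact cannot cover all local martingales. The standard repair is the canonical decomposition of a local martingale into a part with bounded jumps --- which is locally square integrable, so your $L^2$ argument applies to it --- plus a purely discontinuous local martingale of locally integrable variation; the latter is here a compensated single jump, $\Delta m_\tau\1_{\tau\leq t}$ minus its dual predictable projection, which produces the $dM$-integral with an $L^1_{loc}(M)$ integrand directly.

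A second, repairable, weak point is the totality step. At this stage of the paper $\lambda$ is not assumed bounded (boundedness only enters from Section 3), and for bounded predictable $g$ taking negative values the exponential $\Ec(X)_T$ contains the factor $\exp(-\int_0^T g_s\lambda_s\,ds)\leq\exp(\|g\|_\infty\Lambda_T)$; since only $\E[\Lambda_T]\leq1$ is available, such variables need not lie in $L^2(\Gc_T)$, so the asserted total family is not even obviously a subset of the space in which you run the orthogonality argument. One can fix this by restricting to $g\geq0$ (at the cost of a further argument for totality of the one-sided family) or, as in the cited reference, by running the monotone-class argument on cylinder functionals of $(W,N)$ and computing the relevant conditional expectations explicitly via the hazard process. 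Modulo these two repairs --- the jump decomposition and the integrability of the exponential family --- your orthogonal-stable-subspace outline is the standard route to the $L^2$ statement, and your final isometry argument, using $[m]_T=\int_0^T|a_s|^2ds+b_\tau^2\1_{\tau\leq T}$ and compensation, correctly delivers the last assertion of the lemma.
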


 We now consider a financial market which consists of one risk-free asset, whose price process is assumed for simplicity to be equal to $1$ at any date, and one risky asset with price process $S$ which admits a discontinuity at time $\tau$ (we give the results for $n$ assets and $p$ default times in Section \ref{plusieur defaut}). In the sequel, we consider that the price process $S$ evolves according to the equation
\begin{equation}
dS_t=S_{t^-}(\mu_tdt+\sigma_tdW_t+\beta_tdN_t),
\label{actif S}
\end{equation}
with the classical assumptions:
\begin{Assumption}\hfill

{\rm \begin{enumerate}[(i)]
\item $(\mu_t)$, $(\sigma_t)$ and $(\beta_t)$ are $\G$-predictable processes such that $\sigma_t > 0$ and
\begin{equation*}
\int_0^T|\sigma_t|^2dt+\int_0^T\lambda_t|\beta_t|^2dt < \infty~a.s.,
\end{equation*}
\item the process $(\beta_t)$ satisfies $\beta_\tau>-1$ (this assumption implies that the process $S$ is positive).
\end{enumerate}
}\end{Assumption}

We also suppose that $\E[\exp(-\int_0^T\alpha_s dW_s -\frac{1}{2}\int_0^T \alpha^2_t dt)] =1$ where $\alpha_t = (\mu_t+\lambda_t\beta_t)/\sigma_t$, which gives the existence of a martingale probability measure and hence the absence of arbitrage.

A $\G$-predictable process $\pi=(\pi_t)_{0\leq t\leq T}$ is called a trading strategy if $\int_0^T\frac{\pi_t}{S_{t^-}}dS_t$ is well defined, e.g. $\int_0^T|\pi_t\sigma_t|^2dt +\int_0^T\lambda_t|\pi_t\beta_t|^2dt<\infty$ a.s. The process $(\pi_t)_{0\leq t\leq T}$ describes the amount of money invested in the risky asset at time $t$. The wealth process $(X^{x,\pi}_t)$ associated with a trading strategy $\pi$ and an initial capital $x$, under the assumption that the trading strategy is self-financing, satisfies the equation
\begin{equation}\label{richesse}
\left\{\begin{aligned}
dX_t^{x,\pi} & =\pi_t\big(\mu_tdt+\sigma_tdW_t+\beta_tdN_t\big), \\
X_0^{x,\pi} & =x.
\end{aligned}\right.
\end{equation}
For a given initial time $t$ and an initial capital $x$, the associated wealth process is denoted by $X^{t,x,\pi}_s$.\\
\indent We assume that the investor in this financial market faces some liability, which is modeled by a random variable $\xi$ (for example, $\xi$ may be a contingent claim written on a default event, which itself affects the price of the underlying asset). We suppose that $\xi\in L^2(\Gc_T)$ and is non-negative (note that all the results still hold under the assumption that $\xi$ is only bounded from below). 

Our aim is to study the classical optimization problem
\begin{equation}\label{pb exponentiel}
V(x, \xi)=\sup\limits_{\pi \in \Dc} \mathbb{E}\big[U(X_T^{x,\pi}+\xi)\big],
\end{equation}
where $\Dc$ is a set of admissible strategies (independent of $x$) which will be specified in the sequel and $U$ is an exponential utility function
 \begin{equation*}
 U(x)=-\exp(-\gamma x), ~x \in \R,
  \end{equation*} 
 where $\gamma >0$ is a given constant, which can be seen as a coefficient of absolute risk aversion. Hence, the optimization problem (\ref{pb exponentiel}) can be clearly written as
\begin{equation*}
V(x,\xi)=e^{-\gamma x}V(0,\xi).
\end{equation*}
Hence, it is sufficient to study the case $x= 0$. To simplify notation we will denote $X^{\pi}_t$ (resp. $X^{t,\pi}_t$) instead of $X^{0,\pi}_t$ (resp. $X^{t,0,\pi}_t$). Also, note that 
\begin{equation}\label{pb exponentieldeux}
V(0, \xi)= - \inf \limits_{\pi \in \Dc} \mathbb{E}\big[\exp\big(-\gamma(X^{\pi}_T+\xi)\big)\big].
\end{equation}

\section[Strategies valued in a compact set]{Strategies valued in a given compact set (in the case of bounded coefficients)}
\label{compact}
\setcounter{equation}{0} \setcounter{Assumption}{0}
\setcounter{Theorem}{0} \setcounter{Proposition}{0}
\setcounter{Corollary}{0} \setcounter{Lemma}{0}
\setcounter{Definition}{0} \setcounter{Remark}{0}

In this section, we study the case where the strategies are constrained to take their values in a compact set denoted by $C$ (the admissible set will be denoted by $\Cc$ instead of $\Dc$). 
\begin{Definition}{\rm
The set of admissible strategies $\Cc$ is the set of predictable $\R$-valued processes $\pi$ such that they take their values in a compact set $C$ of $\R$.
}\end{Definition}  
We assume in this part that:
\begin{Assumption}{\rm
The processes $(\mu_t)$, $(\sigma_t)$, $(\beta_t)$ and the compensator $(\lambda_t)$ are uniformly bounded.
\label{coefficient borne}
}\end{Assumption}
 This case cannot be solved by using the dual approach because the set of admissible strategies is not necessarily convex. In this context, we address the problem of characterizing dynamically the value function associated with the exponential utility maximization problem.
 We give a dynamic extension of the initial problem (\ref{pb exponentieldeux}) (with $\Dc$ $=$ $\Cc$). For any initial time $t\in [0,T]$, we define the value function $J(t,\xi)$ (also denoted by $J(t)$) by the following random variable
\begin{equation}\label{Jkt}
J(t,\xi)=\essinf_{\pi\in \Cc_t}\E\big[\exp\big(-\gamma(X^{t,\pi}_T+\xi)\big)\big|\Gc_t\big],
\end{equation}
where $\Cc_t$ is the set of predictable $\R$-valued processes $\pi$ beginning at $t$ and such that they take their values in $C$. Note that $V(0, \xi) = - J(0, \xi)$. 

In the sequel, for $\xi$ fixed, we want to characterize this dynamic value function $J(t)$ $(= J(t,\xi))$ as the solution of a BSDE. 

For that, for each $\pi \in \Cc$, we introduce the c\`{a}d-l\`{a}g process $(J_t^{\pi})$ satisfying 
\begin{equation*}
J_t^{\pi}=\mathbb{E}\big[\exp\big(-\gamma (X_T^{t,\pi}+\xi)\big)\big| \mathcal{G}_t\big],~\forall\, t\in[0,T].
\end{equation*}

Since the coefficients are supposed to be bounded and the strategies are constrained to take their values in a compact set, it is possible to solve very simply the problem by using a {\em verification} principle in terms of Lipschitz BSDEs in the spirit of that of El Karoui \emph{et al.} \cite{nekpenque97}. 

Note first that for any $\pi\in \Cc$, the process $(J_t^{\pi})$ can be easily shown to be the solution of a linear Lipschitz BSDE. More precisely, there exist $Z^{\pi}\in L^2(W)$ and $U^{\pi}\in L^2(M)$, such that $(J_t^{\pi},Z_t^{\pi},U_t^{\pi})$ is the unique solution in $\Sc^{+,\infty}\times L^2(W)\times L^2(M)$ of the linear BSDE with bounded coefficients
\begin{equation}\label{edsr jkpi}
-\,dJ^\pi_t=f^\pi(t,J_t^\pi,Z_t^\pi,U_t^\pi)dt-Z_t^\pi dW_t-U_t^\pi dM_t~;~J_T^\pi=\exp(-\gamma \xi),
\end{equation}
where $f^\pi(s,y,z,u)= \frac{\gamma^2}{2}\pi_s^2\sigma_s^2y-\gamma\pi_s(\mu_sy+\sigma_sz)-\lambda_s(1-e^{-\gamma\pi_s\beta_s})(y+u)$.

Using the fact that $J(t)=\essinf_{\pi\in\Cc_t} J_t^\pi$ for any $t\in[0,T]$, we state that $(J(t))$ corresponds to the solution of a BSDE, whose driver is the essential infimum over $\pi$ of the drivers of $(J_t^\pi)_{\pi\in\Cc}$. More precisely, 
\begin{Proposition}\label{bc}
 The following properties hold:
 \begin{itemize}
\item Let $(Y_t,Z_t,U_t)$ be the solution in $\Sc^{+,\infty}\times L^2(W)\times L^2(M)$ of the following BSDE
\begin{equation}\label{Jk edsr}
\hspace{-5mm}\left\{\begin{aligned}
 -\,dY_t=& \essinf\limits_{\pi\in \Cc}\Big\{\frac{\gamma^2}{2}\pi_t^2\sigma_t^2Y_t- \gamma\pi_t(\mu_tY_t+\sigma_tZ_t)-\lambda_t(1-e^{-\gamma\pi_t\beta_t})(Y_t+U_t)\Big\}dt  \\
&-\, Z_tdW_t-U_tdM_t , \\
Y_T=&\exp(-\gamma \xi). 
\end{aligned}\right.
\end{equation}

Then, for any $t\in[0,T]$, $J(t)=Y_t$ a.s.
\item There exists a unique optimal strategy $\hat{\pi}\in \Cc$ for $J(0)=\inf_{\pi\in\Cc}\E[\exp(-\gamma (X_T^\pi+\xi))]$ and this strategy is characterized by the fact that it attains the essential infimum in (\ref{Jk edsr}) $dt\otimes d\P-a.e.$
\end{itemize}
\label{unicite}
\end{Proposition}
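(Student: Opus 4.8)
The plan is to establish both assertions by a verification argument built on the comparison theorem for BSDEs driven by $W$ and $M$, complemented by a measurable selection. The first observation is that, since the driver in \eqref{Jk edsr} depends on $\pi$ only through its current value, the essential infimum over $\pi\in\Cc$ reduces, for each fixed $(\omega,t,Y_t,Z_t,U_t)$, to the pointwise infimum over $a$ in the compact set $C$ of
\[
g_t(a):=\frac{\gamma^2}{2}a^2\sigma_t^2 Y_t-\gamma a(\mu_t Y_t+\sigma_t Z_t)-\lambda_t(1-e^{-\gamma a\beta_t})(Y_t+U_t).
\]
First I would check that the aggregated driver $f(t,y,z,u):=\essinf_{a\in C}\{\cdots\}$ is well defined and Lipschitz in $(y,z,u)$: each $f^a$ is Lipschitz with a constant depending only on the uniform bounds of $\mu,\sigma,\beta,\lambda$ (Assumption~\ref{coefficient borne}) and on $\max_{a\in C}|a|$, so the infimum of this equi-Lipschitz family is again Lipschitz. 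Hence \eqref{Jk edsr} is a Lipschitz BSDE with bounded terminal value $\exp(-\gamma\xi)\in(0,1]$, which yields a unique solution $(Y,Z,U)$; comparison with BSDEs having constant drivers and terminal data bounding $\exp(-\gamma\xi)$ places $Y$ in $\Sc^{+,\infty}$, giving in particular $Y_t>0$ and $Y_{t^-}+U_t\ge 0$ (positivity cannot be destroyed by the jump at $\tau$).

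For the inequality $Y_t\le J(t)$, I would fix $\pi\in\Cc$ and note that, by definition of the essential infimum, $f(t,y,z,u)\le f^\pi(t,y,z,u)$ everywhere. Since $(Y,Z,U)$ and $(J^\pi,Z^\pi,U^\pi)$ solve BSDEs with the same terminal condition $\exp(-\gamma\xi)$ and ordered drivers, I would invoke the comparison theorem for BSDEs with jumps. The only nonroutine hypothesis is the structural condition on the jump variable; here each driver is in fact affine in $(y,z,u)$ and
\[
f^\pi(t,y,z,u_1)-f^\pi(t,y,z,u_2)=\lambda_t\big(e^{-\gamma\pi_t\beta_t}-1\big)(u_1-u_2),
\]
where the coefficient $e^{-\gamma\pi_t\beta_t}-1$ is bounded (compact $C$, bounded $\beta$) and strictly greater than $-1$; this is precisely the condition the comparison theorem requires, and it is inherited by the aggregated driver $f$. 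It follows that $Y_t\le J^\pi_t$ for all $t$, and taking the essential infimum over $\pi\in\Cc_t$ yields $Y_t\le J(t)$.

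For the reverse inequality together with existence of an optimiser, I would apply a measurable selection theorem: as $a\mapsto g_t(a)$ is continuous on the compact $C$ and jointly measurable through the predictable data $(Y_{t^-},Z_t,U_t)$ and the coefficients, there is a predictable process $\hat\pi$, valued in $C$, attaining the infimum $dt\otimes d\P$-a.e. Then $\hat\pi\in\Cc$ (it is bounded, so the integrability conditions hold), and by construction $f(t,Y_t,Z_t,U_t)=f^{\hat\pi}(t,Y_t,Z_t,U_t)$ a.e., so $(Y,Z,U)$ also solves the linear BSDE \eqref{edsr jkpi} associated with $\hat\pi$. By uniqueness for that BSDE, $Y=J^{\hat\pi}$, whence $J(t)\le J^{\hat\pi}_t=Y_t$. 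Combined with the previous step this gives $J(t)=Y_t$, and evaluating at $t=0$ shows that $\hat\pi$ is optimal for $J(0)$.

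Finally, for uniqueness of the optimal strategy I would exploit strict convexity in $a$: a direct computation gives
\[
\frac{\partial^2}{\partial a^2}f^{a}(t,Y_t,Z_t,U_t)=\gamma^2\sigma_t^2 Y_t+\gamma^2\lambda_t\beta_t^2 e^{-\gamma a\beta_t}(Y_t+U_t)>0,
\]
using $\sigma_t>0$, $Y_t>0$ and $Y_{t^-}+U_t\ge 0$, so $g_t$ admits a unique minimiser over $C$ ($dt\otimes d\P$-a.e.). If $\pi^\ast$ is any optimal strategy, then $\bar Y:=J^{\pi^\ast}-Y\ge 0$ solves a linear BSDE with null terminal condition, nonnegative source $\delta_t:=f^{\pi^\ast}(t,Y_t,Z_t,U_t)-f(t,Y_t,Z_t,U_t)\ge 0$, and jump coefficient $e^{-\gamma\pi^\ast_t\beta_t}-1>-1$; representing $\bar Y$ through the associated strictly positive adjoint process and using $\bar Y_0=0$ forces $\delta_t=0$ a.e., i.e. $\pi^\ast$ attains the essential infimum a.e. Strict convexity then identifies $\pi^\ast$ with $\hat\pi$. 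I expect the main obstacle to be the correct invocation of the jump-BSDE comparison theorem and the linear-representation formula, namely verifying their structural ("$>-1$") hypotheses for this specific driver and securing the predictable selection, since these are exactly the points where the discontinuity of the filtration makes the argument more delicate than in the purely Brownian setting.
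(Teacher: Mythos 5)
Your proof is correct and, in its main body, follows the same route as the paper: you aggregate the linear drivers $f^\pi$ into an infimum driver, observe it is Lipschitz because the family is equi-Lipschitz under Assumption \ref{coefficient borne} (the paper's Lemma \ref{fonction affine}), get existence and uniqueness from the Lipschitz theory with jumps (Tang--Li), obtain $Y_t\leq J^\pi_t$ from Royer's comparison theorem after checking the structural condition that the jump coefficient $e^{-\gamma\pi_t\beta_t}-1$ is bounded and stays strictly above $-1$ (exactly the paper's display (\ref{lipsauts})), and close the loop with a measurable selection of $\hat\pi$ and uniqueness for the linear BSDE (\ref{edsr jkpi}), giving $Y=J^{\hat\pi}$ and optimality. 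Where you genuinely diverge is the final bullet: the paper disposes of uniqueness in one line, invoking strict convexity of $x\mapsto\exp(-\gamma x)$, and never explicitly proves the ``characterized by attaining the essential infimum'' claim; you instead prove that claim properly, by representing $\bar Y=J^{\pi^\ast}-Y$ through a strictly positive adjoint (Dol\'eans-Dade) process --- legitimate precisely because the jump coefficient exceeds $-1$ --- and deducing from $\bar Y_0=0$ that the defect $\delta_t$ vanishes $dt\otimes d\P$-a.e. This is a real improvement in rigor; its price is that you need $Y_t>0$ and $Y_{t^-}+U_t\geq 0$ ($\lambda_t dt\otimes d\P$-a.e.) for your strict-convexity computation, and these are cleanest obtained \emph{after} the identification $Y=J^{\hat\pi}$ (conditional expectation of a strictly positive variable; the post-jump value $Y_{t^-}+U_t$ is likewise a conditional expectation of a positive quantity on $\{\tau=t\}$), rather than from comparison alone as you suggest early on. One shared caveat: your last step ``strict convexity then identifies $\pi^\ast$ with $\hat\pi$'' assumes the pointwise minimizer of a strictly convex function over $C$ is unique, which can fail when the compact $C$ is not convex (e.g. a two-point set with a symmetric driver); the paper's one-line argument suffers from exactly the same defect, since the midpoint of two optimal strategies need not be admissible, so on this point your proof is no weaker than the published one --- but if $C$ is not assumed convex, uniqueness should be read as holding only up to the possible multiplicity of pointwise minimizers.
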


\begin{proof}
Let us introduce the driver $f$ which satisfies $ds\otimes d\P-a.e.$
\begin{equation*}
f(s,y,z,u)=\essinf\limits_{\pi\in \Cc}f^\pi(s,y,z,u).
\end{equation*}
Since the driver $f$ is written as an infimum of linear drivers $f^\pi$ w.r.t $(y,z,u)$ with uniformly bounded coefficients (by assumption), $f$ is clearly Lipschitz (see Lemma \ref{fonction affine} in Appendix \ref{appendice fonction affine}). Hence, by Tang and Li's results \cite{tanli94}, BSDE (\ref{Jk edsr}) with Lipschitz driver $f$
\begin{equation*}
-\,dY_t=f(t,Y_t,Z_t,U_t)dt-Z_tdW_t-U_tdM_t~;~Y_T=\exp(-\gamma \xi)
\end{equation*}
admits a unique solution denoted by $(Y_t,Z_t,U_t)$.\\
Since, we have
\begin{equation}\label{lipsauts}
f^\pi(t,y,z,u)-f^\pi(t,y,z,u')= \lambda_t(u-u')\gamma^t,
\end{equation}
with $\gamma^t=e^{-\gamma\pi_t\beta_t}-1$ and since there exist some constants $-1<C_1\leq 0$ and $0 \leq C_2$ such that $C_1\leq \gamma^t\leq C_2$, the comparison theorem in case of jumps (see for example Theorem 2.5 in Royer \cite{roy06}) can be applied and implies that $Y_t\leq J^\pi_t$, $\forall\, t\in[0,T]$ a.s. 
As this inequality is satisfied for any $\pi\in \Cc$, it is obvious that $Y_t\leq \essinf_{\pi\in\Cc} J_t^{\pi}$ a.s. Also, by applying a measurable selection theorem, one can easily show that there exists $\hat{\pi}\in \Cc$ such that $dt  \otimes d\P$-a.s.
\begin{multline*}
 \essinf\limits_{\pi\in \Cc}\Big\{ \frac{\gamma^2}{2}\pi_t^2\sigma_t^2Y_t-\gamma\pi_t(\mu_tY_t+\sigma_tZ_t)-\lambda_t(1-e^{-\gamma\pi_t\beta_t})(Y_t+U_t)\Big\}\\
 = \frac{\gamma^2}{2}\hat{\pi}_t^2\sigma_t^2Y_t-\gamma\hat{\pi}_t(\mu_tY_t+\sigma_tZ_t)-\lambda_t(1-e^{-\gamma\hat{\pi}_t\beta_t})(Y_t+U_t).
 \end{multline*}
Thus $(Y_t,Z_t,U_t)$ is a solution of BSDE (\ref{edsr jkpi}) associated with $\hat{\pi}$. Therefore by uniqueness of the solution of BSDE (\ref{edsr jkpi}), we have $Y_t=J_t^{\hat{\pi}}$, $0\leq t \leq T$ a.s. Hence, $Y_t=\essinf_{\pi \in \mathcal{C}_t} J^{\pi}_t=J_t^{\hat{\pi}}$, $\forall \, t\in[0,T]$ a.s., and $\hat{\pi}$ is an optimal strategy. It is obvious that the optimal strategy is unique because the function $x \mapsto \exp(-\gamma x)$ is strictly convex.  
\end{proof}

\begin{Remark}{\rm The proof is short and simple thanks to the {\em verification} principle of BSDEs and optimization. Note that this {\em verification} principle is similar to the one stated in the Brownian case by El Karoui \emph{et al.} \cite{nekpenque97} but needs some particular conditions on the coefficients (see (\ref{lipsauts})) due to the presence of defaults.}
\end{Remark}
\begin{Remark}\label{Morlais}{\rm
Note that this problem has already been studied by Morlais 
\cite{mor09}. By using a verification theorem similar to that of Hu \emph{et al.} \cite{huimkmul05}, she states that the logarithm of the value function is the unique solution of a quadratic BSDE. In order to obtain this characterization, she proves the existence and the uniqueness of a solution for this quadratic BSDE with jumps by using a quite sophisticated approximation method in the spirit of Kobylanski \cite{kob00}. \\
Note that by making a change of variables, the above proposition (Proposition \ref{bc}) corresponds to Morlais's result \cite{mor09}. Indeed, put 
\begin{equation*}
\left\{\begin{aligned}
y_t=&\frac{1}{\gamma}\log(Y_t),\\
z_t=&\frac{1}{\gamma}\frac{Z_t}{Y_t}, \\
u_t=&\frac{1}{\gamma}\log\Big(1+\frac{U_t}{Y_{t^-}}\Big),
\end{aligned}\right.
\end{equation*} 
it is clear that the process $(y_t,z_t,u_t)$ is the solution of the following quadratic BSDE
\begin{equation*}
-\,dy_t=g(t,z_t,u_t)dt-z_tdW_t-u_tdM_t~;~y_T=-\xi~,
\end{equation*}
where 
\begin{equation*}
g(s,z,u)=\inf_{\pi\in\Cc}\Big(\frac{\gamma}{2}\Big|\pi_s\sigma_s-\Big(z+\frac{\mu_s+\lambda_s\beta_s}{\gamma}\Big)\Big|^2+|u-\pi_s\beta_s|_{\gamma}\Big)-(\mu_s+\lambda_s\beta_s)z-\frac{|\mu_s+\lambda_s\beta_s|^2}{2\gamma},
\end{equation*}
 which corresponds exactly to Morlais's result \cite{mor09} with $|u-\pi\beta_t|_\gamma=\lambda_t \frac{\exp(\gamma(u-\pi\beta_t))-1-\gamma(u-\pi\beta_t)}{\gamma}$.
}\end{Remark}

This characterization of the value function as the solution of a Lipschitz BSDE leads to possible numerical computations of the value function (see for example Bouchard and Elie \cite{boueli08}) and of the indifference price defined via this utility maximization problem (see Section 8).

Moreover, this property will be used to state that in the non constrained case, the value function can be approximated by a sequence of Lipschitz BSDEs (see Theorem \ref{limdeux}).

\section[The non constrained case]{The non constrained case: characterization of the value function by a BSDE}\label{section generale}
\setcounter{equation}{0} \setcounter{Assumption}{0}
\setcounter{Theorem}{0} \setcounter{Proposition}{0}
\setcounter{Corollary}{0} \setcounter{Lemma}{0}
\setcounter{Definition}{0} \setcounter{Remark}{0}

In this section, the coefficients are no longer supposed to be bounded. We now study the value function in the case where the admissible strategies are no longer required to satisfy any constraints (as in the previous section). 
Since the utility function is the exponential utility function, the set of admissible strategies is not standard in the literature. 
The next subsection studies the choice of a suitable set of admissible strategies which will allow to dynamize the problem and to characterize the associated value function (and even the dynamic value function).
\subsection{The set of admissible strategies}

Recall that in the case of the power or logarithmic utility functions defined (or restricted) on $\R^+$, the admissible strategies are the ones that make the associated wealth positive. 
Since we consider the exponential utility function $U(x)=-\exp(-\gamma x)$ which is finitely valued for all $x\in \R$, 
the wealth process is no longer required to be positive. However, it is natural to consider strategies such that the associated wealth process is uniformly bounded by below (see for example Schachermayer \cite{sch01}) or even such that any increment of the wealth is bounded by below. More precisely,
\begin{Definition}\label{portefeuille admissible}{\rm
The set of admissible trading strategies $\Ac$ consists of all $\G$-predictable processes $\pi=(\pi_t)_{0\leq t\leq T}$, which satisfy $\int_0^T|\pi_t\sigma_t|^2dt+\int_0^T\lambda_t|\pi_t\beta_t|^2dt<\infty$ a.s., and such that for any $\pi$ fixed and any $s\in[0,T]$, there exists a real constant $K_{s,\pi}$  such that $X^{\pi}_t - X^{\pi}_s \geq -K_{s,\pi}$,  $s\leq t\leq T$ a.s.
}\end{Definition}
 Recall that in their paper, Delbaen \emph{et al.} \cite{deletal02} also consider the two following sets of strategies:
 \begin{itemize}
 \item
 the set $\Theta_3$ of strategies such that the wealth process is bounded,
 \item
 the set $\Theta_2$ defined by
\begin{equation*}
\Theta_2 := \Big\{\pi\,,\, \mathbb{E}\big[ \exp\big(-\gamma (X^{\pi}_T + \xi) \big)\big] < + \infty ~ \text{and } X^\pi ~\text{is a } \Q-\text{martingale for all } \Q\in \P_f\Big\},
\end{equation*} 
where $\P_f$ is the set of absolutely continuous local martingale measures $\Q$ such that its entropy $H(\P|\Q)$ is finite.   
\end{itemize}
Note that $\Theta_3 \subset \Ac$. Of course, there is no existence result neither for the space $\Theta_3$ nor for $\Ac$ whereas there is one on the set $\Theta_2$ stated by Delbaen \emph{et al.} \cite{deletal02}. More precisely, 
by using the dual approach, under the assumption that the price process is locally bounded,
these authors show the existence of an optimal strategy on the set $\Theta_2$. \\
Also, they stress on the following important point: under the assumption that the price process is locally bounded (which is satisfied if for example $\beta$ is bounded), the value function associated with $\Theta_2$ coincides with that associated with $\Theta_3$. From this, we easily derive that these value functions also coincide with that associated with $\Ac$. More precisely,
\begin{Lemma}\label{egalite en 0} Suppose that the process $(\beta_t)$ is bounded.
The value function $V(0, \xi)$ associated with $\Ac$ defined by 
\begin{equation}\label{pb exponentielbis}
V(0, \xi)=-\inf\limits_{\pi \in \Ac} \mathbb{E}\big[\exp\big(-\gamma (X_T^{\pi}+\xi)\big)\big].
\end{equation}
is equal to the one associated with $\Theta_2$ (and also the one associated with $\Theta_3$).
\end{Lemma}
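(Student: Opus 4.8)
The plan is to prove the two inequalities $I(\Ac)\le I(\Theta_2)$ and $I(\Ac)\ge I(\Theta_2)$. Set $I(\Ac):=\inf_{\pi\in\Ac}\E[\exp(-\gamma(X^\pi_T+\xi))]$ and define $I(\Theta_2)$, $I(\Theta_3)$ analogously, so that the three value functions are respectively $-I(\Ac)$, $-I(\Theta_2)$, $-I(\Theta_3)$. We already know $I(\Theta_2)=I(\Theta_3)$ from \cite{deletal02}, so it suffices to show $I(\Ac)=I(\Theta_2)$. The inclusion $\Theta_3\subset\Ac$ noted above immediately gives $I(\Ac)\le I(\Theta_3)=I(\Theta_2)$, since enlarging the class of competitors can only decrease the infimum.

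For the reverse inequality I would fix an arbitrary $\pi\in\Ac$ and an arbitrary $\Q\in\P_f$ and bound the cost from below uniformly in $\pi$. The decisive observation is that, taking $s=0$ in Definition \ref{portefeuille admissible}, $X^\pi_t\ge -K_{0,\pi}$ for all $t$, so $X^\pi$ is bounded from below. Under any $\Q\in\P_f$ the price $S$ is a local martingale, hence $X^\pi=\int_0^{\cdot}(\pi_s/S_{s^-})\,dS_s$ is a $\Q$-local martingale; being bounded from below it is a $\Q$-supermartingale, and therefore $\E^\Q[X^\pi_T]\le X^\pi_0=0$. Since $\xi\ge 0$, this yields $\E^\Q[X^\pi_T+\xi]\le\E^\Q[\xi]$.

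Next I would invoke the Fenchel (Gibbs) inequality for the relative entropy: for every $\Q\in\P_f$,
\[
\E\big[\exp(-\gamma(X^\pi_T+\xi))\big]\ \ge\ \exp\big(-\gamma\,\E^\Q[X^\pi_T+\xi]-H(\Q|\P)\big)\ \ge\ \exp\big(-\gamma\,\E^\Q[\xi]-H(\Q|\P)\big),
\]
the last step using the supermartingale bound above (and being trivial when $\E^\Q[\xi]=+\infty$). Taking the infimum over $\pi\in\Ac$ on the left and the supremum over $\Q\in\P_f$ on the right gives
\[
I(\Ac)\ \ge\ \sup_{\Q\in\P_f}\exp\big(-\gamma\,\E^\Q[\xi]-H(\Q|\P)\big).
\]
By the entropic duality theorem of Delbaen et al. \cite{deletal02} — applicable here because boundedness of $\beta$ makes $S$ locally bounded — the right-hand side is exactly $I(\Theta_2)$. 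Hence $I(\Ac)\ge I(\Theta_2)$, and combined with the first paragraph and $I(\Theta_2)=I(\Theta_3)$ we conclude $I(\Ac)=I(\Theta_2)=I(\Theta_3)$, which is the claim.

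The step I expect to be the main obstacle is the second paragraph together with the correct invocation of strong duality: one must verify that $X^\pi$ is genuinely a $\Q$-local martingale for every $\Q\in\P_f$ (this is exactly where local boundedness of $S$ is used, so that the admissible integrand remains well behaved under the change of measure) and then upgrade the bounded-below local martingale to a supermartingale with $\E^\Q[X^\pi_T]\le 0$. The only other delicate point is that the dual value $\sup_{\Q\in\P_f}\exp(-\gamma\,\E^\Q[\xi]-H(\Q|\P))$ coincides with $I(\Theta_2)$ with no duality gap, which is precisely the content of the cited result; the remaining manipulations (signs and the entropy convention) are routine. An alternative to the entropy bound would be to show directly that $\Ac\subset\Theta_2$, but this runs into the subtlety that a bounded-below wealth is a priori only a supermartingale, not a martingale, under each $\Q\in\P_f$, so the duality route above is cleaner.
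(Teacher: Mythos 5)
Your argument is correct, but it takes a genuinely different route from the paper's. The paper stays entirely on the primal side: it imports from Delbaen \emph{et al.} only the equality of the values over $\Theta_2$ and $\Theta_3$, gets $V(0,\xi)\geq V^3(0,\xi)$ from $\Theta_3\subset\Ac$, and closes the gap by localization — for $\pi\in\Ac$ one sets $\tau_n=\inf\{t:\,|X^\pi_t|\geq n\}$ and $\pi^n=\pi\1_{[0,\tau_n]}\in\Theta_3$, and since $X^\pi$ is bounded below one has $\exp(-\gamma X^{\pi^n}_T)\leq e^{\gamma K_{0,\pi}}$, so dominated convergence yields $\E[\exp(-\gamma(X^{\pi^n}_T+\xi))]\to\E[\exp(-\gamma(X^{\pi}_T+\xi))]$ and hence the reverse inequality. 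You instead pass through the dual problem: the $\Q$-supermartingale property of the bounded-below wealth under every $\Q\in\P_f$, the Gibbs/Donsker--Varadhan inequality, and the no-duality-gap representation of the $\Theta_2$-value. In effect your second and third paragraphs re-derive weak duality ($I(\Ac)\geq$ dual value) from scratch, and the only imported fact is the absence of a gap for $\Theta_2$, which is the same theorem of Delbaen \emph{et al.} the paper cites, so the logic is sound. Two points deserve to be made explicit: (i) your claim that $X^\pi$ is a $\Q$-local martingale is not automatic — a stochastic integral against a local martingale is in general only a sigma-martingale — and it is precisely the Ansel--Stricker theorem (using boundedness from below, with local boundedness of $S$ guaranteed by the boundedness of $\beta$) that upgrades it to a local martingale, hence a supermartingale; this should be cited rather than asserted; (ii) the entropic duality in Delbaen \emph{et al.} is proved for \emph{bounded} claims, whereas here $\xi$ is only in $L^2(\Gc_T)$ — but the paper's own citation of the $\Theta_2=\Theta_3$ equality comes from the same theorem with the same caveat, so your proof is no worse off on this score. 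What the paper's approach buys is elementarity and self-containedness (stopping times plus dominated convergence, no dual objects at all); what yours buys is a quantitative entropic lower bound valid for every single $\pi\in\Ac$, which in particular shows that the dual representation of the value function extends verbatim from $\Theta_2$ to $\Ac$ — a slightly stronger conclusion than the lemma itself.
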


 \begin{proof}
By the result of Delbaen \emph{et al.} \cite{deletal02}, the value function associated with $\Theta_2$ coincides with that associated with $\Theta_3$ denoted by $V^3(0, \xi)$. Now, since $\Theta_3 \subset \Ac$, we have $V(0,\xi) \geq V^3(0,\xi)$. 
By a localization argument (such as in the proof of Lemma \ref{loca}), one can easily show the equality, which gives the desired result.
\end{proof}
Our aim is mainly to characterize and even to compute or approximate the value function $V(0, \xi)$.

Our approach consists in giving a dynamic extension of the optimization problem and in using stochastic calculus technics in order to characterize the dynamic value function. In the compact case (with the set $\Cc$), the dynamic extension was easy (see Section 3). At any initial time $t$, the corresponding set $\Cc_t$ of admissible strategies was simply given by the set of the restrictions to $[t,T]$ of the strategies of $\Cc$. In the case of $\Ac$ or $\Theta_3$, it is also very simple (see below for $\Ac$). However, in the case of the set $\Theta_2$, things are not so clear. Actually, this is partly linked to the fact that, contrary to the set $\Theta_2$, 
the set $\Ac$ is closed by {\em binding}. More precisely, we clearly have:

\begin{Lemma}
\label{recollement} The set $\Ac$ is {\em closed by \emph{binding}} that is:
if $\pi^1, \pi^2$ are two strategies of $\Ac$ and if $s\in [0,T]$, then the strategy $\pi^3$ defined by
\begin{equation*}
\pi^3_t=
\begin{cases}
\pi^1_t& \text{if } t\leq s,\\
\pi^2_t& \text{if } t> s,
\end{cases}
\end{equation*}
belongs to $\Ac$.
\end{Lemma}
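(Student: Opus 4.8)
The plan is to check directly that $\pi^3$ satisfies the three requirements in Definition \ref{portefeuille admissible}: $\G$-predictability, the integrability condition, and the wealth-increment lower bound. Predictability is immediate, since $\pi^3_t = \pi^1_t\1_{\{t\le s\}} + \pi^2_t\1_{\{t> s\}}$ is a sum of products of $\G$-predictable processes with deterministic (hence predictable) indicators. The integrability condition is equally direct: splitting each integral at $s$ gives $\int_0^T|\pi^3_t\sigma_t|^2dt \le \int_0^T|\pi^1_t\sigma_t|^2dt+\int_0^T|\pi^2_t\sigma_t|^2dt$, and likewise for the $\lambda_t|\pi_t\beta_t|^2$ term, so both are finite a.s. The only substantive point is therefore the lower bound on the wealth increments.

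The key observation is that the wealth dynamics (\ref{richesse}) are driven \emph{additively} by the strategy: the integrand against $(\mu_tdt + \sigma_tdW_t + \beta_tdN_t)$ is $\pi_t$ itself, with no dependence on the current wealth. Hence for any strategy $\pi$ and any $s\le t$ one has $X^{\pi}_t - X^{\pi}_s = \int_s^t \pi_r(\mu_r dr + \sigma_r dW_r + \beta_r dN_r)$, which depends only on the restriction of $\pi$ to $(s,t]$. Applying this to $\pi^3$, I would first record that $X^{\pi^3}_t = X^{\pi^1}_t$ for $t\le s$, and $X^{\pi^3}_t = X^{\pi^1}_s + (X^{\pi^2}_t - X^{\pi^2}_s)$ for $t> s$.

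With this decomposition the increment bound follows from a short case analysis. Fix $u\in[0,T]$ and let $u\le t\le T$. If $t\le s$ then $X^{\pi^3}_t-X^{\pi^3}_u = X^{\pi^1}_t-X^{\pi^1}_u \ge -K_{u,\pi^1}$; if $s< u$ then $X^{\pi^3}_t-X^{\pi^3}_u = X^{\pi^2}_t-X^{\pi^2}_u \ge -K_{u,\pi^2}$; and in the remaining case $u\le s< t$ one writes $X^{\pi^3}_t-X^{\pi^3}_u = (X^{\pi^1}_s-X^{\pi^1}_u) + (X^{\pi^2}_t-X^{\pi^2}_s) \ge -K_{u,\pi^1}-K_{s,\pi^2}$, using the increment bound for $\pi^1$ from $u$ up to the intermediate time $s$ and the one for $\pi^2$ from $s$ up to $t$. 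Taking $K_{u,\pi^3}:=K_{u,\pi^1}+K_{u,\pi^2}+K_{s,\pi^2}$ (which may be assumed nonnegative) bounds all cases simultaneously, yielding $X^{\pi^3}_t-X^{\pi^3}_u\ge -K_{u,\pi^3}$ for all $u\le t\le T$ a.s., as required.

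There is no genuine obstacle here; the only point requiring care is the mixed case $u\le s< t$, where I must invoke the increment bound for $\pi^1$ up to the \emph{intermediate} time $s$ rather than up to $T$. This is legitimate precisely because Definition \ref{portefeuille admissible} supplies, for each fixed starting time, a lower bound on the increment up to \emph{every} later time (in particular up to $s$). It is exactly this ``bound for every pair of times'' feature — absent from the martingale-type set $\Theta_2$ — that makes $\Ac$ stable under binding.
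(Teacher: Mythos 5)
Your proof is correct. Note that the paper offers no proof of this lemma at all --- it is asserted with ``we clearly have'' --- so your direct verification (predictability, splitting the integrability condition at $s$, and the three-case analysis of increments using the additivity $X^{\pi}_t - X^{\pi}_s = \int_s^t \pi_r(\mu_r dr + \sigma_r dW_r + \beta_r dN_r)$) supplies exactly the routine argument the authors suppress, and it is the intended one. You correctly isolate the only point needing care: in the mixed case $u\le s<t$ you invoke the increment bound for $\pi^1$ up to the \emph{intermediate} time $s$, which is available precisely because Definition \ref{portefeuille admissible} bounds increments up to every later time, not just up to $T$. It is worth contrasting this with Appendix \ref{Theta}, where the analogous closedness for $\Ac'$ genuinely requires an argument (Cauchy--Schwarz on the $p$-exponential moments): there admissibility is an integrability condition rather than a pathwise lower bound, which is why the paper proves that lemma but treats this one as immediate.
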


Also, the set $\Theta_2$ is clearly {\em not closed by binding} because of the integrability condition 
$\mathbb{E}[ \exp(-\gamma (X^{\pi}_T + \xi))] < + \infty$. One could naturally think of considering the space $\Theta^{'}_2 := \{\pi\,,\, \,\,\, X^\pi ~\text{is a } \Q-\text{martingale for all } \Q\in \P_f \}$ (instead of $\Theta_2$) but this set is not really appropriate: in particular it does not allow to obtain the dynamic programming principle since the Lebesgue theorem cannot be applied (see Remark \ref{int}).

However, there are some other possible sets which are closed by binding as for example
\begin{itemize}
 \item
 the set $\Theta_3$ of strategies such that the wealth process is bounded,
 \item
the set ${\cal A}^{'}$ defined as the set of $\G$-predictable processes $\pi=(\pi_t)_{0\leq t\leq T}$ with $\int_0^T|\pi_t\sigma_t|^2dt+\int_0^T\lambda_t|\pi_t\beta_t|^2dt<\infty$ a.s., and such that for any $t\in[0,T]$ and  for any $p>1$, the following integrability condition
\begin{equation}\label{p-integrability}
\mathbb{E}\Big[\sup_{s\in[t, T]} \exp\Big(-\gamma pX^{t,\pi}_s\Big)\Big] < \infty
\end{equation}
holds. 
\end{itemize}
Note that $ \Theta_3 \subset \Ac \subset  \Ac^{'} $. 

\begin{Remark}{\rm
Note that in general, there is no existence result for the set $\Ac^{'}$.\\ 
For the proof of the closedness by binding of the set $\Ac ^{'}$ one is referred to Appendix \ref{Theta}. Note that in this proof, we see that the integrability condition 
$\mathbb{E}[ \exp(-\gamma (X^{\pi}_T + \xi))] < + \infty$ is not sufficient to derive this closedness property by binding. It is the assumption of $p$-integrability (\ref{p-integrability}) for $p>1$ (and not only the integrability)  which allows to derive the desired property. Note that this type of $p$-exponential integrability condition 
appears in some papers related to quadratic BSDEs.
}\end{Remark}


Let us now give a {\em dynamic extension} of the initial problem associated with $\Ac$ given by (\ref{pb exponentielbis}).
 For any initial time $t\in [0,T]$, we define the value function $J(t,\xi)$ by the following random variable
\begin{equation}\label{iun}
J(t,\xi)= \essinf \limits_{\pi \in \mathcal{A}_t} \mathbb{E}\big[\exp\big(-\gamma (X_T^{t,\pi}+\xi)\big)\big| \mathcal{G}_t\big],
\end{equation}
where the set $\Ac_t$ consists of all $\G$-predictable processes $\pi=(\pi_s)_{t\leq s\leq T}$, which satisfy $\int_t^T|\pi_s\sigma_s|^2ds+\int_t^T\lambda_s|\pi_s\beta_s|^2ds<\infty$ a.s., and such that for any $\pi$ fixed and any $s \in[t,T]$ there exists a constant $K_{s,\pi}$ such that 
 $X^{s,\pi}_u \geq -K_{s,\pi}$ , $s\leq u \leq T$ a.s.\\
 Note that $J(0,\xi)= -V(0, \xi)$. Also, for any $t \in [0, T]$, $J(t,\xi)$ is also equal a.s.\, to the essinf in (\ref{iun}) but taken over $\mathcal{A}$ instead of $\mathcal{A}_t$. This clearly follows from the fact that the set $\mathcal{A}_t$ is equal to the set of the restrictions to $[t,T]$ of the strategies of $\mathcal{A}$.\\
For the sake of brevity, we shall denote $J(t)$ instead of $J(t,\xi)$. Note that the random variable $J(t)$ is defined uniquely only up to $\P$-almost sure equivalent. The process $(J(t))$ will be called the {\em dynamic value function}. This process is adapted but not necessarily c\`ad-l\`ag and not even progressive.\\
Similarly, a dynamic extension of the value function associated with $\Ac'$ (or also $\Theta_3$) can be easily given. Under the assumption that the price process is locally bounded (which is satisfied if for example $\beta$ is bounded), the corresponding value functions can be easily shown to coincide a.s. More precisely, 
\begin{Lemma}\label{loca} Suppose that the coefficient $(\beta_t)$ is bounded.
The dynamic value function $(J(t))$ associated with $\Ac$ coincides a.s.\, with the one associated with $\Ac'$ (or also $\Theta_3$).
\end{Lemma}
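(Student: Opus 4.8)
Since $\Theta_3\subset\Ac\subset\Ac'$ (as noted just before the statement) and the dynamic value function is an essential infimum, which decreases as the set of competing strategies grows, the three dynamic value functions satisfy, for every $t\in[0,T]$,
\[
J^{\Ac'}(t)\;\le\; J(t)\;\le\; J^{\Theta_3}(t)\qquad\text{a.s.}
\]
Hence the statement reduces to the single reverse inequality $J^{\Theta_3}(t)\le J^{\Ac'}(t)$ a.s. By the very definition of the essential infimum, it suffices to show that for an arbitrary fixed $\pi\in\Ac'_t$ one has $J^{\Theta_3}(t)\le \E[\exp(-\gamma(X^{t,\pi}_T+\xi))\mid\Gc_t]$ a.s.; taking the essential infimum over $\pi\in\Ac'_t$ then gives the claim. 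The plan is therefore to approximate an arbitrary $\pi\in\Ac'_t$ by a sequence of strategies in $\Theta_{3,t}$.

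I would localize $\pi$ by stopping. Since the wealth process $X^{t,\pi}$ has c\`ad-l\`ag paths, it is path-bounded on the compact interval $[t,T]$, so the exit times $\nu_n:=\inf\{s\ge t:|X^{t,\pi}_s|\ge n\}\wedge T$ increase to $T$ a.s. Setting $\pi^n:=\pi\,\1_{]\!]t,\nu_n]\!]}$, the stopped strategy freezes the wealth after $\nu_n$, and one checks that $X^{t,\pi^n}=X^{t,\pi}_{\cdot\wedge\nu_n}$ is a bounded process, so that $\pi^n\in\Theta_{3,t}$. Because $\nu_n\uparrow T$, we have $X^{t,\pi^n}_T\to X^{t,\pi}_T$ a.s. By the defining property $J^{\Theta_3}(t)\le\E[\exp(-\gamma(X^{t,\pi^n}_T+\xi))\mid\Gc_t]$ for each $n$, and it remains to pass to the limit on the right-hand side. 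This is exactly where the $p$-integrability built into $\Ac'$ is used: since $\xi\ge0$ gives $\exp(-\gamma\xi)\le1$, the random variables $\exp(-\gamma(X^{t,\pi^n}_T+\xi))$ are all dominated by $\sup_{s\in[t,T]}\exp(-\gamma X^{t,\pi}_s)$, which belongs to $L^1$ by condition (\ref{p-integrability}) (even with $p>1$). The conditional dominated convergence theorem then yields $\E[\exp(-\gamma(X^{t,\pi^n}_T+\xi))\mid\Gc_t]\to\E[\exp(-\gamma(X^{t,\pi}_T+\xi))\mid\Gc_t]$ a.s., and the desired inequality follows.

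The main delicate point is the verification that $X^{t,\pi^n}$ is genuinely bounded, i.e.\ that $\pi^n\in\Theta_3$ and not merely in $\Ac$. The subtlety is the single jump of the wealth at the default time $\tau$, of size $\pi_\tau\beta_\tau$: although $\beta$ is bounded by assumption, $\pi_\tau$ need not be, so the jump can carry the stopped wealth to a value that is only finite a.s.\ rather than bounded by a deterministic constant. I would address this using the fact that $N$ admits \emph{at most one} jump (at $\tau$), together with an additional localization of the predictable integrand: either stop trading strictly before the début $\rho_n:=\inf\{s\ge t:|\pi_s\beta_s|\ge n\}$, or replace $\pi^n$ by $\pi\,\1_{\{|\pi\beta|\le n\}}\,\1_{]\!]t,\nu_n]\!]}$, so that every jump of $X^{t,\pi^n}$ is bounded by $n$ while the pre-jump wealth is bounded by $\nu_n$. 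Since $\int_t^T|\pi_s\sigma_s|^2ds+\int_t^T\lambda_s|\pi_s\beta_s|^2ds<\infty$ a.s., the truncation indicator converges to $1$ for Lebesgue–$ds$ and $\lambda_s\,ds$ almost every $s$, and dominated convergence for the stochastic integrals gives $X^{t,\pi^n}_T\to X^{t,\pi}_T$ in probability, hence a.s.\ along a subsequence; combined with the domination above this is enough to run the limiting argument via conditional Fatou/dominated convergence. The same localization, with the conditioning removed, is precisely what underlies the equality at $t=0$ invoked in the proof of Lemma \ref{egalite en 0}.
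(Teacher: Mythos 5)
Your main line is exactly the paper's own proof: one inequality for free from the inclusions $\Theta_3\subset\Ac\subset\Ac'$, the reverse one by stopping at $\nu_n=\inf\{s\ge t:\,|X^{t,\pi}_s|\ge n\}$, observing $X^{t,\pi^n}_T=X^{t,\pi}_{T\wedge\nu_n}\to X^{t,\pi}_T$ a.s., and closing with conditional dominated convergence, the dominating variable being $\sup_{s\in[t,T]}\exp(-\gamma X^{t,\pi}_s)$, integrable by the definition of $\Ac'$. Your only organizational difference is that you prove the single inequality $J^{\Theta_3}(t)\le J^{\Ac'}(t)$, which settles all three classes at once, whereas the paper proves $J(t)\le J'(t)$ for the pair $(\Ac,\Ac')$ and notes the argument is identical for $\Theta_3$; the cost of your version is that the approximating strategies must land in the smallest class $\Theta_3$ (two-sided bounds on wealth) rather than in $\Ac_t$ (only a lower bound on increments).

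Where you go beyond the paper is the jump overshoot, and here you have identified a real subtlety: the paper simply asserts ``it is clear that $\pi^n\in\Ac_t$,'' while at $\nu_n$ the wealth can jump by $\pi_\tau\beta_\tau$, which is a.s. finite but not deterministically bounded, so the stopped wealth is not bounded (nor even bounded below) by a constant. However, your proposed repair $\pi^n=\pi\,\1_{\{|\pi\beta|\le n\}}\,\1_{]\!]t,\nu_n]\!]}$ quietly destroys the two facts the limiting step rests on. First, once the integrand is truncated, $X^{t,\pi^n}$ is no longer the stopped process $X^{t,\pi}_{\cdot\wedge\nu_n}$: its drift and Brownian parts are computed from a different integrand on all of $]\!]t,\nu_n]\!]$, so the domination $\exp(-\gamma X^{t,\pi^n}_T)\le\sup_{s\in[t,T]}\exp(-\gamma X^{t,\pi}_s)$ that you invoke is no longer available, and the bound $|X^{t,\pi^n}|\le n$ before $\nu_n$ also fails, so $\nu_n$ would have to be re-defined from the truncated wealth even to get $\pi^n\in\Theta_3$. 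Second, without a uniform integrable envelope the limit cannot be closed: you need $\limsup_n\E\big[\exp\big(-\gamma(X^{t,\pi^n}_T+\xi)\big)\,\big|\,\Gc_t\big]\le\E\big[\exp\big(-\gamma(X^{t,\pi}_T+\xi)\big)\,\big|\,\Gc_t\big]$, i.e.\ reverse Fatou, which requires precisely the domination you have lost — the paper's Remark \ref{int} insists that plain Fatou gives the inequality in the wrong direction, and the $p$-integrability (\ref{p-integrability}) of $\pi$ does not transfer to the truncated strategies. (Your alternative, stopping strictly before the d\'ebut $\rho_n$, amounts to the same truncation $\1_{\{|\pi\beta|<n\}}$, since $[\![t,\rho_n[\![$ need not be predictable when $\rho_n$ is not a predictable time.) So as written your patch trades the paper's small unproved assertion for a step with a genuine gap: either keep pure stopping, so the envelope argument survives, and address membership in $\Ac_t$ separately, or construct a new integrable envelope for the truncated wealths.
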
   

\begin{proof} We give here the proof for $\Ac'$ (it is the same for $\Theta_3$).
Fix $t\in[0,T]$.
Put $J^{'}(t) := \essinf_{\pi \in \mathcal{A}^{'}_t} \E[\exp(-\gamma(X^{t,\pi}_T + \xi)) | \Gc_t]$, 
where $\mathcal{A}^{'}_t$ is the set defined similarly as $\mathcal{A}^{'}$ but for initial time $t$. Note that $\mathcal{A}^{'}_t$ can be seen as the set of the restrictions to $[t,T]$ of the strategies of $\mathcal{A}^{'}$.
Since $\Ac_t \subset \mathcal{A}^{'}_t$, we get $J^{'}(t) \leq J(t)$. To prove the other inequality, we state that for any $\pi\in \mathcal{A}^{'}_t$, there exists a sequence $(\pi^n)_{n\in \N}$ of $\Ac_t$ such that $\pi^n\rightarrow \pi,~dt\otimes d\P$ a.s. Let us define $\pi^n$ by
\begin{equation*}
\pi^n_s=\pi_s\mathds{1}_{s \leq \tau_n}, ~\forall ~ s\in [t,T],
\end{equation*}
where $\tau_n$ is the stopping time defined by
$\tau_n=\inf\{s\geq t,  |X_s^{t,\pi}| \geq n\}$.\\
It is clear that for each $n\in \N$, $\pi^n\in \Ac_t$. Thus, $\exp(-\gamma X_T^{t,\pi^n})=\exp(-\gamma X_{T\wedge \tau_n}^{t,\pi})\stackrel{a.s.}{\longrightarrow} \exp(-\gamma X_T^{t,\pi})$ as $n \rightarrow +\infty$. By definition of $\mathcal{A}^{'}_t$, 
$\mathbb{E}[\sup_{s\in[t, T]} \exp(-\gamma X^{t,\pi}_s)] < \infty$. Hence, by the Lebesgue Theorem, $\E[\exp(-\gamma (X_T^{t,\pi^n}+\xi))|\Gc_t]\rightarrow \E[\exp(-\gamma (X_T^{t,\pi}+\xi))|\Gc_t]$ a.s. as $n \rightarrow +\infty$. Therefore, we have
$J(t) \leq J^{'}(t)$ a.s. which ends the proof.
\end{proof}

Hence, concerning the dynamic study of the value function, if $(\beta_t)$ is supposed to be bounded, it is equivalent to choose $\Ac$, $\Ac'$ or $\Theta_3$ as set of admissible strategies. We have chosen the set $\Ac$ because it appears as a natural set of admissible strategies from a financial point of view.

After this dynamic extension of the value function, we will use stochastic calculus technics in order to characterize the value function via a BSDE. However, it is no longer possible to use a verification theorem like the one in Section \ref{compact} because the associated BSDE is no longer Lipschitz and there is no existence result for it. One could think to use a verification theorem like that of Hu \emph{et al.} \cite{huimkmul05}. But because of the presence of jumps, it is no longer possible since again there is no existence and uniqueness results for the associated BSDE as noted by Morlais \cite{mor09}. In her paper, Morlais proves the existence of a solution of this BSDE by using an approximation method but she does not obtain uniqueness result, even in the case of bounded coefficients. Hence, this does not a priori lead to a characterization of the value function via a BSDE.

Therefore, as it seems not possible to derive a {\em sufficient condition} so that a given process corresponds to the dynamic value function, we will now directly study some properties of the dynamic value function $(J(t))$ (in other words some {\em necessary} conditions satisfied by $(J(t))$). Then, by using dynamic programming technics of stochastic control, we will derive a characterization of the value function via a BSDE. This is the object of the next section.

\subsection{Characterization of the dynamic value function as the maximal subsolution of a BSDE}
 The dynamic programming principle holds for the set ${\cal A}$:
\begin{Proposition}\label{J(t) sous martingale}
The process $(\exp(-\gamma X_t^\pi) J(t))_{0\leq t \leq T}$ is a submartingale for any $\pi \in \mathcal{A}$.
\end{Proposition}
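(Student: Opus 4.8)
The plan is to establish the dynamic programming inequality directly from the definition of $J$ as an essential infimum, exploiting the additive structure of the wealth process together with the stability of $\Ac$ under binding (Lemma \ref{recollement}).

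\textbf{Reduction to a single conditional inequality.} Fix $\pi\in\Ac$ and $0\le s\le t\le T$. Since the wealth dynamics (\ref{richesse}) contain no feedback from $X$ itself, the increment splits additively: $X^\pi_t-X^\pi_s=X^{s,\pi}_t$ and, more generally, $X^{s,\rho}_T=X^{s,\pi}_t+X^{t,\pi'}_T$ whenever $\rho$ equals $\pi$ on $[s,t]$ and equals $\pi'$ on $(t,T]$. Because $\exp(-\gamma X^\pi_s)$ is $\Gc_s$-measurable and positive, and $\exp(-\gamma X^\pi_t)=\exp(-\gamma X^\pi_s)\exp(-\gamma X^{s,\pi}_t)$, the submartingale property $\E[\exp(-\gamma X^\pi_t)J(t)\mid\Gc_s]\ge\exp(-\gamma X^\pi_s)J(s)$ is, after dividing by $\exp(-\gamma X^\pi_s)$, equivalent to
\[
\E\big[\exp(-\gamma X^{s,\pi}_t)\,J(t)\,\big|\,\Gc_s\big]\ \ge\ J(s). \tag{$\star$}
\]
Thus it suffices to prove $(\star)$.

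\textbf{A one-strategy version of $(\star)$.} Fix an arbitrary $\pi'\in\Ac_t$ and let $\rho$ be $\pi$ on $[s,t]$ and $\pi'$ on $(t,T]$. By closedness of $\Ac$ under binding (Lemma \ref{recollement}, applied on $[s,T]$), $\rho\in\Ac_s$. Using the additive splitting and the tower property,
\[
\E\big[\exp(-\gamma(X^{s,\rho}_T+\xi))\mid\Gc_s\big]=\E\Big[\exp(-\gamma X^{s,\pi}_t)\,\E\big[\exp(-\gamma(X^{t,\pi'}_T+\xi))\mid\Gc_t\big]\,\Big|\,\Gc_s\Big].
\]
Since $\rho\in\Ac_s$, the left-hand side dominates $J(s)$, so for \emph{every} $\pi'\in\Ac_t$,
\[
J(s)\ \le\ \E\Big[\exp(-\gamma X^{s,\pi}_t)\,\E\big[\exp(-\gamma(X^{t,\pi'}_T+\xi))\mid\Gc_t\big]\,\Big|\,\Gc_s\Big].
\]

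\textbf{Passing to the essential infimum.} The family $\{\E[\exp(-\gamma(X^{t,\pi'}_T+\xi))\mid\Gc_t]:\pi'\in\Ac_t\}$ is downward directed: given $\pi'_1,\pi'_2\in\Ac_t$, set $A:=\{\E[\exp(-\gamma(X^{t,\pi'_1}_T+\xi))\mid\Gc_t]\le\E[\exp(-\gamma(X^{t,\pi'_2}_T+\xi))\mid\Gc_t]\}\in\Gc_t$; the strategy $\pi'_1\1_A+\pi'_2\1_{A^c}$ again lies in $\Ac_t$, since its wealth is the $\Gc_t$-measurable mixture $\1_A X^{t,\pi'_1}+\1_{A^c}X^{t,\pi'_2}$ and hence inherits the lower bound, and it realizes the pointwise minimum of the two conditional expectations. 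Therefore there is a sequence $(\pi'_n)\subset\Ac_t$ with $\E[\exp(-\gamma(X^{t,\pi'_n}_T+\xi))\mid\Gc_t]\downarrow J(t)$ a.s. Inserting $\pi'_n$ above and letting $n\to\infty$, the integrands $\exp(-\gamma X^{s,\pi}_t)\,\E[\exp(-\gamma(X^{t,\pi'_n}_T+\xi))\mid\Gc_t]$ decrease to $\exp(-\gamma X^{s,\pi}_t)J(t)$, and the conditional monotone convergence theorem yields $(\star)$.

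\textbf{Main obstacle.} The delicate point is exactly this last interchange of the essential infimum with the conditional expectation. It is not automatic and rests on two facts that have to be checked: the stability of $\Ac_t$ under $\Gc_t$-measurable binding (which provides downward directedness, hence a decreasing minimizing sequence), and enough integrability of the first term of that decreasing sequence to legitimately apply conditional monotone convergence. The latter is where the admissibility of $\Ac$ (ensuring the relevant expectations are finite) is used, and it is also why a set such as $\Theta_2$, which is \emph{not} closed under binding, would be unsuitable here (cf. Remark \ref{int}). Everything else reduces to the additive wealth decomposition and the tower property.
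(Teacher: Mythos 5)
Your proof is correct and follows essentially the same route as the paper's: both reduce to the conditional inequality $\E[\exp(-\gamma X^{s,\pi}_t)J(t)\,|\,\Gc_s]\geq J(s)$, use closedness of $\Ac$ under binding (Lemma \ref{recollement}) together with stability by pairwise minimization (the paper's Lemma \ref{stable}) to produce a decreasing minimizing sequence, and justify the interchange of the essential infimum with the conditional expectation via the integrability supplied by admissibility (the paper invokes dominated convergence with the bounds $J^{\pi^n}_t\leq 1$ and $\exp(-\gamma X^{s,\pi}_t)\leq e^{\gamma K_{s,\pi}}$, which is the same device as your conditional monotone convergence with integrable first term). You correctly identify, as the paper does in Remark \ref{int}, that this integrability is the crux and that Fatou's lemma would give the inequality in the wrong direction.
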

To prove this proposition, we use the random variable $J^\pi_t$ which is defined by
\begin{equation*}
J^\pi_t=\mathbb{E}\big[\exp\big(-\gamma (X_T^{t,\pi}+\xi)\big) \big| \mathcal{G}_t\big].
\end{equation*}
As usual, in order to prove the dynamic programming principle, we first state the following lemma:

\begin{Lemma}
 The set $\{J^\pi_t,~\pi\in\mathcal{A}_t\}$ is {\em stable by pairwise minimization} for any $t\in [0,T]$. That is, for every  $\pi^1,~\pi^2  \in  \mathcal{A}_t$ there exists $\pi \in \mathcal{A}_t$ such that $ J^\pi_t=J^{\pi^1}_t \wedge J^{\pi^2}_t$.\\
Also, there exists a sequence $(\pi^n)_{n \in \mathbb{N}} \in \mathcal{A}_t$ for any $t\in [0,T]$, such that
\begin{equation*}
J(t)=\lim\limits_{n \rightarrow \infty} \downarrow J^{\pi^n}_t~a.s.
\end{equation*}
\label{stable}
\end{Lemma}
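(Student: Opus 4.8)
The plan is to establish the two claims separately: stability by pairwise minimization is the substantive point, and the existence of the minimizing sequence is then a standard consequence. For the pairwise minimization, I would fix $t\in[0,T]$ and $\pi^1,\pi^2\in\mathcal{A}_t$, introduce the $\mathcal{G}_t$-measurable event $A:=\{J^{\pi^1}_t\le J^{\pi^2}_t\}$, and splice the two strategies \emph{spatially} along $A$ by setting $\pi_s:=\pi^1_s\mathds{1}_A+\pi^2_s\mathds{1}_{A^c}$ for $s\in[t,T]$. Since $A\in\mathcal{G}_t\subset\mathcal{G}_s$ for every $s\ge t$, the process $\pi$ is $\mathbb{G}$-predictable on $[t,T]$, and on $A$ (resp. $A^c$) it coincides with $\pi^1$ (resp. $\pi^2$). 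This is the exact analogue, for a measurable splitting at a fixed date, of the closedness-by-binding property of $\mathcal{A}$ (Lemma \ref{recollement}).

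The next step is to check that $\pi\in\mathcal{A}_t$ and to compute $J^\pi_t$. Integrability holds because $\int_t^T(|\pi_s\sigma_s|^2+\lambda_s|\pi_s\beta_s|^2)\,ds$ equals the corresponding integral for $\pi^1$ on $A$ and for $\pi^2$ on $A^c$. Since $A$ is $\mathcal{G}_s$-measurable for $s\ge t$, the wealth splits as $X^{s,\pi}_u=\mathds{1}_A X^{s,\pi^1}_u+\mathds{1}_{A^c}X^{s,\pi^2}_u$, so the uniform lower bound $X^{s,\pi}_u\ge -(K_{s,\pi^1}\vee K_{s,\pi^2})$ holds a.s. for $s\le u\le T$, giving $\pi\in\mathcal{A}_t$. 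For the value, I would multiply by $\mathds{1}_A$, use $X^{t,\pi}_T=X^{t,\pi^1}_T$ on $A$, and pull $\mathds{1}_A$ inside the $\mathcal{G}_t$-conditional expectation to obtain $\mathds{1}_A J^\pi_t=\mathds{1}_A J^{\pi^1}_t$, and symmetrically $\mathds{1}_{A^c}J^\pi_t=\mathds{1}_{A^c}J^{\pi^2}_t$. Summing and invoking the definition of $A$ then yields $J^\pi_t=J^{\pi^1}_t\wedge J^{\pi^2}_t$, which is precisely stability by pairwise minimization.

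For the second assertion, the family $\{J^\pi_t:\pi\in\mathcal{A}_t\}$ is downward directed by what precedes, so by the classical characterization of the essential infimum of a downward-directed family there exists a sequence $(\pi^n)_{n\in\mathbb{N}}$ in $\mathcal{A}_t$ with $J^{\pi^n}_t\downarrow \essinf_{\pi\in\mathcal{A}_t}J^\pi_t=J(t)$ a.s.; one may arrange the convergence to be monotone by replacing the $n$-th term with the pairwise minimum of the first $n$ terms, which again lies in the family by the stability just established. I expect the only genuine difficulty to be the admissibility verification — specifically the uniform-lower-bound requirement on the wealth increments under the spliced strategy — since this is exactly where the structural (closedness) properties of $\mathcal{A}$, rather than mere integrability, are needed; the predictability and the identity $J^\pi_t=J^{\pi^1}_t\wedge J^{\pi^2}_t$ follow routinely from $A\in\mathcal{G}_t$.
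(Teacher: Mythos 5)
Your proposal is correct and follows essentially the same route as the paper: the paper also sets $E=\{J^{\pi^1}_t\leq J^{\pi^2}_t\}\in\Gc_t$, splices $\pi_s=\pi^1_s\1_E+\pi^2_s\1_{E^c}$, notes $\pi\in\Ac_t$ and $J^\pi_t=J^{\pi^1}_t\wedge J^{\pi^2}_t$, and then invokes the classical essential-infimum result for downward-directed families (Neveu, recalled in the paper's Appendix A). Your write-up is in fact more careful than the paper's on the admissibility check, where your decomposition $X^{s,\pi}_u=\1_A X^{s,\pi^1}_u+\1_{A^c}X^{s,\pi^2}_u$ with the bound $-(K_{s,\pi^1}\vee K_{s,\pi^2})$ is cleaner than the paper's one-line justification.
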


\begin{proof}
Fix $t\in [0,T]$. Let us introduce the set $E=\{J^{\pi^1}_t  \leq J^{\pi^2}_t\}$ which belongs to $\Gc_t$. Let us define $\pi$ for any $s \in [t,T]$ by 
$\pi_s=\pi_s^1\mathds{1}_{E}+\pi_s^2\mathds{1}_{E^c}.~$ 
It is obvious that $\pi\in \Ac_t$, since the sum of two random variables bounded by below is bounded by below. 
By construction of $\pi$, it is clear that $J^\pi_t=J^{\pi^1}_t \wedge J^{\pi^2}_t$.\\
The second part of lemma follows by classical results on the essential infimum (see Appendix \ref{Neveu}).
\end{proof}

Let us now give the proof of Proposition \ref{J(t) sous martingale}. 
\begin{proof}
Let us show that for $t\geq s$, 
\begin{equation*}
 \mathbb{E}\big[\exp\big(-\gamma (X_t^\pi-X_s^\pi)\big) J(t) \big|\mathcal{G}_s\big] \geq J(s)~a.s. 
 \end{equation*}
 Note that $X_t^\pi-X_s^\pi= X^{s,\pi}_t$. By Lemma \ref{stable}, there exists a sequence $(\pi_n)_{n\in\N}\in\Ac_t$ such that $J(t)=\lim\limits_{n \rightarrow \infty} \downarrow J^{\pi^n}_t$ a.s. \\
Without loss of generality, we can suppose that $\pi^0 = 0$. For each $n\in \N$, we have $J^{\pi^n}_t\leq J^{\pi^0}_t \leq 1$ a.s. Moreover, the integrability property $\E[\exp(-\gamma X^{s,\pi}_t)] < \infty$ holds because $\pi \in \Ac$. This with the Lebesgue theorem give 
\begin{equation}\label{beta}
\mathbb{E}\big[ \lim\limits_{n \rightarrow \infty}  \exp (-\gamma X^{s,\pi}_t )J^{\pi^n}_t \big|\mathcal{G}_s\big] 
=\lim\limits_{n \rightarrow \infty}  \mathbb{E}\big[  \exp (-\gamma X^{s,\pi}_t )J^{\pi^n}_t \big|\mathcal{G}_s\big].
\end{equation}
 Recall that $X^{s,\pi}_t = \int_s^t\frac{\pi_u}{S_{u^-}}dS_u$. Now, we have a.s.
\begin{equation}\label{alpha}
\exp\big(-\gamma\int_s^t\frac{\pi_u}{S_{u^-}}dS_u\big)J^{\pi^n}_t=\E\Big[\exp\Big(-\gamma (\int_s^T\frac{\tilde{\pi}^n_u}{S_{u^-}}dS_u+\xi)\Big)\Big|\Gc_t\Big],
\end{equation}
where the strategy $\tilde{\pi}^n$ is defined by
\begin{equation*}
\tilde{\pi}_u^n=
\begin{cases}
 \pi_u & \text{if }  0 \leq u \leq t,  \\
 \pi_u^n & \text{if }  t< u \leq T.
\end{cases}
\end{equation*} 
Note that by the closedness property by binding (see Lemma \ref{recollement}), $\tilde{\pi}^n\in\Ac$ for each $n\in \N$. 
\noindent By (\ref{beta}) and (\ref{alpha}), we have a.s.
\begin{eqnarray*}
\mathbb{E}\Big[ \exp\Big(-\gamma\int_s^t\frac{\pi_u}{S_{u^-}}dS_u \Big) J(t)\Big|\mathcal{G}_s\Big] 
&=& \lim\limits_{n \rightarrow \infty} \E\Big[\exp\Big(-\gamma\Big(\int_s^T\frac{\tilde{\pi}^n_u}{S_{u^-}}dS_u+\xi\Big)\Big)\Big|\Gc_s\Big]\\
&=&\lim\limits_{n \rightarrow \infty} J^{\tilde{\pi}^n}_s \,\,\,\,\,\, \geq \,\,\,\,\,\, J(s),
\end{eqnarray*}
because by definition of $J(s)$, we have $J^{\tilde{\pi}^n}_s \geq J(s)$ a.s., for each $n \in\N $. 
Hence, the process $(\exp(-\gamma X_t^\pi) J(t))$ is a submartingale for any $\pi \in \mathcal{A}$.
\end{proof}

\begin{Remark}\label{int}{\rm
Note that the integrability property $\E[\exp(-\gamma X^{s,\pi}_t)] < \infty$ is essential in the proof of this property. Indeed, if it is not satisfied, equality 
(\ref{beta}) does not hold since the Lebesgue theorem cannot be applied. One could argue that the monotone convergence theorem could be used but since the limit is decreasing, it cannot be applied without an integrability condition. 
Moreover, Fatou's lemma is not relevant since it gives an inequality but not in the suitable sense.
Actually, the importance of the integrability condition is due to the fact that we study an {\em essential infimum} of positive random variables. In the case of an {\em essential supremum} of positive random variables, the dynamic programming principle holds without any integrability condition (see for example the case of the power utility function in Lim and Quenez \cite{limque10}).
}\end{Remark}

Also, the value function can easily be characterized as follows:
\begin{Proposition} \label{le plus grand}
The process $(J(t))$ is the largest $\G$-adapted process such that $(e^{-\gamma X_t^\pi} J(t))$ is a submartingale for any admissible strategy $\pi\in \Ac$ with $J(T)=\exp(-\gamma \xi)$. More precisely, if $(\hat{J}_t)$ is a $\G$-adapted process such that $(\exp(-\gamma X_t^\pi) \hat{J}_t)$ is a submartingale for any $\pi \in \mathcal{A}$ with $\hat{J}_T=\exp(-\gamma \xi)$, then we have $J(t) \geq \hat{J}_t$ a.s., for any $t \in [0,T]$.
\end{Proposition}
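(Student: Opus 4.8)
The plan is to verify separately that $(J(t))$ belongs to the announced class and that it dominates every other member of it. The first point is essentially already available: Proposition \ref{J(t) sous martingale} asserts that $(\exp(-\gamma X_t^\pi)J(t))$ is a submartingale for every $\pi\in\Ac$, and the terminal condition $J(T)=\exp(-\gamma\xi)$ follows directly from the definition (\ref{iun}), since $X_T^{T,\pi}=0$ forces $J(T)=\E[\exp(-\gamma\xi)\,|\,\Gc_T]=\exp(-\gamma\xi)$ by $\Gc_T$-measurability of $\xi$. Thus $(J(t))$ is indeed a $\G$-adapted process with the required submartingale and terminal properties.

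For the maximality, I would fix $t\in[0,T]$ and an arbitrary admissible $\pi\in\Ac_t$, and exploit the submartingale property of the competitor $(\hat J_t)$ between the dates $t$ and $T$. Since $\Ac_t$ is exactly the set of restrictions to $[t,T]$ of strategies of $\Ac$, the process $\pi$ extends to some element of $\Ac$ to which the hypothesis applies; writing $X_T^\pi=X_t^\pi+X_T^{t,\pi}$ and using $\hat J_T=\exp(-\gamma\xi)$, the submartingale inequality reads
\begin{equation*}
\exp(-\gamma X_t^\pi)\,\hat J_t \leq \E\big[\exp\big(-\gamma(X_t^\pi+X_T^{t,\pi}+\xi)\big)\,\big|\,\Gc_t\big]\quad a.s.
\end{equation*}
Then I would factor the $\Gc_t$-measurable and strictly positive term $\exp(-\gamma X_t^\pi)$ out of the conditional expectation and divide by it, which yields $\hat J_t \leq \E[\exp(-\gamma(X_T^{t,\pi}+\xi))\,|\,\Gc_t]=J_t^\pi$ a.s. As this holds for every $\pi\in\Ac_t$, taking the essential infimum over $\pi$ gives $\hat J_t \leq \essinf_{\pi\in\Ac_t}J_t^\pi=J(t)$ a.s., which is the desired maximality.

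The computation itself is routine; the only points requiring care are bookkeeping rather than analysis. First, one must align the submartingale hypothesis, stated over full strategies $\pi\in\Ac$ on $[0,T]$, with the essinf defining $J(t)$, which runs over $\Ac_t$; this is precisely where the identification of $\Ac_t$ with the restrictions of $\Ac$ enters, together with the decomposition $X_T^\pi-X_t^\pi=X_T^{t,\pi}$. Second, pulling $\exp(-\gamma X_t^\pi)$ out of the conditional expectation rests on its $\Gc_t$-measurability, and dividing is legitimate because it is strictly positive; the integrability needed for these manipulations is guaranteed by the submartingale assumption on $(\exp(-\gamma X_t^\pi)\hat J_t)$. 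In contrast with the proof of Proposition \ref{J(t) sous martingale}, no extra integrability input is required here, since one works with a single fixed $\pi$ and takes the essinf only at the very end.
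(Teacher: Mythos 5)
Your proposal is correct and follows essentially the same route as the paper: the membership of $(J(t))$ in the class is exactly Proposition \ref{J(t) sous martingale} plus the immediate terminal identity, and the maximality is obtained, as in the paper, by applying the submartingale inequality for $(\exp(-\gamma X_t^\pi)\hat J_t)$ between $t$ and $T$, factoring out the strictly positive $\Gc_t$-measurable term $\exp(-\gamma X_t^\pi)$, and taking the essential infimum over $\pi\in\Ac_t$ via the identification of $\Ac_t$ with restrictions of $\Ac$. Your added remarks on measurability, positivity and the absence of any extra integrability requirement merely make explicit what the paper's terser proof leaves implicit.
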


\begin{proof}
Fix $t \in [0,T]$. For any $\pi\in \Ac$, 
$\mathbb{E}[\exp(-\gamma X_T^\pi) \hat{J}_T|\mathcal{G}_t] \geq \exp(-\gamma X_t^\pi) \hat{J}_t$ a.s.
This implies 
\begin{equation*}
\essinf\limits_{\pi \in \mathcal{A}_t}\mathbb{E}\big[\exp\big(-\gamma (X^{t,\pi}_T+\xi)\big)\big|\mathcal{G}_t\big] \geq \hat{J}_t~a.s.,
\end{equation*}
which gives clearly that $J(t) \geq \hat{J}_t$ a.s.
\end{proof}

With this property, it is possible to show that there exists a c\`{a}d-l\`{a}g version of the value function $(J(t))$. More precisely, we have:
\begin{Proposition}
There exists a $\G$-adapted c\`{a}d-l\`{a}g process $(J_t)$ such that for any $t \in [0,T]$,
\begin{equation*}
J_t= J(t)~a.s.
\end{equation*}
\label{cadlag}
\end{Proposition}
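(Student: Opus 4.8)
The plan is to exploit the submartingale structure already established in Proposition \ref{J(t) sous martingale} and to reduce the statement to the classical theorem on the existence of a c\`ad-l\`ag modification of a submartingale. First I would observe that $(J(t))$ is itself a bounded submartingale: taking the null strategy $\pi=0\in\Ac$ in Proposition \ref{J(t) sous martingale} and using $X^0\equiv0$, the process $(e^{-\gamma X^0_t}J(t))=(J(t))$ is a submartingale. Moreover $J(T)=\exp(-\gamma\xi)\leq1$ since $\xi\geq0$, and for every $t$ the null strategy gives $J(t)\leq\E[\exp(-\gamma\xi)|\Gc_t]\leq1$, so that $0\leq J(t)\leq1$; in particular $(J(t))$ is integrable.

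Since the filtration $\G$ is complete and right-continuous, the classical modification theorem for submartingales (Dellacherie and Meyer) guarantees that $(J(t))$ admits a c\`ad-l\`ag modification if and only if the deterministic map $t\mapsto\E[J(t)]$ is right-continuous. As $(J(t))$ is a submartingale this map is nondecreasing, so the whole argument reduces to proving that for every $t$ and every sequence $t_n\downarrow t$ one has $\E[J(t_n)]\to\E[J(t)]$; only the inequality $\limsup_n\E[J(t_n)]\leq\E[J(t)]$ requires work.

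I would establish this right-continuity via near-optimal strategies. Arguing as in Lemma \ref{stable} together with dominated convergence (the approximating sequence being dominated by $J^0_t\leq1$), one first gets $\E[J(t)]=\inf_{\pi\in\Ac_t}\E[\exp(-\gamma(X^{t,\pi}_T+\xi))]$. Fix $\eps>0$ and pick $\pi^\eps\in\Ac_t$ with $\E[\exp(-\gamma(X^{t,\pi^\eps}_T+\xi))]\leq\E[J(t)]+\eps$. For $t_n\downarrow t$ the restriction of $\pi^\eps$ to $[t_n,T]$ belongs to $\Ac_{t_n}$, whence $J(t_n)\leq\E[\exp(-\gamma(X^{t_n,\pi^\eps}_T+\xi))|\Gc_{t_n}]$ and therefore $\E[J(t_n)]\leq\E[\exp(-\gamma(X^{t_n,\pi^\eps}_T+\xi))]$. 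Since $X^{t_n,\pi^\eps}_T=X^{t,\pi^\eps}_T-X^{t,\pi^\eps}_{t_n}\to X^{t,\pi^\eps}_T$ a.s.\ as $n\to\infty$, passing to the limit would yield $\limsup_n\E[J(t_n)]\leq\E[J(t)]+\eps$, and letting $\eps\to0$ gives the result; the modification theorem then furnishes the c\`ad-l\`ag process $(J_t)$ with $J_t=J(t)$ a.s.\ for each $t$.

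The main obstacle is justifying the limit in the last step, namely dominated convergence for $\exp(-\gamma X^{t_n,\pi^\eps}_T)$. Writing $\exp(-\gamma X^{t_n,\pi^\eps}_T)=\exp(-\gamma X^{t,\pi^\eps}_T)\exp(\gamma X^{t,\pi^\eps}_{t_n})$, admissibility of $\pi^\eps$ bounds the first factor by a constant but only controls $X^{t,\pi^\eps}_{t_n}$ from below, so an integrable dominating function is not immediate. I would circumvent this by selecting the near-optimal strategies in $\Theta_3$ (bounded wealth) rather than in $\Ac$: by Lemma \ref{loca} the corresponding value functions coincide when $(\beta_t)$ is bounded, and for a bounded-wealth strategy both factors above are uniformly bounded, so dominated convergence applies directly. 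Alternatively, the $p$-integrability built into the set $\Ac'$ supplies exactly the uniform integrability needed to pass to the limit, which is why that refined admissibility class was introduced.
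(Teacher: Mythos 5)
Your reduction to the classical modification theorem is sound, and you have put your finger on exactly the right difficulty: along the near-optimal strategy the quantity to dominate is $\exp(-\gamma X^{t_n,\pi^\eps}_T)=\exp(-\gamma X^{t,\pi^\eps}_T)\exp(\gamma X^{t,\pi^\eps}_{t_n})$, and admissibility in $\Ac$ only bounds wealth increments from \emph{below}, so the second factor, which explodes when the wealth is large and \emph{positive}, is uncontrolled. But your two repairs do not close this gap at the level of generality of the statement. Passing through $\Theta_3$ invokes Lemma \ref{loca}, which requires $(\beta_t)$ to be bounded, whereas Proposition \ref{cadlag} sits in Section 4, where no boundedness of the coefficients is assumed; so this route proves the proposition only under an extra hypothesis. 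The fallback via $\Ac'$ is moreover incorrect: the $p$-integrability condition (\ref{p-integrability}) controls $\exp(-\gamma p X^{t,\pi}_s)$, i.e.\ it penalizes large \emph{negative} wealth, and gives no bound whatsoever on $\exp(+\gamma X^{t,\pi^\eps}_{t_n})$ (nor does any Cauchy--Schwarz combination of such negative-exponential bounds), so it does not supply the uniform integrability you need; Fatou is of no help either, since it yields the inequality in the wrong direction (cf.\ Remark \ref{int}).

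The paper's proof is arranged precisely so that only \emph{negative} exponentials of wealth ever appear, and it needs no near-optimal strategies. It defines $J(t^+)=\limsup_{s\in\mathbb{D},\,s\downarrow t}J(s)$ along rationals (right limits exist a.s.\ by the submartingale property you established via $\pi=0$), then passes to the limit in the dynamic programming inequality $\E[\exp(-\gamma X^{\pi}_{t_n})J(t_n)\,|\,\Gc_s]\geq\exp(-\gamma X^{\pi}_s)J(s)$: here $\exp(-\gamma X^{\pi}_{t_n})\leq e^{\gamma K_{0,\pi}}$ by the very definition of $\Ac$ and $J(t_n)\leq 1$, so dominated convergence is immediate for \emph{every} $\pi\in\Ac$. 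Taking $\pi=0$ and $s=t$, together with the right-continuity of the filtration, gives $J(t^+)\geq J(t)$ a.s., while the reverse inequality follows from the maximality characterization of Proposition \ref{le plus grand}, since $(\exp(-\gamma X^{\pi}_t)J(t^+))$ is a submartingale for every $\pi\in\Ac$ with terminal value $\exp(-\gamma\xi)$; one then sets $J_t=J(t^+)$. So your argument is a genuinely different and legitimate proof when $(\beta_t)$ is bounded (and in that setting it is arguably more elementary, resting on a textbook theorem plus an $\eps$-optimality argument), but as written it does not establish the proposition in the unbounded-coefficient setting, and the $\Ac'$-based escape should be deleted or replaced by the paper's two-sided comparison of $J(t^+)$ with $J(t)$.
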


A direct proof is given in Appendix \ref{preuve de cadlag}.

 \begin{Remark}{\rm
Note that Proposition \ref{le plus grand} can be written under the form: $(J_t)$ is the largest $\G$-adapted c\`{a}d-l\`{a}g process such that the process $(\exp(-\gamma X_t^{\pi})J_t)$ is a submartingale for any $\pi\in \Ac$ with $J_T=\exp(-\gamma \xi)$.
\label{J sous martingale}
}\end{Remark}

We now prove that the process $(J_t)$ is bounded. More precisely, we have:
\begin{Lemma}
The process $(J_t)$ verifies 
\begin{equation*}
0 \leq  J_t \leq 1,~\forall\, t\in[0,T]~a.s.
\end{equation*}
 \label{J borne}
\end{Lemma}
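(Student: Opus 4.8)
The plan is to prove the two inequalities separately, reading them off directly from the definition of the essential infimum in (\ref{iun}) together with the identification of $(J_t)$ with the càdlàg version of $(J(t))$ provided by Proposition \ref{cadlag}. Both bounds are elementary; the only points worth isolating are the positivity of the integrand and the admissibility of the null strategy.

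For the lower bound, I would observe that for every fixed $\pi \in \Ac_t$ the random variable $\exp(-\gamma(X_T^{t,\pi}+\xi))$ is strictly positive, so its $\Gc_t$-conditional expectation is nonnegative a.s. Since the essential infimum of a family of nonnegative random variables is itself nonnegative, this yields $J(t) \geq 0$ a.s.

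For the upper bound, the key remark is that the null strategy $\pi \equiv 0$ belongs to $\Ac_t$: it trivially satisfies the integrability requirement $\int_t^T|\pi_s\sigma_s|^2ds+\int_t^T\lambda_s|\pi_s\beta_s|^2ds<\infty$, and its associated wealth increment is identically zero, hence bounded below by any $K_{s,\pi}\geq 0$. Evaluating the conditional expectation along this strategy gives $X_T^{t,0}=0$, so the corresponding value is $\E[\exp(-\gamma\xi)\,|\,\Gc_t]$. Because $\xi \geq 0$ by assumption and $\gamma>0$, we have $\exp(-\gamma\xi)\leq 1$, whence $\E[\exp(-\gamma\xi)\,|\,\Gc_t]\leq 1$ a.s. As $J(t)$ is an essential infimum over a set that contains $\pi\equiv 0$, we conclude $J(t)\leq 1$ a.s.

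Combining the two estimates gives $0 \leq J(t) \leq 1$ a.s.\ for each fixed $t$, and passing to the càdlàg version $(J_t)$ of $(J(t))$ from Proposition \ref{cadlag} transfers the inequalities to the whole process simultaneously a.s. There is no genuine obstacle here; the statement is a direct structural consequence of the exponential form of $U$, the sign condition on $\xi$, and the membership of the null strategy in $\Ac_t$.
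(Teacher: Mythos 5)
Your proof is correct and follows essentially the same route as the paper: the lower bound from the positivity of the conditional expectations defining the essential infimum in (\ref{iun}), and the upper bound by testing against the admissible null strategy together with $\xi\geq 0$ and $\gamma>0$. Your added remarks on the admissibility of $\pi\equiv 0$ and the passage to the c\`ad-l\`ag version $(J_t)$ of Proposition \ref{cadlag} merely make explicit what the paper leaves implicit.
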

\begin{proof}
Fix $t\in [0,T]$. The first inequality is easy to prove, because it is obvious that
\begin{equation*}
0\leq \mathbb{E}\big[\exp\big(-\gamma (X_T^{t,\pi}+\xi)\big)\big|\Gc_t\big]~a.s.,
\end{equation*}
for any $\pi\in \Ac_t$, which implies $0\leq J_t$.\\
The second inequality is due to the fact that the strategy defined by $\pi_s=0$ for any $s \in[t,T]$ is admissible, which implies $J_t \leq \mathbb{E}[\exp(-\gamma \xi)|\Gc_t]$ a.s. As the contingent claim $\xi$ is supposed to be non negative, we have $J_t \leq 1$ a.s.
\end{proof}

\begin{Remark}{\rm
Note that if $\xi$ is only bounded by below by a real constant $-K$, then $(J_t)$ is still upper bounded but by $\exp(\gamma K)$ instead of $1$.
}\end{Remark}

In our setting, it is not possible to use the verification theorem of Section 3 or even the verification theorem of Hu \emph{et al.} \cite{huimkmul05} in the Brownian case. Using the previous characterization of the value function (see Proposition \ref{le plus grand}), we will show directly that the value function $(J_t)$ is characterized by a BSDE. Since we work in terms of \emph{necessary conditions} satisfied by the value function, the study is more technical than in the cases where a verification theorem can be applied.\\
\indent Since $(J_t)$ is a c\`ad-l\`ag submartingale and is bounded (see Lemma \ref{J borne}), and hence of class D, it admits a unique Doob-Meyer decomposition (see Dellacherie and Meyer \cite{delmey80}, Chapter 7)
\begin{equation*}
dJ_t=dm_t+dA_t,
\end{equation*}
where $(m_t)$ is a square integrable martingale and $(A_t)$ is an increasing $\G$-predictable process with $A_0=0$. From the martingale representation theorem (see Proposition \ref{theoreme representation}), the previous Doob-Meyer decomposition can be written under the form
\begin{equation}
dJ_t=Z_tdW_t+U_tdM_t+dA_t,
\label{Doob-Meyer}
\end{equation}
with $Z\in L^2(W)$ and $U\in L^2(M)$.\\
Using the dynamic programming principle, it is possible to precise the process $(A_t)$ of (\ref{Doob-Meyer}). This allows to show that the value function $(J_t)$ is a subsolution of a BSDE. For that we define the set $\Ac^2$ of the increasing adapted c\`{a}d-l\`{a}g processes $K$ such that $K_0=0$ and $\E|K_T|^2<\infty$. More precisely,

\begin{Proposition}
There exists a process $K\in \Ac^2$ such that the process $(J_t,Z_t,U_t,K_t)\in \Sc^{+,\infty}\times L^2(W)\times L^2(M)\times \Ac^2$ is a subsolution of the following BSDE
\begin{equation}
\left\{\begin{aligned}
-\,dJ_t=&~\essinf\limits_{\pi\in \mathcal{A}}\Big\{\frac{\gamma^2}{2}\pi_t^2\sigma_t^2J_t-\gamma\pi_t(\mu_tJ_t+\sigma_tZ_t)-\lambda_t(1-e^{-\gamma\pi_t\beta_t})(J_t+U_t) \Big\}dt\\
&~-\,dK_t-Z_tdW_t-U_tdM_t,  \\
J_T=~&\exp(-\gamma \xi).
\end{aligned}\right.
\label{EDSR J}
\end{equation}

\label{A egal sup}
\end{Proposition}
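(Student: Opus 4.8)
The plan is to build on the Doob--Meyer decomposition already obtained in \eqref{Doob-Meyer}, namely $dJ_t = Z_t\,dW_t + U_t\,dM_t + dA_t$ with $(A_t)$ predictable, increasing and $A_0=0$; since $(J_t)$ is bounded (Lemma~\ref{J borne}) and $Z\in L^2(W)$, $U\in L^2(M)$, one has $A_T = J_T - J_0 - \int_0^T Z_s\,dW_s - \int_0^T U_s\,dM_s \in L^2$, so $A\in\Ac^2$. Writing $f(t,y,z,u)=\essinf_{\pi\in\Ac} f^{\pi}(t,y,z,u)$ for the driver, with $f^{\pi}$ as in \eqref{edsr jkpi}, I would define the candidate process $K$ by $dK_t = dA_t + f(t,J_t,Z_t,U_t)\,dt$ and $K_0=0$, so that rearranging these two equalities immediately yields BSDE~\eqref{EDSR J} with terminal condition $J_T=\exp(-\gamma\xi)$. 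Everything then reduces to showing that this $K$ is nondecreasing and lies in $\Ac^2$.

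The key step is a computation. Fix $\pi\in\Ac$ and apply the integration-by-parts / It\^o formula with jumps to the product $R^{\pi}_t := \exp(-\gamma X^{\pi}_t)\,J_t$. Using $dX^{\pi}_t = \pi_t(\mu_t\,dt + \sigma_t\,dW_t + \beta_t\,dN_t)$ and substituting $dN_t = dM_t + \lambda_t\,dt$ throughout, I would isolate the predictable finite-variation part of $R^{\pi}$. After matching terms, this drift equals $\exp(-\gamma X^{\pi}_{t^-})\bigl(dA_t + f^{\pi}(t,J_t,Z_t,U_t)\,dt\bigr)$: the covariation between $\exp(-\gamma X^{\pi})$ and $J$ produces exactly the cross terms $-\gamma\pi_t\sigma_t Z_t$ and $\lambda_t(e^{-\gamma\pi_t\beta_t}-1)U_t$ needed to complete $f^{\pi}$, while $\int_0^{\cdot} \exp(-\gamma X^{\pi}_{s^-})Z_s\,dW_s$ and the compensated jump integral against $dM$ are (true) martingales. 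Note that admissibility gives $X^{\pi}_t\ge -K_{0,\pi}$, hence $R^{\pi}$ is bounded and of class D, which legitimates this decomposition. Since $R^{\pi}$ is a submartingale by Proposition~\ref{J(t) sous martingale}, it is a special semimartingale whose unique predictable finite-variation part is nondecreasing; as $\exp(-\gamma X^{\pi}_{t^-})>0$, this forces $dA_t + f^{\pi}(t,J_t,Z_t,U_t)\,dt \ge 0$ as measures, for every $\pi\in\Ac$.

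It then remains to pass from this family of inequalities, one per $\pi$, to the single inequality involving the essential infimum. Writing $dA_t = a_t\,dt + dA^s_t$ for the Lebesgue decomposition of $dA$ with respect to $dt$ (the singular part $dA^s\ge0$ being automatic), the previous step gives $a_t \ge -f^{\pi}(t,J_t,Z_t,U_t)$ for each $\pi$, $dt\otimes d\P$-a.e.; thus $a_t$ is an a.e.\ upper bound of the family $\{-f^{\pi}(t,J_t,Z_t,U_t)\}_{\pi\in\Ac}$, and by the defining property of the essential supremum $a_t \ge \esssup_{\pi}(-f^{\pi}) = -f(t,J_t,Z_t,U_t)$. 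Hence $dK_t=(a_t+f(t,J_t,Z_t,U_t))\,dt + dA^s_t\ge0$, so $K$ is nondecreasing; and taking $\pi\equiv0$ (so that $f\le f^{0}=0$) gives $0\le K_T \le A_T\in L^2$, whence $K\in\Ac^2$. I expect the main obstacle to be precisely this interchange: one must verify that $f=\essinf_{\pi}f^{\pi}$ is a well-defined predictable process, hence a legitimate BSDE driver, using the essential-infimum machinery of Appendix~\ref{Neveu} together with the continuity of the control value $a\mapsto f^{\pi}$ to reduce the essinf to a countable subfamily. The It\^o computation itself is routine but must be handled carefully on the jump and covariation terms.
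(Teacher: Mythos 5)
Your proposal is correct and follows essentially the same route as the paper: Doob--Meyer decomposition of the bounded submartingale $(J_t)$, the product-rule computation identifying the drift of $e^{-\gamma X^{\pi}_t}J_t$ as $e^{-\gamma X^{\pi}_{t^-}}\bigl(dA_t+f^{\pi}(t,J_t,Z_t,U_t)\,dt\bigr)$, the submartingale property from Proposition~\ref{J(t) sous martingale} forcing $dA_t+f^{\pi}\,dt\ge 0$ for every $\pi$, the definition $dK_t=dA_t+\essinf_{\pi}f^{\pi}\,dt$, and the strategy $\pi\equiv 0$ to get $0\le K_t\le A_t$, hence $K\in\Ac^2$. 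Your extra care with the Lebesgue decomposition of $dA$ and the measurability of the essential-infimum driver only makes explicit what the paper leaves implicit.
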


\begin{proof}
The proof of this proposition is based on the dynamic programming principle: the process $(\exp(-\gamma X_t^\pi)J_t)$ is a submartingale for any $\pi\in\Ac$ (see Proposition \ref{le plus grand}). First, we write the derivative of $\exp(-\gamma X^\pi_t) J_t$ under the following form
\begin{equation*}
d(e^{-\gamma X_t^\pi}J_t)=dA_t^\pi+dm_t^\pi,
\end{equation*}
\noindent with $A_0^\pi=0$ and
\begin{equation*}
\left\{\begin{aligned}
dA_t^\pi & =e^{-\gamma X_{t}^\pi}\Big[dA_t+\Big\{\frac{\gamma^2}{2}\pi_t^2\sigma_t^2J_t-\lambda_t\big(1-e^{-\gamma \pi_t\beta_t}\big)(U_t+J_t)-\gamma\pi_t(\sigma_tZ_t+\mu_tJ_{t})\Big\}dt\Big],  \\
dm^\pi_t & =e^{-\gamma X_{t^-}^\pi}\big[(Z_t-\gamma\pi_t\sigma_tJ_{t})dW_t+(U_t+(e^{-\gamma \pi_t\beta_t}-1)(U_t+J_{t^-}))dM_t\big].
\end{aligned}\right.
\end{equation*}
Since for any $\pi \in \Ac$ the process $(\exp(-\gamma X^\pi_t) J_t)$ is a submartingale, we have
\begin{equation}
dA_t\geq \esssup\limits_{\pi\in\Ac}\Big\{\lambda_t\big(1-e^{-\gamma \pi_t\beta_t}\big)(U_t+J_t)+\gamma\pi_t(\sigma_tZ_t+\mu_tJ_{t})-\frac{\gamma^2}{2}\pi_t^2\sigma_t^2J_t\Big\}dt.
\label{A}
\end{equation}
We define the process $(K_t)$ by $K_0=0$ and 
\begin{equation*}
dK_t=dA_t -  \esssup\limits_{\pi\in\Ac}\Big\{\lambda_t\big(1-e^{-\gamma \pi_t\beta_t}\big)(U_t+J_t)+\gamma\pi_t(\sigma_tZ_t+\mu_tJ_{t})-\frac{\gamma^2}{2}\pi_t^2\sigma_t^2J_t\Big\}dt.
\end{equation*}
It is clear that the process $(K_t)$ is nondecreasing from (\ref{A}). Since the strategy defined by $\pi_t=0$ for any $t\in[0,T]$ is admissible, we have
\begin{equation*}
\esssup\limits_{\pi\in\Ac}\Big\{\lambda_t\big(1-e^{-\gamma \pi_t\beta_t}\big)(U_t+J_t)+\gamma\pi_t(\sigma_tZ_t+\mu_tJ_{t})-\frac{\gamma^2}{2}\pi_t^2\sigma_t^2J_t\Big\} \geq 0.
\end{equation*}
Hence, $0\leq K_t \leq A_t$ a.s.
As $\E|A_T|^2<\infty$, we have $K\in \Ac^2$.
\noindent Thus, the Doob-Meyer decomposition (\ref{Doob-Meyer}) of $(J_t)$ can be written as follows
\begin{align*}
dJ_t=&~\esssup\limits_{\pi\in\Ac}\Big\{\lambda_t\big(1-e^{-\gamma \pi_t\beta_t}\big)(U_t+J_t)+\gamma\pi_t(\sigma_tZ_t+\mu_tJ_{t})-\frac{\gamma^2}{2}\pi_t^2\sigma_t^2J_t\Big\}dt\\
&+\,dK_t+Z_tdW_t+U_tdM_t,
\end{align*}
with $Z\in L^2(W)$, $U\in L^2(M)$ and $K\in \Ac^2$. This ends the proof.
\end{proof}

The fact that $(J_t,Z_t,U_t,K_t)$ is a subsolution of BSDE (\ref{EDSR J}) does not allow to characterize the value function, since the subsolution of BSDE (\ref{EDSR J}) is not unique. However, we have the following characterization of the value function:

\begin{Theorem} (Characterization of the value function)\label{solution maximale}\\
$(J_t,Z_t,U_t,K_t)$ is the \emph{maximal subsolution} in $\Sc^{+,\infty} \times L^2(W)\times L^2(M) \times \Ac^2$ of BSDE (\ref{EDSR J}). That is for any subsolution $(\bar{J}_t,\bar{Z}_t,\bar{U}_t,\bar{K}_t)$ of the BSDE in $\Sc^{+,\infty} \times L^2(W) \times L^2(M) \times \Ac^2$, we have $\bar{J}_t\leq J_t$, $\forall\, t\in [0,T]$ a.s.
\end{Theorem}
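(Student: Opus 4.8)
The plan is to reduce the maximality assertion to the characterization of $(J_t)$ already obtained in Proposition \ref{le plus grand}, in its c\`ad-l\`ag form (Remark \ref{J sous martingale}): $(J_t)$ is the \emph{largest} $\G$-adapted c\`ad-l\`ag process with $J_T=\exp(-\gamma\xi)$ such that $(\exp(-\gamma X^\pi_t)J_t)$ is a submartingale for every $\pi\in\Ac$. Consequently it suffices to show that \emph{any} subsolution $(\bar J_t,\bar Z_t,\bar U_t,\bar K_t)$ of BSDE (\ref{EDSR J}) shares this property, namely that $(\exp(-\gamma X^\pi_t)\bar J_t)$ is a submartingale for every $\pi\in\Ac$; the inequality $\bar J_t\le J_t$ then follows immediately.

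I would fix $\pi\in\Ac$, set $P_t:=\exp(-\gamma X^\pi_t)$ and $Y_t:=P_t\bar J_t$, and compute the decomposition of $Y$ by the It\^o product rule exactly as in the proof of Proposition \ref{A egal sup}, but with the components of the subsolution in place of those of $(J_t)$. Using $dP_t=P_{t^-}[(-\gamma\pi_t\mu_t+\tfrac{\gamma^2}{2}\pi_t^2\sigma_t^2+\lambda_t(e^{-\gamma\pi_t\beta_t}-1))\,dt-\gamma\pi_t\sigma_t\,dW_t+(e^{-\gamma\pi_t\beta_t}-1)\,dM_t]$, the subsolution dynamics $d\bar J_t=-\,\essinf_{\pi'\in\Ac}\bar g^{\pi'}_t\,dt+d\bar K_t+\bar Z_t\,dW_t+\bar U_t\,dM_t$, and the covariation term (whose continuous part contributes $-\gamma\pi_t\sigma_t\bar Z_t$ and whose jump part, compensated via $dN_t=dM_t+\lambda_t\,dt$, contributes $\lambda_t(e^{-\gamma\pi_t\beta_t}-1)\bar U_t$), one finds that the finite-variation part of $dY_t$ equals $P_{t^-}\big(\bar g^\pi_t-\essinf_{\pi'\in\Ac}\bar g^{\pi'}_t\big)\,dt+P_{t^-}\,d\bar K_t$, where $\bar g^\pi_t:=\tfrac{\gamma^2}{2}\pi_t^2\sigma_t^2\bar J_t-\gamma\pi_t(\mu_t\bar J_t+\sigma_t\bar Z_t)-\lambda_t(1-e^{-\gamma\pi_t\beta_t})(\bar J_t+\bar U_t)$ is precisely the $\pi$-term appearing in the driver. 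Since $\bar g^\pi_t\ge\essinf_{\pi'\in\Ac}\bar g^{\pi'}_t$ and $\bar K$ is nondecreasing, this drift is nonnegative, so $Y$ is a \emph{local} submartingale.

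The remaining point, which I expect to be the main (if routine) technical obstacle, is to upgrade ``local submartingale'' to a genuine ``submartingale''. The crucial fact is that $Y$ is \emph{bounded}: admissibility of $\pi$ gives $X^\pi_t=X^\pi_t-X^\pi_0\ge -K_{0,\pi}$, hence $P_t\le e^{\gamma K_{0,\pi}}$, while $\bar J\in\Sc^{+,\infty}$ is bounded, so $0\le Y_t\le c$ for some constant $c$. Writing $Y=Y_0+m+V$ with $m$ a local martingale and $V$ the nondecreasing part identified above, I would introduce a localizing sequence $(T_n)$ with $T_n\uparrow T$ reducing $m$; then for $s\le t$, $\E[Y_{t\wedge T_n}\mid\Gc_s]=Y_{s\wedge T_n}+\E[V_{t\wedge T_n}-V_{s\wedge T_n}\mid\Gc_s]\ge Y_{s\wedge T_n}$ a.s. Passing to the limit $n\to\infty$, the uniform bound on $Y$ justifies dominated convergence for the conditional expectations and yields $\E[Y_t\mid\Gc_s]\ge Y_s$ a.s. Thus $(\exp(-\gamma X^\pi_t)\bar J_t)$ is a true submartingale for every $\pi\in\Ac$, and since $\bar J_T=\exp(-\gamma\xi)$, Proposition \ref{le plus grand} gives $\bar J_t\le J_t$ for all $t\in[0,T]$ a.s., establishing that $(J_t,Z_t,U_t,K_t)$ is the maximal subsolution.
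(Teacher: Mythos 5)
Your proposal is correct and follows essentially the same route as the paper's proof: apply the product rule to $\exp(-\gamma X^\pi_t)\bar J_t$, observe that the drift $\bar g^\pi_t-\essinf_{\pi'}\bar g^{\pi'}_t$ and the $d\bar K_t$ term are nonnegative, use the admissibility bound $\exp(-\gamma X^\pi_t)\leq e^{\gamma K_{0,\pi}}$ together with $\bar J\in\Sc^{+,\infty}$ to pass from local to true submartingale, and conclude via Proposition \ref{le plus grand}. The only cosmetic difference is that you upgrade the local submartingale by localization and dominated convergence on the bounded process $Y$, whereas the paper first argues that the local martingale part $\bar M^\pi$ is a true martingale; both justifications are standard and equivalent here.
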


\begin{Remark}
If $\xi$ and the coefficients are supposed to be bounded, we will see in Section \ref{section coefficient borne} that $(J_t,Z_t,U_t)$ is the maximal solution of BSDE (\ref{EDSR J}) that is, with $K_t=0$ for any $t\in[0,T]$ (see Theorem \ref{solution maximale bis}).
\end{Remark}

\begin{proof}
Let $(\bar{J}_t,\bar{Z}_t,\bar{U}_t,\bar{K}_t)$ be a solution of (\ref{EDSR J}) in $\Sc^{+,\infty} \times L^2(W)\times L^2(M)\times \Ac^2$. Let us prove that the process $(\exp(-\gamma X_t^\pi)\bar{J}_t)$ is a submartingale for any $\pi\in \mathcal{A}$.\\
From the product rule, we can write the derivative of this process under the form
\begin{equation*}
d\big(e^{-\gamma X_t^\pi}\bar{J}_t\big)=d\bar{M}_t^\pi+d\bar{A}_t^\pi+e^{-\gamma X_t^\pi}d\bar{K}_t,
\end{equation*}
with $\bar{A}_0^\pi=0$ and 
\begin{equation*}
\left\{\begin{aligned}
d\bar{A}_t=&-\essinf\limits_{\pi\in \mathcal{A}}\Big\{ \frac{\gamma^2}{2}\pi_t^2\sigma_t^2\bar{J}_t-\gamma\pi_t(\mu_t\bar{J}_t+\sigma_t\bar{Z}_t)-\lambda_t\big(1-e^{-\gamma\pi_t\beta_t}\big)(\bar{J}_t+\bar{U}_t)\Big\}dt, \\
d\bar{A}_t^\pi=&e^{-\gamma X_{t}^\pi}\Big\{\Big[\frac{\gamma^2}{2}\pi_t^2\sigma_t^2\bar{J}_t-\gamma\pi_t(\mu_t\bar{J}_t+\sigma_t\bar{Z}_t)-\lambda_t\big(1-e^{-\gamma\pi_t\beta_t}\big)(\bar{J}_t+\bar{U}_t)\Big]dt+d\bar{A}_t\Big\}, \\
d\bar{M}_t^\pi=&e^{-\gamma X_{t^-}^\pi}\big[(\bar{Z}_t-\gamma\pi_t\sigma_t\bar{J}_{t})dW_t+\big(\bar{U}_t+(e^{-\gamma \pi_t\beta_t}-1)\big(\bar{U}_t+\bar{J}_{t^-}\big)\big)dM_t\big].
\end{aligned}\right.
\end{equation*}

Since the strategy $\pi$ is admissible, there exists a constant $C_\pi$ such that $\exp(-\gamma X_{t}^\pi) \leq C_\pi $ for any $t \in [0,T]$. With this, one can easily derive that $\E[\sup_{t \in [0,T]} \exp(-\gamma X_t^\pi)\bar{J}_t]  < + \infty$ and that $\E[\int_0^T \exp(- \gamma X^\pi_t) d \bar{K}_t] < + \infty$. It follows that 
the local martingale $(\bar{M}^\pi_t)$ is a martingale and that 
the process $(\exp(-\gamma X^\pi_{t})\bar{J}_{t})$ is a submartingale.\\
Now recall that $(J_t)$ is the largest process such that $(\exp(-\gamma X_t^\pi) J_t)$ is a submartingale for any $\pi \in \mathcal{A}$ with $J_T=\exp(-\gamma \xi)$ (see Proposition \ref{le plus grand}). Therefore, we get
\begin{equation*}
\bar{J}_t\leq J_t , ~\forall \,  t \in [0,T]~a.s.
\end{equation*}
\end{proof} 

\begin{Remark}{\rm
Note that the integrability property $\E[\sup_{t\in [0,T]} \exp(-\gamma X^\pi_t)]$ is essential in this proof.
}\end{Remark}


\section[Approximation of the value function]{The non constrained case: approximation of the value function}\label{section approximation}
\setcounter{equation}{0} \setcounter{Assumption}{0}
\setcounter{Theorem}{0} \setcounter{Proposition}{0}
\setcounter{Corollary}{0} \setcounter{Lemma}{0}
\setcounter{Definition}{0} \setcounter{Remark}{0}

In this section, we do not make any assumptions on the coefficients of the model.\\
In the following, the value function is shown to be characterized as the {\em limit of a nonincreasing sequence} of processes $(J^k_t)_{k \in \N}$ as $k$ tends to $+ \infty$ where for each $k\in \N$, $(J^k_t)$ corresponds to the value function over the set of admissible strategies which are bounded by $k$. 

Note that in the classical case of {\em bounded coefficients}, we will see in the next section that for each $k\in \N$, $(J^k_t)$ can be characterized as the solution of a {\em Lipschitz BSDE}. 
 
\vspace{2mm}
For each $k \in \N$, we denote by $\Ac^k_t$ the set of strategies of $\Ac_t$ uniformly bounded by $k$, and we consider the associated value function $J^k(t)$ defined by 
 \begin{equation}\label{defk}
J^k(t)=\essinf_{\pi\in\Ac^k_t}\E\big[\exp\big(-\gamma(X^{t,\pi}_T+\xi)\big)\big|\Gc_t\big].
\end{equation}
By similar argument as for $(J_t)$, there exists a càd-làg version of $(J^k(t))$ denoted by $(J^k_t)$.
 As previously, the dynamic programming principle holds:
\begin{Proposition}
The process $(\exp(-\gamma X_t^{\pi})J^k_t)$ is a submartingale for any $\pi\in\Ac^k$.
\label{Jk sous martingale}
\end{Proposition}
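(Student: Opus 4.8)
The plan is to reproduce verbatim the argument used for Proposition \ref{J(t) sous martingale}, replacing the set $\Ac$ throughout by the smaller set $\Ac^k$ of strategies uniformly bounded by $k$. The point is that the proof of the dynamic programming principle for $\Ac$ rested only on three ingredients: the stability by pairwise minimization of the family $\{J^\pi_t,\ \pi\in\Ac_t\}$, the closedness by binding of $\Ac$, and the integrability property $\E[\exp(-\gamma X^{s,\pi}_t)]<\infty$ allowing the use of Lebesgue's theorem. I would first check that these three ingredients survive the passage to $\Ac^k$, and then copy the computation.

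First I would establish the two structural properties for $\Ac^k$. For the stability by pairwise minimization, given $\pi^1,\pi^2\in\Ac^k_t$, I set $E=\{J^{\pi^1}_t\le J^{\pi^2}_t\}\in\Gc_t$ and $\pi=\pi^1\1_E+\pi^2\1_{E^c}$; since $\pi^1$ and $\pi^2$ are both bounded by $k$, so is $\pi$, and its wealth remains bounded from below, so $\pi\in\Ac^k_t$ with $J^\pi_t=J^{\pi^1}_t\wedge J^{\pi^2}_t$, exactly as in Lemma \ref{stable}. The classical essential-infimum argument then provides a sequence $(\pi^n)_{n\in\N}\subset\Ac^k_t$ with $J^k(t)=\lim_n\downarrow J^{\pi^n}_t$ a.s. For the closedness by binding, splicing two strategies of $\Ac^k$ at a time $s$ again yields a predictable process bounded by $k$ whose wealth is bounded from below, hence an element of $\Ac^k$, exactly as in Lemma \ref{recollement}.

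Then I would run the main computation. Fix $s\le t$ and $\pi\in\Ac^k$. Using the sequence $(\pi^n)\subset\Ac^k_t$ above with, without loss of generality, $\pi^0=0$ (which lies in $\Ac^k_t$ since $0\le k$), I get $J^{\pi^n}_t\le J^{\pi^0}_t=\E[\exp(-\gamma\xi)\,|\,\Gc_t]\le 1$ because $\xi\ge 0$. Since $\pi\in\Ac^k\subset\Ac$, the increment $X^{s,\pi}_t=X^\pi_t-X^\pi_s$ is bounded from below, so $\E[\exp(-\gamma X^{s,\pi}_t)]<\infty$ and the dominated convergence theorem lets me interchange the decreasing limit with the conditional expectation. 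Using the binding identity $\exp(-\gamma X^{s,\pi}_t)J^{\pi^n}_t=\E[\exp(-\gamma(X^{s,\tilde\pi^n}_T+\xi))\,|\,\Gc_t]$, where $\tilde\pi^n$ equals $\pi$ on $[0,t]$ and $\pi^n$ on $(t,T]$ and belongs to $\Ac^k$ by the closedness by binding, I take the conditional expectation given $\Gc_s$ and apply the tower property to obtain $\E[\exp(-\gamma X^{s,\pi}_t)J^k(t)\,|\,\Gc_s]=\lim_n J^{\tilde\pi^n}_s\ge J^k(s)$ a.s., the last inequality holding because $\tilde\pi^n\in\Ac^k_s$ and $J^k(s)$ is the essential infimum over $\Ac^k_s$. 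This is precisely the announced submartingale property.

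The computation is a routine copy of the one in Proposition \ref{J(t) sous martingale}, so I do not expect any serious obstacle. The only genuine point requiring attention is the same one emphasized in Remark \ref{int}: the integrability $\E[\exp(-\gamma X^{s,\pi}_t)]<\infty$ that legitimizes passing the decreasing limit through the conditional expectation. This is furnished by the lower-boundedness of wealth increments built into the definition of $\Ac$, and it is inherited automatically by $\Ac^k\subset\Ac$, so no additional hypothesis on the coefficients is needed. The only mild care lies in verifying that the uniform bound by $k$ is preserved under both the pairwise-minimization construction and the binding operation, which is clear since both operations merely select among values already taken by the constituent strategies.
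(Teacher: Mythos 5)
Your proposal is correct and matches the paper's intent exactly: the paper states Proposition \ref{Jk sous martingale} without proof precisely because it follows from the argument of Proposition \ref{J(t) sous martingale} with $\Ac$ replaced by $\Ac^k$, which is the adaptation you carry out. Your verification that the three ingredients (pairwise minimization, closedness by binding, and the integrability $\E[\exp(-\gamma X^{s,\pi}_t)]<\infty$ inherited from $\Ac^k\subset\Ac$) survive the restriction to strategies bounded by $k$ is exactly the check the paper leaves implicit.
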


We now show that the value functions $(J^k_t)_{k\in\N}$ converge to the value function $J_t$. More precisely, we have:
\begin{Theorem}\label{J limite}  (Approximation of the value function)\\
 For any $t\in[0,T]$, we have
\begin{equation*}
J_t=\lim\limits_{k\rightarrow \infty} \downarrow J^k_t~a.s.
\end{equation*}
\end{Theorem}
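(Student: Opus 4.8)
The inequality $J_t\le\lim_{k\to\infty}\downarrow J^k_t$ is immediate and requires no work: since $\Ac^k_t\subset\Ac^{k+1}_t\subset\Ac_t$, the random variables $J^k_t=\essinf_{\pi\in\Ac^k_t}\E[\exp(-\gamma(X^{t,\pi}_T+\xi))|\Gc_t]$ are nonincreasing in $k$ and bounded below by $J_t$, so the limit $\tilde J_t:=\lim_{k\to\infty}\downarrow J^k_t$ exists a.s.\ and satisfies $\tilde J_t\ge J_t$. The entire content of the theorem is the reverse inequality $\tilde J_t\le J_t$, and this is what I would focus on.

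Since $J_t=\essinf_{\pi\in\Ac_t}J^\pi_t$, it is enough to prove that $\tilde J_t\le J^\pi_t$ a.s.\ for each fixed $\pi\in\Ac_t$, and then to take the essential infimum over $\pi$. So I would fix $\pi\in\Ac_t$ and approximate it by a sequence $(\pi^k)$ of strategies of $\Ac^k_t$ (admissible \emph{and} bounded by $k$) such that $X^{t,\pi^k}_T\to X^{t,\pi}_T$. Because $\pi^k\in\Ac^k_t$, one then has for every $k$
\[
\tilde J_t\ \le\ J^k_t\ \le\ J^{\pi^k}_t=\E\big[\exp(-\gamma(X^{t,\pi^k}_T+\xi))\big|\Gc_t\big],
\]
so that taking a limit on the right-hand side would give $\tilde J_t\le J^\pi_t$. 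This reduces everything to a passage to the limit inside a conditional expectation.

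This is precisely where the integrability difficulty emphasized in Remark~\ref{int} resurfaces: we are looking at a decreasing limit of conditional expectations of \emph{positive} random variables, so neither monotone convergence nor Fatou's lemma goes in the right direction, and what is genuinely needed is a domination \emph{from above}, i.e.\ a uniform lower bound on the wealths $X^{t,\pi^k}_T$. Concretely, the plan is: if one can exhibit an integrable random variable $G$ with $\exp(-\gamma X^{t,\pi^k}_T)\le G$ for all $k$, then the conditional reverse Fatou lemma yields $\limsup_k\E[\exp(-\gamma(X^{t,\pi^k}_T+\xi))|\Gc_t]\le\E[\exp(-\gamma(X^{t,\pi}_T+\xi))|\Gc_t]=J^\pi_t$; combined with $\tilde J_t\le J^{\pi^k}_t$ this gives $\tilde J_t\le J^\pi_t$, and the essential infimum over $\pi\in\Ac_t$ then gives $\tilde J_t\le J_t$.

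The remaining task, which I expect to be the main obstacle, is to construct the approximating sequence so that this uniform domination holds. A naive value-truncation $\pi\mathds{1}_{|\pi|\le k}$ is bounded by $k$ and does converge to $\pi$, with $X^{t,\pi^k}_T\to X^{t,\pi}_T$ (the continuous stochastic integrals converge because $\pi^k\sigma\to\pi\sigma$ in $L^2_{loc}(W)$ and the single default jump converges pathwise), but it may destroy admissibility, i.e.\ the lower bound on the wealth. To repair this I would stop the truncated strategy at the first time its wealth falls below a fixed level, setting $\pi^k:=(\pi\mathds{1}_{|\pi|\le k})\mathds{1}_{[t,\theta_k]}$ with $\theta_k:=\inf\{u\ge t:X^{t,\pi\mathds{1}_{|\pi|\le k}}_u\le -(K_\pi+1)\}$, where $-K_\pi$ is the lower bound of $X^{t,\pi}$. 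This puts $\pi^k$ in $\Ac^k_t$ and keeps its wealth above $-(K_\pi+1)$ along the continuous part; moreover, since $X^{t,\pi\mathds{1}_{|\pi|\le k}}\to X^{t,\pi}\ge -K_\pi$, one has $\theta_k>T$ for $k$ large, so that $X^{t,\pi^k}_T\to X^{t,\pi}_T$ is preserved. The delicate point, and the crux of the argument, is the single jump at the default time: one must verify, using the standing assumption $\beta_\tau>-1$ together with the fact that there is only one jump, that the jump of $X^{t,\pi^k}$ at $\theta_k=\tau$ cannot make the terminal wealth arbitrarily negative uniformly in $k$, so that an integrable exponential lower bound $G$ is indeed available. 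Securing this uniform domination — reconciling the boundedness of the strategy with a uniform lower bound on the wealth in the presence of the default jump — is the heart of the proof; once it is established, the reverse Fatou argument above closes the claim.
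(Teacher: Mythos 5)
You correctly isolate the hard inequality $\lim_k\downarrow J^k_t\leq J_t$ and your reduction (show $\tilde J_t\leq J^{\pi^k}_t$ via $\tilde J_t\le J^k_t$, then pass to the limit by reverse Fatou under an integrable domination) is sound in outline — but the step you yourself flag as the crux is not merely unverified, it fails for the construction you propose. With $\pi^k=(\pi\mathds{1}_{|\pi|\le k})\mathds{1}_{[t,\theta_k]}$ and only the lower barrier $-(K_\pi+1)$, on the event where $\theta_k$ is the default time the terminal wealth equals $X^{t,\pi^k}_{\theta_k^-}+\pi_\tau\mathds{1}_{|\pi_\tau|\le k}\beta_\tau$, so the best available bound is $\exp(-\gamma X^{t,\pi^k}_T)\le e^{\gamma(K_\pi+1)}e^{\gamma(\pi_\tau\beta_\tau)^-}$. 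This dominating variable is indeed uniform in $k$, but its integrability does \emph{not} follow from admissibility: membership in $\Ac$ only bounds the wealth from below, so all one gets is $(\pi_\tau\beta_\tau)^-\le K_\pi+(X^{t,\pi}_{\tau^-})^+$, and the upside $(X^{t,\pi}_{\tau^-})^+$ has no exponential moments in general (think of a strategy that gains a large amount before $\tau$ and then bets so heavily at the default that the post-jump wealth is exactly $-K_\pi$). Neither $\beta_\tau>-1$ nor the fact that there is a single jump helps here: $\beta_\tau>-1$ controls the relative jump of $S$, not the size of $\pi_\tau\beta_\tau$. A further problem is that Section 5 makes no boundedness assumption on the coefficients, so with unbounded $\beta$ your $\pi^k$ need not even lie in $\Ac^k_t$ (its wealth jump $\pi_\tau\mathds{1}_{|\pi_\tau|\le k}\beta_\tau$ need not be bounded below), which already breaks the inequality $J^k_t\le J^{\pi^k}_t$.

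The gap is repairable within your strategy, but it requires a two-parameter truncation: first stop $\pi$ at $\tau_n=\inf\{s\ge t:|X^{t,\pi}_s|\ge n\}$ (exactly the stopping used in Lemma \ref{cinq}), whose stopped wealth stays above $-K_\pi$ and whose downward jump at the default is then pathwise bounded by $K_\pi+n$ since $X^{t,\pi}_{\tau^-}\le n$ on $\{\tau\le\tau_n\}$; only then truncate in value at level $k$ and stop at a lower barrier. For fixed $n$ the domination is by a deterministic constant of order $e^{\gamma(2K_\pi+n+1)}$, uniform in $k$, and one concludes by letting $k\to\infty$ then $n\to\infty$ (the second limit being dominated by $e^{\gamma K_\pi}$). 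Note, for comparison, that the paper takes an entirely different route that sidesteps terminal-value convergence: it shows the limit $\bar J$ is a submartingale, takes a c\`ad-l\`ag version (Dellacherie--Meyer), applies the Doob--Meyer decomposition, identifies the driver via the essential-infimum identity of Appendix \ref{esssup A Abar} — which needs only $dt\otimes d\P$ convergence of truncated strategies, with no uniform integrability — so that $\bar J$ is a subsolution of BSDE (\ref{EDSR J}), and then invokes the maximality characterization of Theorem \ref{solution maximale}; the integrability issue you face (consistent with Remark \ref{int}) is thereby absorbed into the proof of that theorem, where pathwise boundedness of $\exp(-\gamma X^\pi)$ for $\pi\in\Ac$ suffices.
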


\begin{proof}  Fix $t\in[0,T]$.
It is obvious with the definitions of sets $\mathcal{A}_t$ and $\mathcal{A}^k_t$ that $\mathcal{A}^k _t\subset \mathcal{A}_t$ for each $k\in \N$ and hence
\begin{equation*}
J_t\leq J^k_t~a.s.
\end{equation*}
Moreover, since $\Ac_t^k \subset \Ac_t^{k+1}$ for each $k\in \N$, it follows that the sequence of positive random variables $(J^k_t)_{k\in \mathbb{N}}$ is nonincreasing. Let us define the random variable
\begin{equation*}
\bar{J}(t)=\lim\limits_{k\rightarrow \infty} \downarrow J^k_t~a.s.
\end{equation*}
It is obvious from the previous inequality that $J_t\leq \bar{J}(t)$ a.s., and this holds for any $t\in[0,T]$.\\
 It remains to prove that $J_t\geq \bar{J}(t)$ a.s. for any $t\in[0,T]$. This will be done by the following steps.\\
\textbf{Step 1:}
Let us now prove that the process $(\bar{J}(t))$ is a submartingale. Fix $0\leq s< t\leq T$. From Proposition \ref{Jk sous martingale}, $(J^k_t)$ is a submartingale, which gives for each $k\in \N$
\begin{equation*}
 \mathbb{E}\big[J^k_t \big| \mathcal{G}_s\big] \geq J^k_s \geq \bar{J}(s)~a.s.
 \end{equation*}
The dominated convergence theorem (which can be applied since $0\leq J^k_t\leq 1$ for each $k\in \N$) gives
\begin{equation*}
\E\big[\bar{J}(t) \big| \mathcal{G}_s\big]=\lim_{k\rightarrow \infty} \E\big[J^k_t \big| \mathcal{G}_s\big] \geq \bar{J}(s)~a.s.\,,
\end{equation*}
which gives step 1.\\

\noindent \textbf{Step 2:} Let us show that the process $(\exp(-\gamma X_t^\pi)\bar{J}(t))$ is a submartingale for any bounded strategy $\pi\in \Ac$. \\
Let $\pi$ be a bounded admissible strategy. Then, there exists $n\in \N$ such that $\pi$ is uniformly bounded by $n$. For each $k\geq n$, since $\pi \in \Ac^k$, $(\exp(-\gamma X_t^\pi)J^k_t)$ is a submartingale from Proposition \ref{Jk sous martingale}. Then, by the dominated convergence theorem, the process $(\exp(-\gamma X_t^\pi)\bar{J}(t))$ can be easily proven to be a submartingale.\\

\noindent\textbf{Step 3:} Note now that the process $(\bar{J}(t))$ is a submartingale not necessarily c\`{a}d-l\`{a}g. 
However, by a theorem of Dellacherie-Meyer \cite{delmey80} (see VI.18), we know that the nonincreasing limit of a sequence of càd-làg submartingales is indistinguishable from a càd-làg adapted process. 
Hence, there exists a càd-làg version of $(\bar{J}(t))$ which will be denoted by $(\bar{J}_t)$. Note that 
$(\bar{J}_t)$ is still a submartingale. \\

\noindent \textbf{Step 4:} Let us show that $\bar{J}_t\leq J_t,~\forall\,t\in [0,T]$ a.s. Since by steps 1 and 3, $(\bar{J}_t)$ is a c\`{a}d-l\`{a}g submartingale of class D, it admits the following Doob-Meyer decomposition
\begin{equation*}
d\bar{J}_t=\bar{Z}_tdW_t+\bar{U}_tdM_t+d\bar{A}_t,
\end{equation*}
where $\bar{Z}\in L^2(W)$, $\bar{U}\in L^2(M)$ and $(\bar{A}_t)$ is a nondecreasing $\G$-predictable process with $\bar{A}_0=0$.\\
As before, we use the fact that the process $(\exp(-\gamma X_t^\pi)\bar{J}_t)$ is a submartingale for any bounded strategy $\pi\in \Ac$ to give some necessary conditions satisfied by the process $(\bar{A}_t)$.\\
Let $\pi\in \Ac$ be a uniformly bounded strategy. The product rule gives
\begin{equation*}
d(e^{-\gamma X_t^\pi}\bar{J}_t)=d\bar{M}^\pi_t+d\bar{A}_t^\pi,
\end{equation*}
with $\bar{A}_0^\pi=0$ and 
\begin{equation*}
\left\{\begin{aligned}
d\bar{A}_t^\pi & =e^{-\gamma X_{t}^\pi}\Big\{d\bar{A}_t+\Big[\frac{\gamma^2}{2}\pi_t^2\sigma_t^2\bar{J}_{t}+\lambda_t(e^{-\gamma \pi_t\beta_t}-1)(\bar{U}_t+\bar{J}_t)-\gamma\pi_t(\mu_t\bar{J}_t+\sigma_t\bar{Z}_t)\Big]dt\Big\}, \\
d\bar{M}_t^\pi & =e^{-\gamma X_{t^-}^\pi}\big[(\bar{Z}_t-\gamma\pi_t\sigma_t\bar{J}_{t})dW_t+(\bar{U}_t+(e^{-\gamma \pi_t\beta_t}-1)(\bar{U}_t+\bar{J}_{t^-}))dM_t\big].
\end{aligned}\right.
\end{equation*}
Let $\bar{\mathcal{A}}$ be the set of uniformly bounded admissible strategies. Since the process $(e^{-\gamma X_t^\pi}\bar{J}_t)$ is a submartingale for any $\pi \in \bar{\Ac}$, we have $d\bar{A}_t^\pi \geq 0$ a.s.\, for any $\pi \in \bar{\Ac}$. Hence, there exists a process $\bar{K}\in \Ac^2$ such that
 \begin{equation*}
 d\bar{A}_t = -\essinf\limits_{\pi\in \bar{\mathcal{A}}}\Big\{\frac{\gamma^2}{2}\pi_t^2\sigma_t^2\bar{J}_t-\gamma\pi_t(\mu_t\bar{J}_t+\sigma_t\bar{Z}_t)-\lambda_t(1-e^{-\gamma\pi_t\beta_t})(\bar{J}_t+\bar{U}_t)\Big\}dt+d\bar{K}_t.
 \end{equation*}
Now, the following equality holds $dt\otimes d\P-a.e.$ (see Appendix \ref{esssup A Abar} for details)
\begin{multline}\label{esssupA=esssupAbar}
\essinf\limits_{\pi\in \bar{\mathcal{A}}}\Big\{\frac{\gamma^2}{2}\pi_t^2\sigma_t^2\bar{J}_t-\gamma\pi_t(\mu_t\bar{J}_t+\sigma_t\bar{Z}_t)-\lambda_t(1-e^{-\gamma\pi_t\beta_t})(\bar{J}_t+\bar{U}_t)\Big\}=\\
\essinf\limits_{\pi\in \mathcal{A}}\Big\{\frac{\gamma^2}{2}\pi_t^2\sigma_t^2\bar{J}_t-\gamma\pi_t(\mu_t\bar{J}_t+\sigma_t\bar{Z}_t)-\lambda_t(1-e^{-\gamma\pi_t\beta_t})(\bar{J}_t+\bar{U}_t) \Big\}.
\end{multline}
Hence, $(\bar{J}_t,\bar{Z}_t,\bar{U}_t,\bar{K}_t)$ is a subsolution of BSDE (\ref{EDSR J}) and Theorem \ref{solution maximale} implies that
\begin{equation*}
\bar{J}_t\leq J_t,~\forall\,t\in[0,T]~a.s.,
\end{equation*}
which ends the proof.
\end{proof}

In the next section, we will see that in the classical case of {\em bounded coefficients}, for each $k\in \N$, $(J^k_t)$ can be characterized as the solution of a {\em Lipschitz BSDE}.

\section{Case of bounded coefficients}\label{section coefficient borne}
\setcounter{equation}{0} \setcounter{Assumption}{0}
\setcounter{Theorem}{0} \setcounter{Proposition}{0}
\setcounter{Corollary}{0} \setcounter{Lemma}{0}
\setcounter{Definition}{0} \setcounter{Remark}{0}

In this section, the coefficients of the model $(\mu_t)$, $(\sigma_t)$, $(\beta_t)$ and $(\lambda_t)$ are supposed to be bounded. We will see that in this case, the two previous theorems (Theorem \ref{solution maximale} and Theorem \ref{J limite}) will lead to more precise characterizations of the dynamic value function.

For each $k \in \N$, we define the set $\Bc^k$ as the set of all strategies (not necessarily in $\Ac$) such that they take their values in $[-k,k]$. Also, we denote
by $\Bc^k_t$ the set of all strategies beginning at $t$ and such that they take their values in $[-k,k]$.\\

Note that for each $k\in \N$, $\forall\, p>1$ and $\forall\,t \in [0,T]$ the following integrability property
\begin{equation}\label{lemme integrabilite 2}
\sup_{\pi \in \Bc^k} \E\big[\exp(-\gamma p X^\pi_t)\big] < \infty
\end{equation}
clearly holds.

We state the following lemma:
\begin{Lemma}\label{cinq}
The following equality holds for any $k\in \N$ and for any $t\in [0,T]$
\begin{equation*}
{J}^k_t = \essinf_{\pi \in \Bc^k_t} \E\big[\exp(-\gamma(X^{t,\pi}_T + \xi)) \big| \Gc_t\big]~a.s.,
\end{equation*}
with $({J}^k_t)$ defined in the previous section by (\ref{defk}).
\end{Lemma}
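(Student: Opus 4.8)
The plan is to prove the two inequalities separately. Since the coefficients are bounded, any strategy valued in $[-k,k]$ automatically satisfies the integrability condition $\int_t^T|\pi_s\sigma_s|^2ds+\int_t^T\lambda_s|\pi_s\beta_s|^2ds<\infty$ a.s., so that $\Ac^k_t$ (the strategies of $\Ac_t$ bounded by $k$) is a subset of $\Bc^k_t$. Taking the essential infimum over the larger set $\Bc^k_t$ therefore gives $\essinf_{\pi\in\Bc^k_t}\E[\exp(-\gamma(X^{t,\pi}_T+\xi))|\Gc_t]\le J^k_t$ a.s. It remains to prove the reverse inequality, and for this it suffices to show that $J^k_t\le\E[\exp(-\gamma(X^{t,\pi}_T+\xi))|\Gc_t]$ a.s. for every fixed $\pi\in\Bc^k_t$; taking the essential infimum over $\Bc^k_t$ then yields the claim.

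Following the localization argument used in the proof of Lemma \ref{loca}, I would associate to a fixed $\pi\in\Bc^k_t$ the stopping times $\tau_n=\inf\{s\ge t:|X^{t,\pi}_s|\ge n\}$ and the stopped strategies $\pi^n_s=\pi_s\mathds{1}_{s\le\tau_n}$. Each $\pi^n$ is predictable, still valued in $[-k,k]$, and its associated wealth coincides with $X^{t,\pi}$ stopped at $\tau_n$; up to time $\tau_n$ the wealth stays in absolute value below $n$, and the single possible jump at $\tau_n$ is bounded by $k\|\beta\|_\infty$ since $\beta$ is bounded and $N$ jumps at most once. Hence $X^{s,\pi^n}$ is bounded from below for every starting time $s$, so that $\pi^n\in\Ac^k_t$. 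As $J^k_t=\essinf_{\pi\in\Ac^k_t}$, we get $J^k_t\le\E[\exp(-\gamma(X^{t,\pi^n}_T+\xi))|\Gc_t]$ a.s. for each $n$, while $X^{t,\pi^n}_T=X^{t,\pi}_{T\wedge\tau_n}\to X^{t,\pi}_T$ a.s.

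The remaining, and main, step is to pass to the limit $n\to\infty$ under the conditional expectation, which requires the domination $\E[\sup_{s\in[t,T]}\exp(-\gamma X^{t,\pi}_s)]<\infty$ (the factor $\exp(-\gamma\xi)\le 1$ is harmless since $\xi\ge 0$). This is exactly where the boundedness of the coefficients and of $\pi$ enters, and it is the analogue of the integrability property (\ref{lemme integrabilite 2}). I would write $\exp(-\gamma X^{t,\pi}_s)$ as the product of the deterministically bounded drift factor $\exp(-\gamma\int_t^s\pi_r\mu_r dr)$, the bounded jump factor $\exp(-\gamma\int_t^s\pi_r\beta_r dN_r)$, and the factor $\exp(-\gamma\int_t^s\pi_r\sigma_r dW_r)=\Ec_s\exp(\tfrac{\gamma^2}{2}\int_t^s\pi_r^2\sigma_r^2 dr)$, where $\Ec_s=\exp(-\gamma\int_t^s\pi_r\sigma_r dW_r-\tfrac{\gamma^2}{2}\int_t^s\pi_r^2\sigma_r^2 dr)$ is the Dol\'eans-Dade exponential of $-\gamma\int\pi\sigma\,dW$. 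The last exponential is again bounded by a deterministic constant, while $\Ec$ is a true martingale whose terminal value lies in every $L^p$, $p>1$ (by completing the square and using boundedness of $\pi\sigma$); Doob's $L^p$ inequality then yields $\E[\sup_s\Ec_s]<\infty$, and hence $\E[\sup_{s\in[t,T]}\exp(-\gamma X^{t,\pi}_s)]<\infty$.

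With this domination in hand, the conditional dominated convergence theorem gives $\E[\exp(-\gamma(X^{t,\pi^n}_T+\xi))|\Gc_t]\to\E[\exp(-\gamma(X^{t,\pi}_T+\xi))|\Gc_t]$ a.s., so that $J^k_t\le\E[\exp(-\gamma(X^{t,\pi}_T+\xi))|\Gc_t]$ a.s. Taking the essential infimum over $\pi\in\Bc^k_t$ completes the reverse inequality and the proof. The only delicate point is the uniform-in-time integrability discussed above; everything else is a direct transcription of the localization scheme of Lemma \ref{loca}, now made possible by the standing assumption of bounded coefficients.
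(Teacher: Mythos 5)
Your proof is correct and follows essentially the same route as the paper's: the identical localization via the stopping times $\tau_n=\inf\{s\geq t,\ |X^{t,\pi}_s|\geq n\}$ and stopped strategies $\pi^n_s=\pi_s\1_{s\leq\tau_n}\in\Ac^k_t$, followed by a passage to the limit in the conditional expectations. The only (harmless) variation is in justifying that limit: the paper invokes uniform integrability of $\{\exp(-\gamma X^{t,\pi}_T),\ \pi\in\Bc^k_t\}$ via the $L^p$-bound (\ref{lemme integrabilite 2}), which it asserts without proof, whereas you establish the stronger sup-domination $\E\big[\sup_{s\in[t,T]}\exp(-\gamma X^{t,\pi}_s)\big]<\infty$ through the Dol\'eans-Dade factorization and Doob's $L^p$ maximal inequality and then apply conditional dominated convergence---in effect supplying the exponential-moment estimate the paper leaves implicit.
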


\begin{proof} Fix $k\in \N$ and $t\in[0,T]$.
Put $\bar{J}^k_t := \essinf_{\pi \in \Bc^k_t} \E[\exp(-\gamma(X^{t,\pi}_T + \xi)) | \Gc_t]$.
Since $\Ac^k_t \subset \Bc^k_t$, we get $\bar{J}^k_t \leq J^k_t$. To prove the other inequality, we state that there exists a sequence $(\pi^n)_{n\in \N}$ of $\Ac^k_t$ such that $\pi^n\rightarrow \pi,~dt\otimes d\P$ a.s., for any $\pi\in\Bc^k_t$. Let us define $\pi^n$ by
\begin{equation*}
\pi^n_s=\pi_s\mathds{1}_{s \leq \tau_n}, ~\forall ~ s\in [t,T],
\end{equation*}
where $\tau_n$ is the stopping time defined by
$\tau_n=\inf\{s\geq t,  |X_s^{t,\pi}| \geq n\}$.\\
It is clear that for each $n\in \N$, $\pi^n\in \Ac^k_t$. Thus, $\exp(-\gamma X_T^{t,\pi^n})=\exp(-\gamma X_{T\wedge \tau_n}^{t,\pi})\stackrel{a.s.}{\longrightarrow} \exp(-\gamma X_T^{t,\pi})$ as $n \rightarrow +\infty$. By (\ref{lemme integrabilite 2}), the set of random variables $\{\exp(-\gamma X_T^{t,\pi}),~\pi \in \Bc^k_t\}$ is uniformly integrable. Hence, $\E[\exp(-\gamma (X_T^{t,\pi^n}+\xi))|\Gc_t]\rightarrow \E[\exp(-\gamma (X_T^{t,\pi}+\xi))|\Gc_t]$ a.s. as $n \rightarrow +\infty$. Therefore, we have
$J^k_t \leq \bar{J}^k_t$ a.s. which ends the proof.
\end{proof}

Now by Proposition \ref{bc}, we have that for each $k \in \N$, the process $(J^k_t)$ is characterized as the solution of a 
{\em Lipschitz} BSDE given by (\ref{Jk edsr}) with $\Cc$ replaced by $\Bc^k$. Hence, we have that:
\begin{Theorem} \label{lim} (Approximation of the value function)\\
The value function is characterized as the {\em nonincreasing limit} of the sequence $(J^k_t)_{k \in \N}$ as $k$ tends to $+ \infty$, where for each $k$, $(J^k_t)$ is the solution of {\em Lipschitz BSDE} (\ref{Jk edsr}) with $\Cc=
\Bc^k$.
\end{Theorem}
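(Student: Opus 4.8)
The plan is to deduce Theorem \ref{lim} as an immediate consequence of three results already at our disposal: the approximation theorem (Theorem \ref{J limite}), the identification lemma (Lemma \ref{cinq}), and the compact-case characterization (Proposition \ref{bc}). There are really only two things to assemble, namely the convergence of the approximating sequence to $J_t$, which is already contained in Theorem \ref{J limite} and requires no assumption on the coefficients, and the fact that each approximating process $(J^k_t)$ solves the Lipschitz BSDE (\ref{Jk edsr}) with $\Cc=\Bc^k$, which is where the bounded-coefficients hypothesis of this section will enter.

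First I would simply invoke Theorem \ref{J limite}: for every $t\in[0,T]$ the sequence $(J^k_t)_{k\in\N}$ defined in (\ref{defk}) over the sets $\Ac^k_t$ is nonincreasing (because $\Ac^k_t\subset\Ac^{k+1}_t$) and converges a.s. to $J_t$. This settles the \emph{nonincreasing limit} part of the statement.

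Next, for the BSDE characterization I would first re-express each $(J^k_t)$ as a value function over a compact-valued family. By Lemma \ref{cinq} one has, for every $k$ and every $t$,
\[
J^k_t=\essinf_{\pi\in\Bc^k_t}\E\big[\exp(-\gamma(X^{t,\pi}_T+\xi))\,\big|\,\Gc_t\big]\quad a.s.,
\]
where $\Bc^k$ is precisely the set of strategies valued in the compact set $[-k,k]$. Hence $(J^k_t)$ is the dynamic value function associated with the compact set $C=[-k,k]$, i.e. with the admissible family $\Cc=\Bc^k$. Since Assumption \ref{coefficient borne} is in force, Proposition \ref{bc} applies verbatim with $C=[-k,k]$ and identifies $(J^k_t)$ with the first component of the unique solution in $\Sc^{+,\infty}\times L^2(W)\times L^2(M)$ of the Lipschitz BSDE (\ref{Jk edsr}) with $\Cc=\Bc^k$. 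Combining this with the convergence of the previous step yields the claim.

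The only step requiring genuine care is the passage through Lemma \ref{cinq}, which is what allows one to replace the admissible family $\Ac^k$, defined via the wealth lower-bound condition of Definition \ref{portefeuille admissible}, by the larger compact-valued family $\Bc^k$, whose elements need not be admissible in that sense. This identification is exactly where the bounded-coefficients hypothesis is indispensable: it is the uniform integrability (\ref{lemme integrabilite 2}) that permits passing to the limit along the localizing sequence $\pi^n=\pi\,\mathds{1}_{s\le\tau_n}$ in the proof of Lemma \ref{cinq}. Without that identification the compact-case verification result (Proposition \ref{bc}), which is stated for $\Cc=\Bc^k$ and not for $\Ac^k$, could not be brought to bear, so one cannot short-circuit this point.
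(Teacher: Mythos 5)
Your proposal is correct and follows exactly the paper's route: the paper obtains Theorem \ref{lim} by combining Theorem \ref{J limite} (nonincreasing convergence of $(J^k_t)$ to $(J_t)$), Lemma \ref{cinq} (identification of $J^k_t$ with the value function over $\Bc^k_t$, via the uniform integrability (\ref{lemme integrabilite 2}) available under bounded coefficients), and Proposition \ref{bc} applied with the compact set $C=[-k,k]$. Your additional remark correctly pinpoints that Lemma \ref{cinq} is the step where the bounded-coefficients hypothesis is used to pass from $\Ac^k$ to $\Bc^k$, which matches the paper's reasoning.
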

\begin{Remark}{\rm
Note that this allows to {\em approximate} the value function by {\em numerical computations} (by applying for example Bouchard and Elie's results \cite{boueli08}). 
}\end{Remark}

We now recall a result of convergence stated by Morlais \cite{mor09}. 
For each $k \in \N$, let us denote by $(Z^k_t, U^k_t)$ the pair of square integrable processes such that 
$(J^k_t,Z^k_t, U^k_t)$ is solution of the associated {\em Lipschitz} BSDE (\ref{Jk edsr}) with $\Cc$ replaced by $\Bc^k$. We make the following change of variables
\begin{equation*}
\left\{\begin{aligned}
y^k_t=&\frac{1}{\gamma}\log(J^k_t),\\
z^k_t=&\frac{1}{\gamma}\frac{Z^k_t}{J^k_t}, \\
u^k_t=&\frac{1}{\gamma}\log\Big(1+\frac{U^k_t}{J^k_{t^-}}\Big).
\end{aligned}\right.
\end{equation*} 
It is clear that the process $(y^k_t,z^k_t,u^k_t)$ is a solution of the following quadratic BSDE
\begin{equation*}
-\,dy^k_t=g^k(t,z^k_t,u^k_t)dt-z^k_tdW_t-u^k_tdM_t~;~y^k_T=-\xi~,
\end{equation*}
where 
\[g^k(s,z,u)=\inf_{\pi\in\Bc^k}\Big(\frac{\gamma}{2}\Big|\pi_s\sigma_s-\Big(z+\frac{\mu_s+\lambda_s\beta_s}{\gamma}\Big)\Big|^2+|u-\pi_s\beta_s|_{\gamma}\Big)-(\mu_s+\lambda_s\beta_s)z-\frac{|\mu_s+\lambda_s\beta_s|^2}{2\gamma}\]
 with $|u-\pi\beta_t|_\gamma=\lambda_t\frac{\exp(\gamma(u-\pi\beta_t))-1-\gamma(u-\pi\beta_t)}{\gamma}$.\\
Recall that by using Kobylanski's technics \cite{kob00} on monotone stability convergence theorem, Morlais \cite{mor09} shows the following nice result:

\begin{Proposition} (Morlais's result) \label{morlaisbis}
Suppose that the coefficients are bounded and that $\xi$ is bounded. Then, $(y^k_t,z^k_t,u^k_t)$ converges to $(y_t,z_t,u_t)$ in the following sense
\begin{equation*}
\E(\sup_{t\in[0,T]}|y^k_t-y_t|)+|z^k-z|_{L^2(W)}+|u^k-u|_{L^2(M)}\rightarrow 0~,
\end{equation*}
where $(y_t,z_t,u_t)$ is solution of
\begin{equation*}
-\,dy_t=g(t,y_t,z_t,u_t)dt-z_tdW_t-u_tdM_t~;~y_T=-\xi~,
\end{equation*}
with 
\[g(s,z,u)=\inf_{\pi\in \bar \Bc}\Big(\frac{\gamma}{2}\Big|\pi_s\sigma_s-\Big(z+\frac{\mu_s+\lambda_s\beta_s}{\gamma}\Big)\Big|^2+|u-\pi_s\beta_s|_{\gamma}\Big)-(\mu_s+\lambda_s\beta_s)z-\frac{|\mu_s+\lambda_s\beta_s|^2}{2\gamma},\]
where $\bar \Bc= \cup_k \Bc^k$.
\end{Proposition}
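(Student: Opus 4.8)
The plan is to establish this as a monotone stability result for quadratic BSDEs with a jump, following the method of Kobylanski \cite{kob00} in its extension to the jump setting. The starting point is that the approximating sequence is monotone: since $\Bc^k \subset \Bc^{k+1}$, the infimum defining $g^k$ is taken over a larger set as $k$ grows, so $g^{k+1} \leq g^k$ pointwise, and by the comparison theorem for BSDEs with jumps (the terminal value $-\xi$ being common to all $k$) the solutions satisfy $y^{k+1}_t \leq y^k_t$ for all $t$ a.s. This is consistent with the monotonicity $J^{k+1}_t \leq J^k_t$ of the value functions and the change of variables $y^k = \frac1\gamma \log J^k$. Hence $(y^k_t)$ is nonincreasing in $k$ and, once uniform bounds are available, converges pointwise to a limit $y_t$, the driver $g = \inf_k g^k = \lim_k \downarrow g^k$ being exactly the one associated with $\bar\Bc = \cup_k \Bc^k$. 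The main work is then to (i) obtain uniform a priori estimates on $(z^k,u^k)$, (ii) upgrade the convergence of $(y^k)$ and prove strong convergence of $(z^k,u^k)$, and (iii) identify the limit as the solution of the BSDE with driver $g$.

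First I would derive uniform bounds. Each triple $(y^k,z^k,u^k)$ solves a quadratic BSDE with the common bounded terminal condition $-\xi$ and a driver $g^k$ whose growth constants are controlled uniformly in $k$ by the bounds on $\mu,\sigma,\beta,\lambda$; hence the classical a priori estimate for quadratic BSDEs with bounded data yields $\|y^k\|_\infty \leq C$ uniformly in $k$ (one also has directly $y^k_t \leq 0$ since $0 \leq J^k_t \leq 1$). Applying It\^o's formula to $|y^k_t|^2$ and using the quadratic growth of $g^k$ in $z$ together with this uniform sup-bound, a standard computation gives
\begin{equation*}
\sup_{k\in\N}\Big(|z^k|_{L^2(W)}^2 + |u^k|_{L^2(M)}^2\Big) < \infty,
\end{equation*}
where the exponential-type jump term $|u-\pi\beta|_\gamma$ is handled using the boundedness of $\beta$ and $\lambda$ and the convexity of $u \mapsto |u-\pi\beta|_\gamma$.

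The core of the argument is the strong convergence of the components, and this is where Kobylanski's monotone stability technique is genuinely needed. Weak $L^2$ convergence of $(z^k,u^k)$ does not suffice, because $g^k$ is quadratic in $z$ and exponential in $u$, and one cannot pass to the limit in a nonlinear functional of a merely weakly convergent sequence. The remedy is an exponential change of unknown: following Kobylanski one introduces an auxiliary process such as $P^k_t = \exp(a\,y^k_t)$ for a suitable constant $a$, which transforms the quadratic BSDE into one whose driver has at most linear-quadratic structure amenable to $L^2$ estimates. Applying It\^o to the differences $P^k - P^j$ and exploiting the monotone, uniformly bounded convergence of $(y^k)$, one shows that $(z^k)$ is Cauchy in $L^2(W)$, that $(u^k)$ is Cauchy in $L^2(M)$, and that $(y^k)$ is Cauchy in the norm $\E[\sup_t|\cdot|]$; passing to the limit produces limits $(z,u)$ and the announced mode of convergence $\E(\sup_t|y^k_t-y_t|) + |z^k-z|_{L^2(W)} + |u^k-u|_{L^2(M)} \to 0$.

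Finally I would identify the limit as the solution of the BSDE with driver $g$: using the strong $L^2$ convergence of $(z^k,u^k)$ together with the local uniform convergence $g^k \downarrow g$ and the quadratic growth bounds, one passes to the limit in $\int g^k(s,z^k_s,u^k_s)\,ds$ to obtain $\int g(s,z_s,u_s)\,ds$, so that $(y,z,u)$ solves $-\,dy_t = g(t,z_t,u_t)\,dt - z_t\,dW_t - u_t\,dM_t$ with $y_T = -\xi$. The delicate point throughout, and the main obstacle, is precisely this passage to the limit in the nonlinear driver: the quadratic dependence in $z$ and the exponential dependence in $u$ (the jump contribution $|u-\pi\beta|_\gamma$) force one to prove strong rather than weak convergence, and the jump term requires adapting Kobylanski's purely Brownian monotone stability argument to the discontinuous setting, which is the technical contribution of Morlais \cite{mor09}.
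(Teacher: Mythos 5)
Your proposal is correct and follows the same route as the paper, which in fact offers no proof of this proposition but attributes it to Morlais \cite{mor09}: her argument is precisely the Kobylanski-type monotone stability scheme you reconstruct (monotone limit of the $y^k$ via the nested sets $\Bc^k$, uniform a priori bounds, an exponential transformation yielding strong convergence of $(z^k,u^k)$, then passage to the limit in the quadratic-exponential driver). One small repair: the uniform estimate on $|z^k|_{L^2(W)}+|u^k|_{L^2(M)}$ does not follow from It\^o's formula applied to $|y^k_t|^2$, since the term $2\int y^k_s\,g^k(s,z^k_s,u^k_s)\,ds$ produces a quadratic contribution of size $C\int|z^k_s|^2\,ds$ that cannot be absorbed by the left-hand side when $C\geq 1$; the standard fix is to apply It\^o to $e^{a y^k_t}$ for $a$ large enough, i.e.\ the same exponential change of unknown you already invoke for the Cauchy step.
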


By similar arguments as in the proof of the above lemma (Lemma \ref{cinq}) or as in Appendix \ref{esssup A Abar}, the set $\bar \Bc$ can be replaced by $\bar \Ac$ or even by $\Ac$.

Using this proposition and our characterization of $(J_t)$ as the {\em nonincreasing limit} of $(J^k_t)_{k \in \N}$, we can identify the limit 
$(y_t)$. More precisely, let us define the following processes
\begin{equation*}
\left\{\begin{aligned}
J^*_t&=e^{\gamma y_t}, \\
Z^*_t&=\gamma J^*_tz_t,  \\
U^*_t&=(e^{\gamma u_t}-1)J^*_{t^-}.
\end{aligned}\right.
\end{equation*}
Since $J_t=\lim_{k\rightarrow \infty} J^k_t$ by Theorem \ref{lim} (or \ref{J limite}), $J^*_t=J_t, ~\forall\, t\in[0,T]$ a.s., and the uniqueness of the Doob-Meyer decomposition (\ref{Doob-Meyer}) of $J_t$ implies that $Z^*_t=Z_t$ and $U^*_t=U_t$ $dt\otimes d\P-a.e.$ Also, by using Morlais's result (Proposition \ref{morlaisbis}), we derive that $(J_t,Z_t,U_t)$ is a solution of BSDE (\ref{EDSR J}), and not only a subsolution.
This, with the characterization of $(J_t)$ of Theorem \ref{solution maximale}, give:

\begin{Theorem}\label{solution maximale bis}
(Characterization of the value function)\\
Suppose that $\xi$ and the coefficients are bounded. Then, the value function $(J_t,Z_t,U_t)$ is the maximal solution of BSDE (\ref{EDSR J}) (that is with $K_t=0$ for any $t\in[0,T]$).
\end{Theorem}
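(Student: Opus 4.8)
The plan is to combine the approximation result (Theorem~\ref{lim}, equivalently Theorem~\ref{J limite}) with Morlais's convergence result (Proposition~\ref{morlaisbis}) in order to exhibit $(J_t,Z_t,U_t)$ as a genuine solution of BSDE~(\ref{EDSR J}), and then to invoke the maximal subsolution property of Theorem~\ref{solution maximale} to upgrade this to maximality among solutions.

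First I would identify the limit of the approximating sequence. By Theorem~\ref{lim}, for each $k$ the triple $(J^k_t,Z^k_t,U^k_t)$ solves the Lipschitz BSDE~(\ref{Jk edsr}) with $\Cc=\Bc^k$, and $J_t=\lim_{k\to\infty}\downarrow J^k_t$ a.s. On the other hand, the change of variables $y^k_t=\frac1\gamma\log J^k_t$ turns $(J^k,Z^k,U^k)$ into the solution $(y^k,z^k,u^k)$ of the quadratic BSDE with driver $g^k$, and Proposition~\ref{morlaisbis} gives $\E[\sup_t|y^k_t-y_t|]+|z^k-z|_{L^2(W)}+|u^k-u|_{L^2(M)}\to0$, where $(y,z,u)$ solves the limiting quadratic BSDE. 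Passing to a subsequence so that $\sup_t|y^k_t-y_t|\to0$ a.s., I get $J^k_t=e^{\gamma y^k_t}\to e^{\gamma y_t}=:J^*_t$ a.s. for every $t$; comparing with $J^k_t\downarrow J_t$ yields $J^*_t=J_t$ a.s. for all $t\in[0,T]$.

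Next I would check that $(J^*_t,Z^*_t,U^*_t)$, with $Z^*_t=\gamma J^*_tz_t$ and $U^*_t=(e^{\gamma u_t}-1)J^*_{t^-}$, is a solution of BSDE~(\ref{EDSR J}) with $K\equiv 0$. This is precisely the inverse of the computation recorded in Remark~\ref{Morlais}: applying It\^o's formula with jumps to $J^*_t=e^{\gamma y_t}$ and substituting the dynamics of $(y,z,u)$ shows that the Brownian and jump martingale parts of $J^*$ are exactly $Z^*_tdW_t$ and $U^*_tdM_t$, while the absolutely continuous drift reproduces $-\essinf_{\pi\in\Ac}\{\cdots\}dt$ after using the algebraic identity between $g$ and the $\essinf$ of the $f^\pi$; boundedness of $\xi$ and of the coefficients guarantees $J^*\in\Sc^{+,\infty}$, $Z^*\in L^2(W)$ and $U^*\in L^2(M)$. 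Since $J^*=J$, the uniqueness of the Doob--Meyer decomposition~(\ref{Doob-Meyer}) of the bounded c\`adl\`ag submartingale $(J_t)$ together with the martingale representation of Lemma~\ref{theoreme representation} forces $Z^*=Z$ and $U^*=U$ $dt\otimes d\P$-a.e.; the predictable finite variation parts must then coincide as well, which, since $(J^*,Z^*,U^*)$ carries no increasing process, forces the process $K$ of Theorem~\ref{solution maximale} to vanish. Hence $(J_t,Z_t,U_t)$ is a genuine solution of~(\ref{EDSR J}).

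Finally, maximality is immediate: every solution of~(\ref{EDSR J}) is in particular a subsolution (take its increasing part equal to $0$), and by Theorem~\ref{solution maximale} the process $(J_t)$ dominates the first component of every subsolution, so $(J_t,Z_t,U_t)$ is the maximal solution. The main obstacle is the middle step, namely verifying carefully that the nonlinear change of variables sends the limiting quadratic BSDE back to BSDE~(\ref{EDSR J}) with no extra increasing term and that all the required integrabilities hold; the identification $K\equiv 0$ then hinges on the uniqueness of the canonical semimartingale decomposition of $(J_t)$.
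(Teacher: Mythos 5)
Your proposal is correct and takes essentially the same route as the paper: the paper likewise combines the monotone approximation $J_t=\lim_k\downarrow J^k_t$ (Theorem \ref{lim}) with Morlais's convergence result (Proposition \ref{morlaisbis}), defines $J^*_t=e^{\gamma y_t}$, $Z^*_t=\gamma J^*_t z_t$, $U^*_t=(e^{\gamma u_t}-1)J^*_{t^-}$, identifies $J^*=J$ and then $(Z^*,U^*)=(Z,U)$ via uniqueness of the Doob--Meyer decomposition (\ref{Doob-Meyer}), concluding that $(J_t,Z_t,U_t)$ is a genuine solution (so $K\equiv 0$) whose maximality follows from Theorem \ref{solution maximale}. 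Your write-up is in fact more careful than the paper's on the inverse change of variables and on replacing the infimum over $\bar{\Bc}$ by the essential infimum over $\Ac$ (the point handled in Appendix \ref{esssup A Abar}).
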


\begin{Remark}\label{egale A'} {\rm
Moreover, if there is no default, our result corresponds to that of Hu \emph{et al.} \cite{huimkmul05} in the complete case (by making the simple exponential change of variable $y_t=\frac{1}{\gamma}\log(J_t)$).
Also, in this case, the optimal strategy belongs to the set $\Ac'$. Indeed, the optimal terminal wealth is given by $\hat X_T = I(\lambda Z_0 (T))$, where $I$ is the inverse of $U^{'}$, $\lambda$ is a fixed parameter, $Z_0(T) := \exp\{-\int_0^T \alpha_t dW_t - \frac{1}{2} \int_0^T \alpha_t ^2 dt\}$ and $\alpha_t := \frac{\mu_t+ \lambda_t \beta_t}{\sigma_t}$ (supposed to be bounded). 
}\end{Remark}

\section[Coefficients satisfying some integrability conditions]{Case of coefficients which satisfy some exponential integrability conditions}\label{section coefficient non borne}
\setcounter{equation}{0} \setcounter{Assumption}{0}
\setcounter{Theorem}{0} \setcounter{Proposition}{0}
\setcounter{Corollary}{0} \setcounter{Lemma}{0}
\setcounter{Definition}{0} \setcounter{Remark}{0}

In this section, we will study the case of coefficients not necessarily bounded but satisfying some integrability conditions.
We will first study the particular case of strategies valued in a convex-compact set. Then, we will see that the approximation result of the value function in the non constrained case (Theorem \ref{J limite}) can be precised.

\subsection{Case of strategies valued in a convex-compact set}
We suppose that the set of admissible strategies is given by $\Cc$ (see Section \ref{compact}) where $C$ is a convex-compact (not only compact) set. Here, it simply corresponds to a closed interval of $\R$ because we are in the one dimensional case. However, the following results clearly still hold in the multidimensional case (see Section 9).
Let $(J(t))$ be the associated dynamic value function to $\Cc_t$ defined as in Section 3 (see (\ref{Jkt})). Using some classical results of convex analysis (see for example Ekeland and Temam \cite{eketem76}), we easily state the following existence property:
\begin{Proposition}
\label{0-optimal}
There exists an optimal strategy $\hat{\pi}\in\Cc$ for the optimization problem (\ref{pb exponentiel}), that is
\begin{equation*}
J(0)= \inf \limits_{\pi \in \Cc} \mathbb{E}\big[\exp\big(-\gamma (X_T^{\pi}+\xi)\big)\big]=\mathbb{E}\big[\exp\big(-\gamma (X_T^{\hat{\pi}}+\xi)\big)\big].
\end{equation*}
\end{Proposition}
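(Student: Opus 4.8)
The plan is to establish existence of an optimal strategy $\hat\pi \in \Cc$ for the convex-compact case by a standard compactness-and-lower-semicontinuity argument drawn from the calculus of variations, exploiting the convexity that was unavailable in Section~\ref{compact}. The objective to be minimized is $\Phi(\pi) := \mathbb{E}[\exp(-\gamma(X_T^\pi + \xi))]$ over $\pi \in \Cc$. Since $C$ is now a closed interval (convex-compact) of $\R$, the set $\Cc$ of predictable strategies valued in $C$ is a convex subset of, say, $L^2(dt\otimes d\P)$ (the coefficients being bounded enough on $C$ that the relevant stochastic integrals make sense), and it is bounded in that Hilbert space. The first thing I would do is take a minimizing sequence $(\pi^n)_{n\in\N}$ in $\Cc$, i.e.\ $\Phi(\pi^n)\to J(0)$.

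First I would extract from $(\pi^n)$ a weakly convergent subsequence. Because $\Cc$ is a bounded, closed, convex subset of the Hilbert space $L^2(dt\otimes d\P)$, it is weakly compact (bounded closed convex sets in a reflexive space are weakly compact), so up to a subsequence $\pi^n \rightharpoonup \hat\pi$ weakly, with $\hat\pi \in \Cc$ by Mazur's lemma (a convex set closed in the norm topology is weakly closed). The second and central step is to verify that $\Phi$ is weakly lower semicontinuous along this sequence, so that
\begin{equation*}
\Phi(\hat\pi) \leq \liminf_{n\to\infty} \Phi(\pi^n) = J(0),
\end{equation*}
which, combined with $\Phi(\hat\pi)\geq J(0)$ by definition of the infimum, forces $\Phi(\hat\pi)=J(0)$ and identifies $\hat\pi$ as optimal. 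The key structural fact making this work is that, as shown in Remark~\ref{Morlais}, the logarithmic change of variables renders the driver of the associated BSDE \emph{convex} in the strategy variable when $C$ is convex; equivalently, $\pi \mapsto \exp(-\gamma(X_T^\pi+\xi))$ is a convex functional of $\pi$ because $x\mapsto e^{-\gamma x}$ is convex and $\pi\mapsto X_T^\pi$ is affine. A convex functional that is lower semicontinuous for the norm topology is automatically weakly lower semicontinuous (again by Mazur), which is exactly the tool from Ekeland and Temam \cite{eketem76} that the statement invokes.

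The main obstacle I anticipate is the interchange of limit and expectation needed to secure lower semicontinuity: weak $L^2$-convergence of $\pi^n$ does not by itself yield convergence of the nonlinear quantity $X_T^{\pi^n}$ or of its exponential. I would handle this by passing through convex combinations (Mazur gives a sequence of convex combinations converging \emph{strongly} in $L^2$, hence along a further subsequence $dt\otimes d\P$-a.e.), then using that strong convergence of the integrands propagates to convergence of the stochastic integrals $X_T^{\pi^n}\to X_T^{\hat\pi}$ in probability, and finally invoking Fatou's lemma on the nonnegative integrands $\exp(-\gamma(X_T^{\pi^n}+\xi))$ to obtain $\mathbb{E}[\exp(-\gamma(X_T^{\hat\pi}+\xi))]\leq\liminf_n\mathbb{E}[\exp(-\gamma(X_T^{\pi^n}+\xi))]$. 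The boundedness of $C$ keeps all the wealth increments controlled, so the stochastic-integral convergence and the integrability bookkeeping stay routine; the only genuinely delicate point is making the a.e.\ convergence of the convex-combination integrands compatible with the pathwise definition of the stochastic integral, which is why the convexity of $\Cc$ is indispensable.
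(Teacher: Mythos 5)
Your proposal is correct and follows essentially the same route as the paper: a minimizing sequence in the weakly compact convex set $\Cc\subset L^2([0,T]\times\Omega)$, weak closedness of $\Cc$, and weak lower semicontinuity of $\phi(\pi)=\E[\exp(-\gamma(X_T^\pi+\xi))]$ obtained from its convexity (composition of the convex map $x\mapsto e^{-\gamma x}$ with the affine map $\pi\mapsto X_T^\pi$) together with strong-topology (semi)continuity, exactly the Ekeland--Temam argument the paper invokes. The only differences are cosmetic: you flesh out the strong lower semicontinuity via Mazur combinations, convergence in probability of the stochastic integrals, and Fatou (which the paper leaves as ``clearly continuous''), and your passing appeal to Remark \ref{Morlais} is unnecessary since your direct convexity argument already suffices.
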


\begin{proof}
Note that $\Cc$ is strongly closed and convex in $L^2([0,T]\times \Omega)$. Hence, $\Cc$ is closed for the weak topology. Moreover, since $\Cc$ is bounded, $\Cc$ is compact for the weak topology.\\
We define the function $\phi(\pi)=\E[\exp(-\gamma (X_T^\pi+\xi))]$ on $L^2([0,T]\times \Omega)$. This function is clearly convex and continuous for the strong topology in $L^2([0,T]\times \Omega)$. By classical results of convex analysis, it is s.c.i for the weak topology. Now, there exists a sequence $(\pi^n)_{n\in \N}$ of $\Cc$ such that $\phi(\pi^n)\rightarrow \min_{\pi\in\Cc}\phi(\pi)$ as $n\rightarrow \infty$.\\
Since $\Cc$ is weakly compact, there exists an extracted sequence still denoted by $(\pi^{n})$ which converges for the weak topology to $\hat{\pi}$ for some $\hat{\pi}\in \Cc$. Now, since $\phi$ is s.c.i for the weak topology, it implies that
\begin{equation*}
\phi(\hat{\pi})\leq \lim\inf \phi(\pi^n)=\min_{\pi\in\Cc}\phi(\pi).
\end{equation*}
Therefore, $\phi(\hat{\pi})=\inf_{\pi\in\Cc}\phi(\pi)$ and the proof is ended.
 \end{proof}
 
 We now want to characterize the value function $J(t)$ by a BSDE. For that we cannot apply the same technics as in the case of bounded coefficients. Indeed, since the coefficients are not necessarily bounded, the drivers of the associated BSDEs are no longer Lipschitz. Hence, the existence and uniqueness results and also the comparison theorem do not a priori hold. 
Therefore, as in Section \ref{section generale}, we will use dynamic programming technics of stochastic control but also the existence of an optimal strategy.

First, one can show easily that the set $\{J^\pi_t,~\pi \in \Cc_t\}$ is {\em stable by pairwise minimization}.\\
In order to have the dynamic programming principle, we now suppose that the coefficients satisfy the following integrability condition: 
\begin{Assumption} \label{condition integrabilite}
$(\beta_t)$ is uniformly bounded and 
\begin{equation*}
\E\Big[\exp\Big(a \int_0^T|\mu_t|dt\Big)\Big]+\E\Big[\exp\Big(b \int_0^T|\sigma_t|^2dt\Big)\Big]<\infty,
\end{equation*}
where $a= 2 \gamma ||\Cc||_\infty$ and $b= 8\gamma^2||\Cc||_\infty^2$.
\end{Assumption}

By classical computations, one can easily derive that for any $t \in [0,T]$ and any $\pi \in \Cc_t$, the following inequality holds
\begin{equation}\label{lemme integrabilite}
\E\big[ \sup_{s\in [t,T]} \exp\big(-\gamma X^{t,\pi}_s\big) \big] < \infty.
\end{equation}

Using this integrability property and similar arguments as in the proof of Proposition \ref{J(t) sous martingale}, the process $(J(t))$ can be shown to satisfy the {\em dynamic programming principle} over $\Cc$ that is: $(J(t))$ is the largest $\G$-adapted process such that $(\exp(-\gamma X_t^\pi) J(t))$ is a submartingale for any $\pi\in \Cc$ with $J(T)=\exp(-\gamma \xi)$. 

Also, the following classical optimality criterion holds:
\begin{Proposition}
Let $\hat{\pi}\in \Cc $. The two following assertions are equivalent:
\begin{enumerate}[(i)]
\item $\hat{\pi}\in \Cc $ is optimal that is $J (0)=\E[\exp(-\gamma(X_T^{\hat{\pi}}+\xi))]$ 
\item The process $(\exp(-\gamma X_t^{\hat{\pi}})J (t))$ is a martingale.
\end{enumerate}
\label{J martingale}
\end{Proposition}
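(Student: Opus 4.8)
The plan is to derive both implications from the dynamic programming principle stated just above, which guarantees that for \emph{every} admissible strategy $\pi\in\Cc$---in particular for $\hat\pi$---the process $(\exp(-\gamma X_t^{\hat\pi})J(t))_{0\le t\le T}$ is a submartingale, combined with the elementary fact that a submartingale whose expectation agrees at the two endpoints of the time interval is automatically a martingale. The two boundary values are the key ingredients. Since the initial wealth is $X_0^{\hat\pi}=0$ and $\Gc_0$ is trivial, the process starts from the deterministic value $\exp(-\gamma X_0^{\hat\pi})J(0)=J(0)$, while at the terminal time $J(T)=\exp(-\gamma\xi)$ gives $\exp(-\gamma X_T^{\hat\pi})J(T)=\exp(-\gamma(X_T^{\hat\pi}+\xi))$. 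I would also recall that $J(0)=\inf_{\pi\in\Cc}\E[\exp(-\gamma(X_T^{\pi}+\xi))]$, so that $\hat\pi$ is optimal precisely when $\E[\exp(-\gamma(X_T^{\hat\pi}+\xi))]=J(0)$.

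For the implication (ii)$\Rightarrow$(i) I would simply take expectations in the martingale identity: if $(\exp(-\gamma X_t^{\hat\pi})J(t))$ is a martingale, its expectation is constant, so $\E[\exp(-\gamma(X_T^{\hat\pi}+\xi))]=\E[\exp(-\gamma X_T^{\hat\pi})J(T)]=J(0)$, which is exactly optimality. For the converse (i)$\Rightarrow$(ii) I would start from the submartingale property furnished by the dynamic programming principle, which yields $\E[\exp(-\gamma(X_T^{\hat\pi}+\xi))]\ge J(0)$; optimality forces equality, so the submartingale $(\exp(-\gamma X_t^{\hat\pi})J(t))$ has the same expectation at $t=0$ and at $t=T$. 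Since the expectation of a submartingale is nondecreasing in $t$, it is then constant on $[0,T]$, and a submartingale with constant expectation is a martingale: indeed, any strict conditional inequality $\E[S_t\mid\Gc_s]>S_s$ on a set of positive measure for some $s<t$ would force $\E[S_t]>\E[S_s]$, contradicting constancy. This gives (ii).

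The only delicate point is the integrability needed to make these manipulations rigorous, and this is really where the work lies. From $\xi\ge 0$ and the bound $J(t)\le\E[\exp(-\gamma(X_T^{t,\hat\pi}+\xi))\mid\Gc_t]$ (obtained by testing the essential infimum defining $J(t)$ against the restriction of $\hat\pi$ to $[t,T]$), one gets $0\le\exp(-\gamma X_t^{\hat\pi})J(t)\le\E[\exp(-\gamma(X_T^{\hat\pi}+\xi))\mid\Gc_t]$, whose expectation is finite by the exponential integrability (\ref{lemme integrabilite}) with $s=T$, itself guaranteed by Assumption \ref{condition integrabilite}. Hence $(\exp(-\gamma X_t^{\hat\pi})J(t))$ is integrable at every time and the terminal variable $\exp(-\gamma(X_T^{\hat\pi}+\xi))$ lies in $L^1$, so that passing to expectations and the submartingale-to-martingale argument are both legitimate. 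Once this integrability is secured, the dynamic programming principle does all the structural work and the equivalence follows immediately.
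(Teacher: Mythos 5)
Your proof is correct and takes essentially the same route as the paper's: both directions rest on the dynamic programming (submartingale) principle for $(\exp(-\gamma X_t^{\pi})J(t))$, $\pi\in\Cc$, together with the elementary fact that a submartingale whose expectations at $t=0$ and $t=T$ coincide must be a martingale. The extra care you take over integrability (via (\ref{lemme integrabilite})) and the explicit verification that constant expectation forces the martingale property are details the paper leaves implicit, not a different argument.
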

The proof is given in Appendix \ref{Proof of Proposition J martingale}.

 \begin{Corollary}
 There exists a c\`{a}d-l\`{a}g version of $(J (t))$ which will be denoted by $(J _t)$.
\end{Corollary}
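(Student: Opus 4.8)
The plan is to mimic the regularization argument behind Proposition \ref{cadlag}, using as the only input the dynamic programming principle established just above: $(J(t))$ is the largest $\G$-adapted process such that $(\exp(-\gamma X^\pi_t)J(t))$ is a submartingale for every $\pi\in\Cc$, with terminal value $\exp(-\gamma\xi)$. Since $0\in C$ is not assumed here, one cannot make $(J(t))$ a submartingale directly by taking the zero strategy; instead I would work throughout with the products $Y^\pi_t:=\exp(-\gamma X^\pi_t)J(t)$ for genuine strategies $\pi\in\Cc$.

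First I would fix $\pi\in\Cc$ and check that $(Y^\pi_t)$ is a submartingale of class D. Taking $\pi$ itself in the essential infimum defining $J(t)$ gives $J(t)\le \E[\exp(-\gamma(X^{t,\pi}_T+\xi))\,|\,\Gc_t]$ a.s., and since $X^\pi_t$ is $\Gc_t$-measurable and $X^\pi_t+X^{t,\pi}_T=X^\pi_T$, this yields $Y^\pi_t\le \E[\exp(-\gamma(X^\pi_T+\xi))\,|\,\Gc_t]$ a.s. The right-hand side is a uniformly integrable martingale because $\E[\exp(-\gamma(X^\pi_T+\xi))]<\infty$, which follows from $\xi\ge 0$ together with the exponential integrability estimate (\ref{lemme integrabilite}); this is exactly where Assumption \ref{condition integrabilite} enters, replacing the crude bound $J\le 1$ available in the bounded-coefficient case. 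As $(Y^\pi_t)$ is nonnegative and dominated by a uniformly integrable martingale, it is of class D.

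Next, since the filtration $\G$ is right-continuous, the classical regularization theorem for submartingales (Dellacherie and Meyer \cite{delmey80}) ensures that $\lim_{s\downarrow t,\,s\in\mathbb{Q}}Y^\pi_s$ exists a.s. Because $\exp(-\gamma X^\pi_t)$ is c\`adl\`ag and strictly positive, this is equivalent to the a.s. existence of $J_{t+}:=\lim_{s\downarrow t,\,s\in\mathbb{Q}}J(s)$; I would then set $J_t:=J_{t+}$ for $t<T$ and $J_T:=\exp(-\gamma\xi)$, obtaining a c\`adl\`ag, $\G$-adapted process for which, for every $\pi\in\Cc$, the process $(\exp(-\gamma X^\pi_t)J_t)$ is the c\`adl\`ag regularization of $(Y^\pi_t)$, hence itself a c\`adl\`ag submartingale.

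Finally I would identify $(J_t)$ with $(J(t))$ by two inequalities. On one hand, submartingale regularization gives $Y^\pi_{t+}\ge Y^\pi_t$ a.s., and dividing by the positive factor $\exp(-\gamma X^\pi_t)$ yields $J_t\ge J(t)$ a.s. On the other hand, $(J_t)$ is $\G$-adapted, $(\exp(-\gamma X^\pi_t)J_t)$ is a submartingale for every $\pi\in\Cc$, and $J_T=\exp(-\gamma\xi)$, so the maximality characterization of $(J(t))$ forces $J_t\le J(t)$ a.s. Hence $J_t=J(t)$ a.s. for every $t\in[0,T]$, so $(J_t)$ is the desired c\`adl\`ag version. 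The main obstacle is the class-D step: without bounded coefficients the value function need not be bounded by $1$, so the uniform integrability needed both to pass to the right limits and to apply the regularization theorem must be extracted from the exponential integrability condition (\ref{lemme integrabilite}) rather than from an a priori bound.
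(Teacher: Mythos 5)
Your proof is correct, but it follows a genuinely different route from the paper's. The paper exploits what is special about the convex-compact case: by Proposition \ref{0-optimal} an optimal strategy $\hat{\pi}\in\Cc$ exists, and the optimality criterion (Proposition \ref{J martingale}) then says that $(\exp(-\gamma X_t^{\hat{\pi}})J(t))$ is a \emph{martingale}, so that $J(t)=\exp(\gamma X_t^{\hat{\pi}})\,\E[\exp(-\gamma(X_T^{\hat{\pi}}+\xi))\,|\,\Gc_t]$ for every $t$; since the filtration is right-continuous, the conditional-expectation martingale has a c\`adl\`ag version, and multiplying by the c\`adl\`ag factor $\exp(\gamma X_t^{\hat{\pi}})$ finishes the proof in two lines. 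You instead rerun the submartingale-regularization argument of Appendix \ref{preuve de cadlag} (right limits along rationals, then the two-sided identification $J(t^+)\geq J(t)$ from the submartingale property and $J(t^+)\leq J(t)$ from maximality), adapted to the constrained setting: you correctly replace the zero strategy by an arbitrary $\pi\in\Cc$ and divide by the positive c\`adl\`ag factor $\exp(-\gamma X_t^{\pi})$, and you correctly extract the uniform integrability needed for the limit passages from the exponential estimate (\ref{lemme integrabilite}) under Assumption \ref{condition integrabilite}, rather than from an a priori bound on $J$. What your approach buys is generality and robustness: it does not use the existence of an optimizer at all (so it would survive in settings where Proposition \ref{0-optimal} fails), and it does not assume $0\in C$ --- a point on which the paper itself is slightly cavalier, since the bound $0\leq J_t\leq 1$ stated just after the corollary implicitly uses the zero strategy. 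What the paper's approach buys is brevity: the convexity and weak compactness behind Proposition \ref{0-optimal} is precisely what makes the proof, in the authors' words, ``simple here,'' whereas your argument essentially duplicates the longer Appendix \ref{preuve de cadlag} machinery.
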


\begin{proof}
The proof is simple here because we have an existence result. More precisely, from Proposition \ref{0-optimal}, there exists $\hat{\pi}\in\Cc $ which is optimal for $J _0$. Hence, by the optimality criterium (Proposition \ref{J martingale}), we have $J (t)= \exp(-\gamma X_t^{\hat{\pi}})\E[\exp(-\gamma(X_T^{\hat{\pi}}+\xi))|\Gc_t]$ for any $t\in[0,T]$ (in other words, $\hat{\pi}$ is also optimal for $J (t)$). By classical results on the conditional expectation, there exists a c\`{a}d-l\`{a}g version denoted by $(J _t)$.
\end{proof}

Note that the process $(J _t)$ verifies
$0\leq J _t \leq 1,~\forall\,t\in[0,T]$ a.s. Using the dynamic programming principle and the existence of an optimal strategy, we state the following property:
\begin{Proposition}\label{proposition Jk}
There exist $Z \in L^2(W)$ and $U \in L^2(M)$ such that $(J _t,Z _t,U _t)$ is the maximal solution in $\Sc^{+,\infty} \times  L^2(W) \times  L^2(M)$ of BSDE (\ref{Jk edsr}).
\end{Proposition}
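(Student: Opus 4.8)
The plan is to argue as in Sections \ref{section generale} and \ref{section coefficient borne}, combining the dynamic programming principle with the \emph{existence} of an optimal strategy, which is the feature specific to the convex-compact setting. First I would note that $(J_t)$ is a bounded c\`ad-l\`ag submartingale (take $\pi=0$ in the dynamic programming principle), hence of class D, and apply the Doob-Meyer decomposition together with the martingale representation theorem (Lemma \ref{theoreme representation}) to write, as in (\ref{Doob-Meyer}),
\[
dJ_t=Z_tdW_t+U_tdM_t+dA_t,
\]
with $Z\in L^2(W)$, $U\in L^2(M)$ and $(A_t)$ nondecreasing predictable, $A_0=0$; since $0\le J_t\le 1$ one gets $A\in\Ac^2$. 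Writing the product rule, the drift of $\exp(-\gamma X^\pi_t)J_t$ equals $e^{-\gamma X^\pi_t}(dA_t-g^\pi_t\,dt)$, where
\[
g^\pi_t:=\lambda_t(1-e^{-\gamma\pi_t\beta_t})(U_t+J_t)+\gamma\pi_t(\sigma_tZ_t+\mu_tJ_t)-\tfrac{\gamma^2}{2}\pi_t^2\sigma_t^2J_t.
\]
Since this process is a submartingale for every $\pi\in\Cc$ (the dynamic programming principle over $\Cc$, valid here thanks to the integrability (\ref{lemme integrabilite})), the drift is nonnegative, so $dA_t\ge\esssup_{\pi\in\Cc}g^\pi_t\,dt$. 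Setting $dK_t$ equal to the difference of the two sides produces a nondecreasing $K\in\Ac^2$ and shows, exactly as in Proposition \ref{A egal sup}, that $(J_t,Z_t,U_t,K_t)$ is a \emph{subsolution} of (\ref{Jk edsr}).

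The crux is to prove $K\equiv 0$, and this is where the optimal strategy is used. By Proposition \ref{0-optimal} there is $\hat\pi\in\Cc$ optimal, and by the optimality criterion (Proposition \ref{J martingale}) the process $(\exp(-\gamma X^{\hat\pi}_t)J_t)$ is a \emph{martingale}. Its drift therefore vanishes; since the factor $\exp(-\gamma X^{\hat\pi}_t)$ is strictly positive, this forces $dA_t=g^{\hat\pi}_t\,dt$, so that $A$ is absolutely continuous with density $g^{\hat\pi}_t$. As trivially $g^{\hat\pi}_t\le \esssup_{\pi\in\Cc}g^{\pi}_t$, comparison with the previous inequality $dA_t\ge \esssup_{\pi}g^{\pi}_t\,dt$ forces equality $dt\otimes d\P$-a.e.; in particular $\hat\pi$ attains the essential supremum and $dK_t=0$. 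Hence $(J_t,Z_t,U_t)$ is a genuine \emph{solution} of (\ref{Jk edsr}).

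Finally, for maximality I would mimic the proof of Theorem \ref{solution maximale}. Given any solution $(\bar J_t,\bar Z_t,\bar U_t)$ of (\ref{Jk edsr}) in $\Sc^{+,\infty}\times L^2(W)\times L^2(M)$, the product rule shows that for each fixed $\pi\in\Cc$ the drift of $\exp(-\gamma X^\pi_t)\bar J_t$ is nonnegative, because the driver of (\ref{Jk edsr}) is the essential infimum over $\Cc$ and is therefore dominated by the term corresponding to $\pi$. Since $\pi$ is bounded ($C$ being compact) and the coefficients satisfy Assumption \ref{condition integrabilite}, the integrability (\ref{lemme integrabilite}) guarantees that the accompanying local martingale is a true martingale, so $(\exp(-\gamma X^\pi_t)\bar J_t)$ is a submartingale for every $\pi\in\Cc$ with terminal value $\exp(-\gamma\xi)$. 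By the dynamic programming principle, $(J_t)$ being the largest adapted process with this property, we conclude $\bar J_t\le J_t$ for all $t$, a.s.

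The main obstacle is the step $K\equiv 0$: unlike the bounded-coefficient case there is no comparison theorem available for the non-Lipschitz driver, so one cannot identify the solution by uniqueness; instead everything hinges on having an honest optimal strategy whose induced martingale property pins down the finite-variation part $A$. The subsidiary technical points — square-integrability of $A$ and, above all, upgrading the various local martingales to true martingales under the mere exponential integrability of the unbounded coefficients — are exactly what (\ref{lemme integrabilite}), hence Assumption \ref{condition integrabilite}, is there to provide.
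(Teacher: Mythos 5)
Your proposal is correct and follows essentially the same route as the paper's proof in Appendix G: Doob--Meyer decomposition plus martingale representation, the submartingale property over $\Cc$ giving $dA_t \geq \esssup_{\pi\in\Cc} g^\pi_t\,dt$, the optimal strategy $\hat\pi$ from Proposition \ref{0-optimal} together with the optimality criterion (Proposition \ref{J martingale}) forcing $dA_t = g^{\hat\pi}_t\,dt$ and hence equality (the paper states this directly rather than via an intermediate process $K$, but that is only a presentational difference), and maximality by the argument of Theorem \ref{solution maximale} using the integrability (\ref{lemme integrabilite}) to upgrade the local martingale. Your closing remarks correctly identify why the existence of $\hat\pi$, and not a comparison theorem, is what pins down the finite-variation part here.
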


The proof is given in Appendix \ref{appendice edsr}.
\begin{Remark}{\rm
It can be noted that the optimal strategy $\hat{\pi}\in \Cc $ for $J _0$ is characterized by the fact that $\hat{\pi}_t$ attains the essential infimum in (\ref{Jk edsr}), $dt\otimes d\P-a.e.$
}\end{Remark}

With Assumption \ref{condition integrabilite} it is possible to prove the unicity of the solution to BSDE (\ref{Jk edsr}).
\begin{Theorem}  \label{proposition unique solution}(Characterization of the value function)\\
The value function $(J_t,Z _t,U _t)$ is characterized as the \emph{unique solution} in $\Sc^{+,\infty} \times L^2(W)\times L^2(M)$ of BSDE (\ref{Jk edsr}). 
\end{Theorem}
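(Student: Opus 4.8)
The plan is to upgrade the maximality statement of Proposition~\ref{proposition Jk} to genuine uniqueness. Since we already know that $(J_t,Z_t,U_t)$ is the \emph{maximal} solution in $\Sc^{+,\infty}\times L^2(W)\times L^2(M)$ of BSDE~(\ref{Jk edsr}), any solution $(\bar{Y}_t,\bar{Z}_t,\bar{U}_t)$ in this space automatically satisfies $\bar{Y}_t\leq J_t$ for all $t\in[0,T]$ a.s. Hence it suffices to establish the reverse inequality $\bar{Y}_t\geq J_t$ a.s., which forces $\bar{Y}=J$ and gives uniqueness. The idea for this lower bound is to run the candidate solution along the strategy that makes its driver attain its essential infimum, so that the associated exponential process is a local martingale, and then to identify $\bar{Y}$ with a conditional expectation of the form defining some $J^{\hat\pi}$.

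First I would produce the optimal feedback strategy. For a fixed solution $(\bar{Y},\bar{Z},\bar{U})$, the map
\[
\pi\longmapsto \tfrac{\gamma^2}{2}\pi^2\sigma_t^2\bar{Y}_t-\gamma\pi(\mu_t\bar{Y}_t+\sigma_t\bar{Z}_t)-\lambda_t(1-e^{-\gamma\pi\beta_t})(\bar{Y}_t+\bar{U}_t)
\]
is continuous on the convex-compact set $C$, so its minimum over $C$ is attained, and by a measurable selection theorem there is a predictable process $\hat\pi\in\Cc$ attaining the essential infimum in the driver of~(\ref{Jk edsr}) $dt\otimes d\P$-a.e. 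Applying the product rule to $\exp(-\gamma X_t^{\hat\pi})\bar{Y}_t$, exactly as in the proof of Proposition~\ref{bc}, the finite-variation part equals $\exp(-\gamma X_t^{\hat\pi})$ times the driver evaluated along $\hat\pi$ minus the essential infimum, and this vanishes by the choice of $\hat\pi$. Therefore $(\exp(-\gamma X_t^{\hat\pi})\bar{Y}_t)_{0\leq t\leq T}$ is a local martingale.

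The main obstacle, and the step where Assumption~\ref{condition integrabilite} is essential, is to show that this local martingale is in fact a true martingale. Since $\bar{Y}\in\Sc^{+,\infty}$ it is nonnegative and bounded by a constant $c$, so
\[
0\leq \exp(-\gamma X_t^{\hat\pi})\bar{Y}_t\leq c\,\sup_{s\in[0,T]}\exp(-\gamma X_s^{\hat\pi}),
\]
and the right-hand side is integrable by the estimate~(\ref{lemme integrabilite}), which holds precisely because $\hat\pi\in\Cc$ and the coefficients satisfy the exponential integrability of Assumption~\ref{condition integrabilite}. Being dominated by a single integrable random variable, the nonnegative local martingale is of class (D), hence a genuine uniformly integrable martingale. (Equivalently, one may localize along $\tau_n=\inf\{s:\,|X_s^{\hat\pi}|\geq n\}$ and pass to the limit by dominated convergence, as in Lemma~\ref{loca}.)

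It then remains to conclude. Using the martingale property, $\bar{Y}_T=\exp(-\gamma\xi)$, and $X_T^{\hat\pi}-X_t^{\hat\pi}=X_T^{t,\hat\pi}$, we obtain for every $t$
\[
\bar{Y}_t=\E\big[\exp(-\gamma(X_T^{t,\hat\pi}+\xi))\,\big|\,\Gc_t\big]=J^{\hat\pi}_t\quad\text{a.s.}
\]
Since $\hat\pi\in\Cc_t$ and $J_t=\essinf_{\pi\in\Cc_t}J^\pi_t$, this yields $\bar{Y}_t\geq J_t$ a.s. Combined with $\bar{Y}_t\leq J_t$ coming from maximality, we get $\bar{Y}_t=J_t$ for all $t\in[0,T]$ a.s.; the components $(\bar{Z},\bar{U})$ then coincide with $(Z,U)$ by the martingale representation theorem (Lemma~\ref{theoreme representation}) applied to the common martingale part. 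This proves that $(J_t,Z_t,U_t)$ is the unique solution of~(\ref{Jk edsr}) in $\Sc^{+,\infty}\times L^2(W)\times L^2(M)$.
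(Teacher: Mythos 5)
Your proposal is correct and follows essentially the same route as the paper's proof: a measurable selection of $\hat\pi$ attaining the infimum in the driver, the product rule showing $(\exp(-\gamma X_t^{\hat\pi})\bar{Y}_t)$ is a local martingale, the integrability from Assumption \ref{condition integrabilite} (via the bound (\ref{lemme integrabilite}) and boundedness of $\bar{Y}$) upgrading it to a true martingale, hence $\bar{Y}_t=J^{\hat\pi}_t\geq J_t$, combined with the maximality of Proposition \ref{proposition Jk} for the reverse inequality. The only cosmetic difference is that you make explicit the identification of $(\bar{Z},\bar{U})$ with $(Z,U)$ via the martingale representation theorem, which the paper leaves implicit.
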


\begin{proof}
Let $(\bar{J}_t,\bar{Z}_t,\bar{U}_t)$ be a solution of BSDE (\ref{Jk edsr}). Using a measurable selection theorem, we know that there exists at least a strategy $ \bar{\pi}\in\Cc $ such that $dt\otimes d\P-a.e.$
\begin{multline*}
\essinf_{\pi\in\Cc }\Big\{\frac{\gamma^2}{2}\pi_t^2\sigma_t^2\bar{J}_t-\gamma\pi_t(\mu_t\bar{J}_t+\sigma_t\bar{Z}_t)-\lambda_t(1-e^{-\gamma\pi_t\beta_t})(\bar{J}_t+\bar{U}_t)\Big\}\\
=\frac{\gamma^2}{2} \bar{\pi}^2\sigma_t^2\bar{J}_t-\gamma \bar{\pi}_t(\mu_t\bar{J}_t+\sigma_t\bar{Z}_t)-\lambda_t(1-e^{-\gamma \bar{\pi}_t\beta_t})(\bar{J}_t+\bar{U}_t).
\end{multline*}
Thus (\ref{Jk edsr}) can be written under the form 
\begin{equation*}
d\bar{J}_t=\Big\{\gamma \bar{\pi}_t(\mu_t\bar{J}_t+\sigma_t\bar{Z}_t)+\lambda_t(1-e^{-\gamma \bar{\pi}_t\beta_t})(\bar{J}_t+\bar{U}_t)-\frac{\gamma^2}{2} \bar{\pi}^2\sigma_t^2\bar{J}_t\Big\}dt+\bar{Z}_tdW_t+\bar{U}_tdM_t.
\end{equation*}
Let us introduce by $B_t=\exp(-\gamma X_t^{ \bar{\pi}})$. It\^{o}'s formula and rule product give
\begin{equation*}
d(B_t\bar{J}_t)=\big(B_t\bar{Z}_t-\gamma \sigma_t \bar{\pi}_tB_t\bar{J}_t\big)dW_t+\big[(e^{-\gamma \beta_t \bar{\pi}_t}-1)B_{t^-}\bar{J}_t+e^{-\gamma \beta_t \bar{\pi}_t}B_{t^-}\bar{U}_t\big]dM_t.
\end{equation*} 
By Assumption \ref{condition integrabilite} and since $(\bar{J}_t)$ is bounded, one can derive that the local martingale $(B_t\bar{J}_t)$ satisfies $\E[\sup_{0\leq t \leq T} |B_t\bar{J}_t|] < \infty$. Hence, $(B_t\bar{J}_t)$ is a martingale. Thus, 
\begin{equation*}
\bar{J}_t=\E\Big[\frac{B_T}{B_t}e^{-\gamma\xi}\Big|\Gc_t\Big]=\E\big[\exp(-\gamma(X_T^{t, \bar{\pi}}+\xi))\big|\Gc_t\big].
\end{equation*}
Thus,
\begin{equation*}
\bar{J}_t\geq \essinf_{\pi\in \Cc }\E\big[\exp(-\gamma(X_T^{t,\pi}+\xi))\big|\Gc_t\big]=J_t .
\end{equation*}
Now, by the previous Proposition \ref{proposition Jk}, $(J_t )$ is the maximal solution of BSDE (\ref{Jk edsr}). This gives that for any $t\in [0,T]$, $J_t \leq \bar{J}_t,~\P-a.s.$ Hence, $J_t = \bar{J}_t,~\forall \, t\in [0,T],~ \P-a.s.$, and $\bar \pi$ is optimal and the proof is ended.

\end{proof}

\subsection{The non constrained case}
In this section, the set of admissible strategies is given by ${\cal A}$.
Under some exponential integrability conditions on the coefficients, we can also precise the characterization of the value function 
$(J_t)$ as the limit of $(J^k_t)_{k\in \N}$ as $k$ tends to $+ \infty$.

\begin{Assumption}\label{hypo integrabilite}{\rm
$(\beta_t)$ is uniformly bounded, $\E[\int_0^T\lambda_tdt]<\infty$ and for any $p>0$ we have
\begin{equation*}
\E\Big[\exp\Big(p\int_0^T|\mu_t|dt\Big)\Big]+\E\Big[\exp\Big(p\int_0^T|\sigma_t|^2dt\Big)\Big]<\infty.
\end{equation*}
}\end{Assumption}
Again, for each $k \in \N$, we consider the set $\Bc^k_t$ of strategies beginning at $t$ and valued in $[-k,k]$. Since Assumption \ref{hypo integrabilite} is satisfied, the integrability condition (\ref{lemme integrabilite 2}) holds and hence, for each $k\in \N$,
\begin{equation*}
{J}^k_t = \essinf_{\pi \in \Bc^k_t} \E\big[\exp\big(-\gamma(X^{t,\pi}_T + \xi)\big) \big| \Gc_t\big]~a.s.
\end{equation*}
In this case, for each $k\in \N$, the process $(J^k_t)$ is characterized as the unique solution of BSDE (\ref{Jk edsr}) with $\Cc=\Bc^k$. Therefore, we have:

\begin{Theorem} \label{limdeux} (Characterization of the value function)\\
The value function is characterized as the {\em nonincreasing limit} of the sequence $(J^k_t)_{k\in \N}$ as $k$ tends to $+ \infty$, which are {\em the unique solutions} of BSDEs (\ref{Jk edsr}) with $\Cc= \Bc^k$ for each $k \in \N$.
\end{Theorem}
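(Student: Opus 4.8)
The plan is to assemble the claim from pieces that have already been established, since the theorem is essentially a synthesis result. The nonincreasing-limit half costs nothing new: Theorem \ref{J limite}, which was proved without any hypothesis on the coefficients, already gives $J_t=\lim_{k\to\infty}\downarrow J^k_t$ a.s. for every $t\in[0,T]$. So I would invoke it verbatim and devote the remaining work to characterizing each approximating process $(J^k_t)$ as the \emph{unique} solution of BSDE (\ref{Jk edsr}) with $\Cc=\Bc^k$. This is where the extra strength of Assumption \ref{hypo integrabilite} over the unbounded-coefficient setting of Section \ref{section generale} is exploited.

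First I would observe that for each fixed $k$ the set $\Bc^k$ of strategies valued in $[-k,k]$ is a \emph{convex}-compact set (a closed interval), so it falls exactly into the framework of Subsection 7.1, whose culminating result is Theorem \ref{proposition unique solution}. The only thing to verify is that the hypothesis of that theorem, namely Assumption \ref{condition integrabilite}, holds when $\Cc$ is taken to be $\Bc^k$. Here $||\Bc^k||_\infty=k$, so the relevant constants are $a=2\gamma k$ and $b=8\gamma^2 k^2$, both fixed positive numbers. Since Assumption \ref{hypo integrabilite} asserts $\E[\exp(p\int_0^T|\mu_t|dt)]+\E[\exp(p\int_0^T|\sigma_t|^2dt)]<\infty$ for \emph{every} $p>0$, it holds in particular for $p=a$ and $p=b$, and hence Assumption \ref{condition integrabilite} is satisfied for $\Bc^k$. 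I expect this to be the one genuinely load-bearing point: because the required exponent grows with $k$ (quadratically in $k$ for the $\sigma$-term), a single fixed integrability exponent would not suffice, and it is precisely the ``for all $p$'' formulation of Assumption \ref{hypo integrabilite} that makes the per-$k$ application legitimate. Everything else is bookkeeping.

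With Assumption \ref{condition integrabilite} verified for $\Bc^k$, I would apply Theorem \ref{proposition unique solution} directly: it yields that the value function associated with $\Bc^k$, namely $\essinf_{\pi\in\Bc^k_t}\E[\exp(-\gamma(X^{t,\pi}_T+\xi))|\Gc_t]$, is the unique solution in $\Sc^{+,\infty}\times L^2(W)\times L^2(M)$ of BSDE (\ref{Jk edsr}) with $\Cc=\Bc^k$. To reconcile this with the process $(J^k_t)$ defined in Section \ref{section approximation} as an essential infimum over $\Ac^k_t$ rather than over $\Bc^k_t$, I would invoke the identity $J^k_t=\essinf_{\pi\in\Bc^k_t}\E[\exp(-\gamma(X^{t,\pi}_T+\xi))|\Gc_t]$ noted just before the statement, which follows from the uniform-integrability bound (\ref{lemme integrabilite 2}) by the localization argument of Lemma \ref{cinq}. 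Combining the uniqueness statement for each $k$ with the limit supplied by Theorem \ref{J limite} then finishes the proof.
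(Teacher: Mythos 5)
Your proposal is correct and follows essentially the same route as the paper: the paper likewise obtains the limit from Theorem \ref{J limite}, identifies $J^k_t$ with the essential infimum over $\Bc^k_t$ via the integrability condition (\ref{lemme integrabilite 2}) (which holds under Assumption \ref{hypo integrabilite}), and then applies the convex-compact case of Section 7.1 (Theorem \ref{proposition unique solution}) to each $\Bc^k$. Your observation that the ``for all $p>0$'' formulation of Assumption \ref{hypo integrabilite} is exactly what allows Assumption \ref{condition integrabilite} to be verified with the $k$-dependent constants $a=2\gamma k$ and $b=8\gamma^2 k^2$ correctly pinpoints the load-bearing step, which the paper leaves implicit.
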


\section[Indifference pricing]{Indifference pricing via the maximization of exponential utility}
\label{indifference pricing}
\setcounter{equation}{0} \setcounter{Assumption}{0}
\setcounter{Theorem}{0} \setcounter{Proposition}{0}
\setcounter{Corollary}{0} \setcounter{Lemma}{0}
\setcounter{Definition}{0} \setcounter{Remark}{0}

We first present a general framework of the Hodges and Neuberger \cite{hodneu89} approach with some strictly increasing, strictly concave and continuously differentiable mapping $U$, defined on $\R$. We solve explicitly the problem in the case of exponential utility. \\
\indent The Hodges approach to pricing of unhedgeable claims is a utility-based approach and can be summarized as follows the issue at hand is to assess the value of some (defaultable) claim $\xi$ as seen from the perspective of an investor who optimizes his behavior relative to some utility function, say $U$. The investor has two choices 
\begin{itemize}
\item he only invests in the risk-free asset and in the risky asset, in this case the associated optimization problem is
\begin{equation*}
V(x,0)=\sup\limits_{\pi}\E\big[U(X_T^{x,\pi})\big],
\end{equation*}
\item he also invests in the contingent claim, whose price is $p$ at $0$, in this case the associated optimization problem is
\begin{equation*}
V(x-p,\xi)=\sup\limits_{\pi}\E\big[U(X_T^{x-p,\pi}+\xi)\big].
\end{equation*}
\end{itemize}  
\begin{Definition}
{\rm For a given initial endowment $x$, the Hodges buying price of a defaultable claim $\xi$ is the price $p$ such that the investor's value functions are indifferent between holding and not holding the contingent claim $\xi$, i.e.
\begin{equation*}
V(x,0)=V(x-p,\xi).
\end{equation*}
}\end{Definition}

\begin{Remark}{\rm
We can define the Hodges selling price $p_*$ of $\xi$ by considering $-p$, where $p$ is the buying price of $-\xi$, as specified in the previous definition.
}\end{Remark}

In the rest of this section, we consider the case of an exponential utility function. With our notation, if the investor buys the contingent claim at the price $p$ and invests the rest of his money in the risk-free asset and in the risky asset, the value function is equal to
\begin{equation*}
V(x-p,\xi)= \exp(-\gamma(x-p))V(0,\xi).
\end{equation*}
If he invests all his money in the risk-free asset and in the risky asset, the value function is equal to
\begin{equation*}
V(x,0)= \exp(-\gamma x)V(0,0).
\end{equation*}
Hence, the Hodges price for the contingent claim $\xi$ is given by the formula
 \begin{equation*}
 p=\frac{1}{\gamma}\ln\Big(\frac{V(0,0)}{V(0,\xi) }\Big) =\frac{1}{\gamma}\ln\Big(\frac{J(0,0)}{J(0,\xi) }\Big).
 \end{equation*}
 since $J(0,\xi)= -V(0,\xi)$.

In the case of Section 3, that is where the strategies take their values in a compact set $C$, we have: 
\begin{Proposition} \label{trois}(Compact case) Suppose that the coefficients are bounded.
Let $(J^\xi_t)$ be the solution of Lipschitz BSDE (\ref{Jk edsr}) and $(J^0_t)$  be the solution of Lipschitz BSDE (\ref{Jk edsr}) with $\xi=0$.  The Hodges price for the contingent claim $\xi$ is given by the formula
 \begin{equation}\label{id}
 p=\frac{1}{\gamma}\ln\Big(\frac{J_0^0}{J_0^\xi}\Big).
 \end{equation}
\end{Proposition}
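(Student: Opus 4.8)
The plan is to observe that the proposition is an immediate consequence of the general Hodges price formula derived just above its statement, combined with the BSDE characterization of the value function established in the compact case (Proposition \ref{bc}). Recall that for the exponential utility the scaling properties $V(x-p,\xi) = \exp(-\gamma(x-p))V(0,\xi)$ and $V(x,0) = \exp(-\gamma x)V(0,0)$ reduce the indifference equation $V(x,0) = V(x-p,\xi)$ to $V(0,0) = \exp(\gamma p)V(0,\xi)$; using $J(0,\xi) = -V(0,\xi)$ and $J(0,0)=-V(0,0)$, this gives $p = \frac{1}{\gamma}\ln\big(J(0,0)/J(0,\xi)\big)$, so the only remaining task is to identify the two quantities $J(0,\xi)$ and $J(0,0)$ with the initial values of the relevant Lipschitz BSDEs.

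First I would apply Proposition \ref{bc} to the liability $\xi$: since the coefficients are bounded and the strategies are valued in the compact set $C$, the dynamic value function satisfies $J(t,\xi) = Y_t$, where $(Y_t,Z_t,U_t)$ is the unique solution of the Lipschitz BSDE (\ref{Jk edsr}) with terminal condition $\exp(-\gamma\xi)$. Evaluating at $t=0$ identifies $J(0,\xi)$ with $J_0^\xi$. Next I would apply the same proposition with $\xi$ replaced by $0$: the terminal condition of (\ref{Jk edsr}) then becomes $\exp(-\gamma\cdot 0)=1$, and Proposition \ref{bc} yields $J(0,0) = J_0^0$, the initial value of the solution of (\ref{Jk edsr}) with zero terminal liability. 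Substituting both identifications into the Hodges formula gives $p = \frac{1}{\gamma}\ln\big(J_0^0/J_0^\xi\big)$, which is exactly (\ref{id}).

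There is essentially no analytic obstacle here, as all the substantive work was carried out in Proposition \ref{bc}; the computation is a direct corollary. The only point requiring a line of care is to check that the hypotheses of Proposition \ref{bc} (bounded coefficients, compactness of $C$, and $\xi\in L^2(\Gc_T)$) hold in both instances, the case $\xi=0$ being trivial since the constant terminal condition is bounded. Thus the proof amounts to invoking the earlier characterization twice and reading off the ratio.
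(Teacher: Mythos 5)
Your proposal is correct and matches the paper's (implicit) argument exactly: the paper states Proposition \ref{trois} without a separate proof precisely because it follows from the Hodges formula $p=\frac{1}{\gamma}\ln\big(J(0,0)/J(0,\xi)\big)$ derived just above, combined with the identification $J(0,\xi)=J_0^\xi$ and $J(0,0)=J_0^0$ from Proposition \ref{bc}. Your check of the hypotheses of Proposition \ref{bc} in both instances (including the trivial case $\xi=0$) is the only care the argument requires, and you handle it correctly.
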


 \begin{Remark}{\rm
Consequently,  the indifference price is simply given in terms of two Lipschitz BSDEs. This leads to possible numerical computations by applying the results of Bouchard and Elie \cite{boueli08}.\\
Note that in the case where the coefficients are not supposed to be bounded but only satisfy some exponential integrability conditions (see Section 7), Proposition \ref{trois} still holds except that BSDE (\ref{Jk edsr}) is no longer Lipschitz (but still admits a unique solution).

}\end{Remark}

In the non constrained case, without any assumptions on the coefficients, we have
\begin{Proposition}(Non constrained case) 
Let $(J_t^\xi)$ (resp. $(J_t^0)$) be the maximal subsolution of BSDE (\ref{EDSR J}) (resp. with $\xi=0$). The Hodges price for the contingent claim $\xi$ associated with $\Ac$ is given by formula (\ref{id}).
\end{Proposition}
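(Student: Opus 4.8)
The plan is to combine the general indifference-price formula already derived in this section with the characterization of the value function as a maximal subsolution obtained in Section~\ref{section generale}. First I would recall that, for the exponential utility, the defining relation $V(x,0)=V(x-p,\xi)$ together with the scaling identities $V(x,0)=e^{-\gamma x}V(0,0)$ and $V(x-p,\xi)=e^{-\gamma(x-p)}V(0,\xi)$ forces $e^{\gamma p}=V(0,0)/V(0,\xi)$. Since $V(0,\xi)=-J(0,\xi)$, where $J(0,\xi)$ is the value function defined by the essential infimum (\ref{iun}) over the admissible set $\Ac$, the ratio $V(0,0)/V(0,\xi)$ equals $J(0,0)/J(0,\xi)$, and solving for $p$ yields $p=\frac{1}{\gamma}\ln\big(J(0,0)/J(0,\xi)\big)$. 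This is precisely the general expression displayed earlier, so this first step requires no new computation.

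Next I would invoke Theorem~\ref{solution maximale}, which identifies the dynamic value function $(J_t)$ associated with $\Ac$ and terminal condition $\exp(-\gamma\xi)$ as the first component of the \emph{maximal subsolution} $(J_t,Z_t,U_t,K_t)$ of BSDE~(\ref{EDSR J}) in $\Sc^{+,\infty}\times L^2(W)\times L^2(M)\times\Ac^2$. Evaluating at $t=0$ gives $J(0,\xi)=J^\xi_0$, the value at the origin of this maximal subsolution. Applying the same theorem to the reference problem, in which the liability is $0$ and the terminal condition of BSDE~(\ref{EDSR J}) becomes $\exp(-\gamma\cdot 0)=1$, yields $J(0,0)=J^0_0$, where $(J^0_t)$ denotes the maximal subsolution of that BSDE.

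Substituting these two identities into the general formula produces exactly (\ref{id}), which completes the argument. There is essentially no technical obstacle at this stage, since all the analytic effort has already been expended in establishing Theorem~\ref{solution maximale}. The only point worth verifying is that the reference liability $\xi=0$ falls within the admissible class for which the maximal-subsolution characterization holds; this is immediate, as $0$ is non-negative and belongs to $L^2(\Gc_T)$, so Theorem~\ref{solution maximale} applies verbatim to both the claim $\xi$ and the reference problem, legitimizing the replacement of $J(0,\xi)$ and $J(0,0)$ by $J^\xi_0$ and $J^0_0$.
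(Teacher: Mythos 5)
Your proposal is correct and follows exactly the route the paper intends: the paper states this proposition without proof precisely because it is the immediate combination of the general Hodges-price formula $p=\frac{1}{\gamma}\ln\big(J(0,0)/J(0,\xi)\big)$ derived just above it with Theorem \ref{solution maximale} applied to the terminal conditions $\exp(-\gamma\xi)$ and $1$. Your additional check that $\xi=0$ satisfies the standing hypotheses (non-negative, in $L^2(\Gc_T)$) is a sensible touch, and nothing is missing.
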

Note that if the coefficient $\beta$ is bounded (but not necessarily the others), the indifference price associated with the set $\Theta_2$ of Delbaen \emph{et al.} \cite{deletal02} and that associated with the set $\Ac$ coincide because the value functions $V(x, 0)$ and $V(x-p, \xi)$ are the same for $\Theta_2$ or $\Ac$. 

Recall that in the case of bounded coefficients, $(J_t^\xi)$ is the maximal solution of BSDE (\ref{EDSR J}). Also, in this case, we have:
\begin{Proposition}\label{app}(Approximation of the indifference price) Suppose that the coefficients are bounded.
 The Hodges price $p$ for the contingent claim $\xi$ associated with $\Theta_2$ (or equivalently with $\Ac$) satisfies
 \begin{equation*}
 p=\lim\limits_{k\rightarrow \infty}p^k,
 \end{equation*}
 where for each $k$, $p^k$ is the Hodges price associated with the simple set $\Bc^k$ of all strategies bounded by $k$. 
 For each $k$, $p^k$ is given by 
 \begin{equation*}
 p^k=\frac{1}{\gamma}\ln\Big(\frac{J^{k,0}_0}{J^{k,\xi}_0}\Big),
 \end{equation*}
 where $(J^{k,\xi}_t)$ (resp. $(J^{k,0}_t)$) is the solution of Lipschitz BSDE (\ref{Jk edsr}) (resp. with $\xi=0$) with $\Cc=\Bc^k$. 
 \end{Proposition}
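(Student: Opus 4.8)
The plan is to combine the indifference-price formula coming from the Hodges construction with the monotone approximation of the value function established in Theorem~\ref{lim}, and then to reduce everything to a continuity argument for the logarithm. First I would record that, exactly as in the compact case (Proposition~\ref{trois}), the Hodges price relative to the admissible set $\Bc^k=[-k,k]$ is
\begin{equation*}
p^k=\frac{1}{\gamma}\ln\Big(\frac{J^{k,0}_0}{J^{k,\xi}_0}\Big).
\end{equation*}
This is legitimate because $\Bc^k$ is compact and does not depend on the initial wealth, so the exponential scaling of the value function used in Section~\ref{indifference pricing} applies; and because the coefficients are bounded, Proposition~\ref{bc} identifies $(J^{k,\xi}_t)$ and $(J^{k,0}_t)$ as the solutions of the Lipschitz BSDEs~(\ref{Jk edsr}) over $\Bc^k$ associated respectively with the claim $\xi$ and with the trivial claim $0$. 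Likewise the price $p$ associated with $\Ac$ (equivalently with $\Theta_2$, since $\beta$ is bounded) equals $\frac1\gamma\ln(J^0_0/J^\xi_0)$. Since $\Gc_0$ is trivial up to null sets, all the quantities $J^{k,\xi}_0,J^{k,0}_0,J^\xi_0,J^0_0$ are deterministic constants, so the statement reduces to the numerical convergence $p^k\to p$.

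Next I would invoke the approximation theorem twice. Applying Theorem~\ref{lim} to the claim $\xi$ gives $J^{k,\xi}_0\downarrow J^\xi_0$ as $k\to\infty$, and applying it to the trivial claim $0$ gives $J^{k,0}_0\downarrow J^0_0$. Provided the two limits $J^\xi_0$ and $J^0_0$ are strictly positive, the map $(a,b)\mapsto\frac1\gamma\ln(a/b)$ is continuous at $(J^0_0,J^\xi_0)$, whence $p^k\to p$, which is the assertion. Strict positivity of the limits also guarantees, via $J^{k,\cdot}_0\ge J^{\cdot}_0>0$, that each $p^k$ is well defined.

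The hard part is therefore the strict positivity $J^\xi_0>0$ and $J^0_0>0$ of the limiting value functions, which is what allows the logarithm to pass to the limit; I would obtain it from a dual lower bound based on the no-arbitrage assumption. Fix a martingale measure $\Q\in\P_f$, which exists by the standing assumption $\E[\exp(-\int_0^T\alpha_sdW_s-\frac12\int_0^T\alpha_t^2dt)]=1$ with $\alpha_t=(\mu_t+\lambda_t\beta_t)/\sigma_t$, and write $q=d\Q/d\P$; its relative entropy $H:=\E_\Q[\ln q]$ is finite since $\Q\in\P_f$. For any $\pi\in\Ac$ the wealth $X^\pi$ is bounded below (by definition of $\Ac$) and is a $\Q$-local martingale, hence a $\Q$-supermartingale, so $\E_\Q[X^\pi_T]\le X^\pi_0=0$. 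Jensen's inequality applied under $\Q$ to the convex function $\exp$ then gives
\begin{equation*}
\E\big[\exp(-\gamma(X^\pi_T+\xi))\big]=\E_\Q\big[q^{-1}\exp(-\gamma(X^\pi_T+\xi))\big]\ge\exp\big(-H-\gamma\E_\Q[X^\pi_T+\xi]\big)\ge\exp\big(-H-\gamma\E_\Q[\xi]\big),
\end{equation*}
where the last inequality uses $\E_\Q[X^\pi_T]\le0$. The right-hand side is a strictly positive constant independent of $\pi$, finite because $H<\infty$ and $\E_\Q[\xi]<\infty$ (the latter since $\xi\in L^2(\Gc_T)$ and $q\in L^2(\P)$ under bounded coefficients); taking the infimum over $\pi\in\Ac$ yields $J^\xi_0>0$, and the same computation with $\xi=0$ yields $J^0_0>0$. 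This closes the argument.
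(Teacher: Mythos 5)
Your proof is correct and its skeleton is exactly the argument the paper intends: the paper states this proposition without an explicit proof, as an immediate corollary of the Hodges formula $p=\frac{1}{\gamma}\ln\big(J(0,0)/J(0,\xi)\big)$ of Section \ref{indifference pricing} (applied over $\Bc^k$, legitimate since $\Bc^k$ is wealth-independent and Proposition \ref{bc} identifies $(J^{k,\xi}_t)$, $(J^{k,0}_t)$ as solutions of the Lipschitz BSDE (\ref{Jk edsr}), with Lemma \ref{cinq} matching these to the value functions over $\Ac^k$) together with two applications of Theorem \ref{lim}, to $\xi$ and to $0$, and continuity of $(a,b)\mapsto\frac{1}{\gamma}\ln(a/b)$. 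What you add, and what the paper leaves entirely implicit, is the strict positivity $J^\xi_0>0$, $J^0_0>0$ needed to pass the logarithm to the limit: the paper effectively delegates this to the duality results of Delbaen \emph{et al.} \cite{deletal02}, whose entropic dual representation makes $V(0,\xi)$ strictly negative, and your self-contained Jensen bound $\E\big[\exp(-\gamma(X^\pi_T+\xi))\big]\geq\exp\big(-H(\Q|\P)-\gamma\E_\Q[\xi]\big)$, uniform in $\pi\in\Ac$ via the $\Q$-supermartingale property of the bounded-below wealth, reproduces precisely that dual lower bound — a worthwhile explicit step. Two caveats on your justification, though neither undermines the proof within the paper's own standing framework: first, the assumption $\E[\exp(-\int_0^T\alpha_s dW_s-\frac{1}{2}\int_0^T\alpha_s^2\,ds)]=1$ yields an equivalent martingale measure but not, by itself, one of \emph{finite entropy}, and neither $H(\Q|\P)<\infty$ nor $q\in L^2(\P)$ (hence $\E_\Q[\xi]<\infty$) follows from bounded $\mu,\sigma,\beta,\lambda$ alone, since $\alpha=(\mu+\lambda\beta)/\sigma$ need not be bounded unless $\sigma$ is bounded away from zero; the paper itself silently assumes $\alpha$ bounded (cf.\ Remark \ref{egale A'}) and, throughout its use of $\Theta_2$, that $\P_f\neq\emptyset$ with $\E_\Q[\xi]<\infty$ — so you should state these as standing hypotheses rather than derive them from "bounded coefficients". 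Second, your first equality $\E_\P[F]=\E_\Q[q^{-1}F]$ requires $\Q\sim\P$; for a merely absolutely continuous $\Q\in\P_f$ it should read $\E_\P[F]\geq\E_\Q[q^{-1}F]$, which fortunately points in the direction your argument needs.
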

   
 \begin{Remark}{\rm
 This leads to possible numerical computations in order to approximate the indifference price.
Also, note that in the case where the coefficients are not supposed to be bounded but only satisfy some exponential integrability conditions (see Section 7), Proposition \ref{app} still holds except that BSDE (\ref{Jk edsr}) is no longer Lipschitz (but still admits a unique solution).

 }
 \end{Remark}



\section{Generalizations}
\setcounter{equation}{0} \setcounter{Assumption}{0}
\setcounter{Theorem}{0} \setcounter{Proposition}{0}
\setcounter{Corollary}{0} \setcounter{Lemma}{0}
\setcounter{Definition}{0} \setcounter{Remark}{0}

In this section, we give some generalizations of the previous results. The proofs are not given, but they are identical to the proofs of the case with a default time and a stock. In all this section, elements of $\R^n$, $n\geq 1$, are identified to column vectors, the superscript $'$ stands for the transposition, $||.||$ the square norm, $\1$ the vector of $\R^n$ such that each component of this vector is equal to $1$. Let $U$ and $V$ two vectors of $\R^n$, $U*V$ denotes the vector such that $(U*V)_i=U_iV_i$ for each $i\in\{1,\dots,n\}$. Let $X\in \R^n$, $diag(X)$ is the matrix such that $diag(X)_{ij}=X_i$ if $i= j$ else $diag(X)_{ij}=0$.

\subsection{Several default times and several stocks}     \label{plusieur defaut}                
We consider a market defined on the complete probability space $(\Omega,\mathcal{G},\P)$ equipped with two stochastic processes: an $n$-dimensional Brownian motion $(W_t)$ and a $p$-dimensional jump process $(N_t)=((N^i_t),1\leq i \leq p)$ with $N^i_t=\mathds{1}_{\tau^i\leq t}$, where $(\tau^i)_{1\leq i\leq p}$ are $p$ default times. We denote by $\G=\{\Gc_t,0\leq t \leq T\}$ the completed filtration generated by these processes. 
\begin{Assumption}{\rm
We make the following assumptions on the default times:
\begin{enumerate}[(i)]
\item The defaults do not appear simultaneously: $\P(\tau^i=\tau^j)=0$ for $i\neq j$. 
\item Each default can appear at any time: $\P(\tau^i>t)>0$.
\end{enumerate}
}\end{Assumption}
We consider a financial market which consists of one risk-free asset, whose price process is assumed for simplicity to be equal to $1$ at any time, and $n$ risky assets, whose price processes $(S^i_t)_{1\leq i \leq n}$ admit $p$ discontinuities at times $(\tau^j)_{1\leq j \leq p}$. In the sequel, we consider that the price processes $(S^i_t)_{1\leq i \leq n}$ evolve according to the equation
\begin{equation}
dS_t=diag(S_{t^-})(\mu_tdt+\sigma_tdW_t+\beta_tdN_t),  
\end{equation}
with the classical assumptions:
\begin{Assumption}\hfill
{\rm \begin{enumerate}[(i)]
\item $(\mu_t)$, $(\sigma_t)$ and $(\beta_t)$ are $\G$-predictable processes such that $\sigma_t$ is nonsingular for any $t\in[0,T]$ and
\begin{equation*}
\int_0^T||\sigma_t||^2dt+\sum_{i,j}\int_0^T\lambda_t^j|\beta^{i,j}_t|^2dt<\infty~a.s.,
\end{equation*}
\item there exist $d$ coefficients $\theta^1,\dots, \theta^d$ that are $\G$-predictable processes such that 
\begin{equation*}
\mu^i_t+\sum_{j=1}^p\lambda_t^j\beta_t^{i,j}=\sum_{j=1}^d\sigma_t^{i,j}\theta_t^j,~\forall\, t\in[0,T]~a.s.,~1\leq i \leq n;
\end{equation*}
we suppose that $\theta^j$ is bounded,
\item the processes $(\beta^{i,j}_t)$ satisfy $\beta^{i,j}_{\tau_j}>-1$ a.s., for each $i$ and $j$.
\end{enumerate}
}\end{Assumption}

Using the same technics as in the previous sections, we can generalize all the results stated in the previous sections to this framework. In particular, in the classical case of bounded coefficients, if $(J_t)$ denotes the dynamic value function associated with the admissible sets $\Ac$ or $\Ac'$ which are equal, we have: 
\begin{Theorem}
There exist $Z\in L^2(W)$ and $U\in L^2(M)$ such that $(J_t,Z_t,U_t)$ is the maximal solution in $\Sc^{+,\infty} \times L^2(W)\times L^2(M)$ of the BSDE
\begin{equation*}
\left\{\begin{aligned} 
-\,dJ_t=&~\essinf\limits_{\pi\in \mathcal{A}}\Big\{\frac{\gamma^2}{2}||\pi'_t\sigma_t||^2J_t- \gamma\pi'_t(\mu_tJ_t+\sigma_tZ_t)-(\1-e^{-\gamma\pi'_t\beta_t})(\lambda_tJ_t+\lambda_t*U_t)\Big\}dt\\
&~-\, Z_tdW_t-U_tdM_t,   \\
 J_T=&~\exp(-\gamma \xi).
\end{aligned}\right.
\end{equation*}
\end{Theorem}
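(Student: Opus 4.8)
The plan is to reproduce, in the multidimensional setting, the whole chain of arguments developed for a single default time and a single stock in Sections \ref{compact}--\ref{section coefficient borne}, checking at each step that the presence of $n$ Brownian components and $p$ non-simultaneous default times changes only the algebra and not the structure of the proofs. The first ingredient is the martingale representation theorem: since the completed filtration $\G$ is generated by the $n$-dimensional Brownian motion $(W_t)$ and the $p$ compensated martingales $(M^j_t)$, any $(\P,\G)$-local martingale admits a representation $m_t = m_0 + \int_0^t a_s dW_s + \int_0^t b_s dM_s$ with $a$ now $\R^n$-valued and $b$ now $\R^p$-valued, which is the multidimensional counterpart of Lemma \ref{theoreme representation}. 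This provides the Doob--Meyer decomposition $dJ_t = Z_t dW_t + U_t dM_t + dA_t$ with $Z\in L^2(W)$ and $U\in L^2(M)$.

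Second, I would establish the dynamic programming principle for $\Ac$. The set $\Ac$ remains closed by binding (the proof of Lemma \ref{recollement} is insensitive to the dimension), and the family $\{J^\pi_t,\ \pi\in\Ac_t\}$ remains stable by pairwise minimization, so there is a minimizing sequence with $J(t)=\lim_n \downarrow J^{\pi^n}_t$. The submartingale property of $(\exp(-\gamma X^\pi_t)J_t)$ for every $\pi\in\Ac$, and the characterization of $(J_t)$ as the largest adapted process having this property with terminal value $\exp(-\gamma\xi)$ (Propositions \ref{J(t) sous martingale} and \ref{le plus grand}), then follow verbatim; the only change is that the wealth increment $X^{s,\pi}_t$ is now a scalar built from the vector-valued integrands $\pi'_t\sigma_t$ and $\pi'_t\beta_t$, while the integrability property $\E[\exp(-\gamma X^{s,\pi}_t)]<\infty$ that makes Lebesgue's theorem applicable is unchanged.

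The heart of the argument is the product-rule computation of $d(\exp(-\gamma X^\pi_t)J_t)$, and it is the step I expect to require the most care. Applying It\^o's formula, the continuous martingale part contributes $(Z_t-\gamma\pi'_t\sigma_t J_t)$ against $dW_t$, while each default $dN^j_t$ contributes a jump; because the defaults never occur simultaneously ($\P(\tau^i=\tau^j)=0$), the jumps can be handled one coordinate at a time and the exponential factor $e^{-\gamma\pi'_t\beta_t}$ acts componentwise, producing in the finite-variation part precisely the vector expression $(\1-e^{-\gamma\pi'_t\beta_t})(\lambda_t J_t+\lambda_t * U_t)$. Imposing $dA^\pi_t\geq 0$ for all $\pi\in\Ac$ yields the necessary condition on $(A_t)$, and setting $dK_t=dA_t-\esssup_{\pi}\{\cdots\}dt$ (with $K$ nondecreasing and in $\Ac^2$, admissibility of $\pi\equiv 0$ forcing $K\geq 0$) shows that $(J_t,Z_t,U_t,K_t)$ is a subsolution of the stated BSDE. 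Maximality among subsolutions then follows exactly as in Theorem \ref{solution maximale}: any subsolution $(\bar J,\bar Z,\bar U,\bar K)$ gives rise, via the reverse product-rule computation, to a process $(\exp(-\gamma X^\pi_t)\bar J_t)$ that is a submartingale for every $\pi\in\Ac$, whence $\bar J_t\leq J_t$.

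Finally, to upgrade the subsolution to a genuine solution (i.e. $K\equiv 0$) under the bounded-coefficient assumption, I would follow Section \ref{section coefficient borne}. The approximation $J_t=\lim_k\downarrow J^k_t$ (Theorem \ref{J limite}) carries over because the submartingale and Doob--Meyer arguments are dimension-free, and for each $k$ the process $J^k$ solves a Lipschitz BSDE by the multidimensional analogue of Proposition \ref{bc}, whose driver is again an $\essinf$ of affine drivers satisfying the jump-Lipschitz bound (\ref{lipsauts}) coordinatewise. The main obstacle is the identification of the limit: one needs the multidimensional version of Morlais's monotone-stability result (Proposition \ref{morlaisbis}) to guarantee that the logarithmic transforms $(y^k,z^k,u^k)$ converge to the solution of the limiting quadratic BSDE, so that, after the exponential change of variables, $(J_t,Z_t,U_t)$ solves the BSDE with $K\equiv 0$. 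Establishing that quadratic-BSDE existence-and-convergence result with several non-simultaneous jump terms is the genuinely delicate point; granting it, combining the limit identification with the maximality of Theorem \ref{solution maximale} yields the claim exactly as in Theorem \ref{solution maximale bis}.
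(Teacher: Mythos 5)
Your proposal is correct and follows essentially the same route as the paper, which states that the proof of this multidimensional theorem is identical to the one-dimensional case and thus implicitly relies on exactly the chain you reconstruct: dynamic programming and Doob--Meyer (Propositions \ref{J(t) sous martingale}--\ref{A egal sup}), maximality among subsolutions (Theorem \ref{solution maximale}), approximation by the Lipschitz BSDEs for $J^k$ (Theorem \ref{lim}), and identification of the limit via Morlais's monotone-stability result to get $K\equiv 0$ (Theorem \ref{solution maximale bis}). You are also right to flag the multidimensional analogue of Proposition \ref{morlaisbis} with several non-simultaneous jump martingales as the one step requiring genuine verification rather than mere notational translation; the paper passes over this silently.
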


\begin{Remark}{\rm
The value function $J_0$ coincides with the value function associated with the set $\Theta_2$. 
}\end{Remark}

\subsection{Poisson jumps}
We consider a market defined on the complete probability space $(\Omega,\Gc,\P)$ equipped with two independent processes: a unidimensional Brownian motion $(W_t)$ and a real-valued Poisson point process $p$ defined on $[0,T]\times\R \backslash \{0\}$, we denote by $N_p(ds,dx)$ the associated counting measure, such that its compensator is $\hat{N}_p(ds,dx)=n(dx)ds$ and the Levy measure $n(dx)$ is positive and satisfies $n(\{0\})=0$ and $\int_{\R\backslash\{0\}}(1\wedge |x|)^2n(dx)<\infty$. We denote by $\G=\{\Gc_t,0\leq t \leq T\}$ the completed filtration generated by the two processes $(W_t)$ and $(N_p)$. We denote by $\tilde{N}_p(ds,dx)$ ($\tilde{N}_p(ds,dx)=N_p(ds,dx)-\hat{N}_p(ds,dx)$) the compensated measure, which is a martingale random measure. In particular, for any predictable and locally square integrable process $(U_t)$, the stochastic integral $\int U_s(x)\tilde{N}_p(ds,dx)$ is a locally square integrable martingale. Let us introduce the classical set $L^2(\tilde{N}_p)$ (resp. $L^2_{loc}(\tilde{N}_p)$) given by the set of $\G$-predictable processes on $[0,T]$ under $\P$ with
\begin{equation*}
\E\Big[\int_0^T\int_{\R\backslash\{0\}}|U_t(x)|^2n(dx)dt\Big]<\infty~\text{(resp. }\int_0^T\int_{\R\backslash\{0\}}|U_t(x)|^2n(dx)dt<\infty~a.s.\text{).}
\end{equation*}
The financial market consists of one risk-free asset, whose price process is assumed to be equal to $1$, and one single risky asset, whose price process is denoted by $S$. In particular, the stock price process satisfies
\begin{equation*}
dS_t=S_{t^-}\Big(\mu_tdt+\sigma_tdW_t+\int_{\R\backslash\{0\}}\beta_t(x)N_p(dt,dx)\Big).                   
\end{equation*}
All processes $(\mu_t)$, $(\sigma_t)$ and $(\beta_t)$ are assumed to be $\G$-predictable, the process $(\sigma_t)$ satisfies $\sigma_t>0$ and the process $(\beta_t)$ satisfies $\beta_t(x)>-1$ a.s. Moreover we suppose that
\begin{equation*}
\int_0^T|\sigma_t|^2dt+\int_0^t\int_{\R\backslash\{0\}}|\beta_t(x)|^2n(dx)dt+\int_0^T\Big|\frac{\mu_t+\int_{\R\backslash\{0\}}\beta_t(x)n(dx)ds}{\sigma_t}\Big|^2dt<\infty~a.s.
\end{equation*}

Using the same technics as in the previous sections, we can generalize all the results stated in the previous sections to this framework. In particular, in the classical case of bounded coefficients, if $(J_t)$ denotes the dynamic value function associated with the admissible sets $\Ac$ or $\Ac'$ which are equal, we have:
\begin{Theorem}
There exist $Z\in L^2(W)$ and $U\in L^2(\tilde{N}_p)$ such that $(J_t,Z_t,U_t)$ is the maximal solution in $\Sc^{+,\infty} \times L^2(W)\times L^2(\tilde{N}_p)$ of the BSDE
\begin{equation*}
\left\{\begin{aligned}
-\,dJ_t=&~\essinf\limits_{\pi\in \mathcal{A}}\Big\{\frac{\gamma^2}{2}|\pi_t\sigma_t|^2J_t- \gamma\pi_t(\mu_tJ_t+\sigma_tZ_t)-\int_{\R\backslash\{0\}}(1-e^{-\gamma \pi_tx})(J_t+U_t(x))n(dx)\Big\}dt \\
&~-\,Z_tdW_t-\int_{\R\backslash\{0\}}U_t(x)\tilde{N}_p(dt,dx) , \\
 J_T=&~\exp(-\gamma \xi),
\end{aligned}\right.
\end{equation*}
\end{Theorem}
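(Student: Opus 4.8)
The plan is to reproduce, in the present Poisson framework, the whole program developed in Sections \ref{section generale}--\ref{section coefficient borne} for a single default time, replacing the compensated martingale $M$ by the compensated random measure $\tilde N_p$, the coefficient $\lambda_t$ by the L\'evy measure $n(dx)$, and the martingale representation of Lemma \ref{theoreme representation} by the one available for the filtration generated by $W$ and $N_p$: every $(\P,\G)$-local martingale writes $m_t=m_0+\int_0^t a_s dW_s+\int_0^t\int_{\R\backslash\{0\}}b_s(x)\tilde N_p(ds,dx)$ with $a\in L^2_{loc}(W)$ and $b\in L^2_{loc}(\tilde N_p)$. First I would introduce the dynamic value function $J(t)$ as the essential infimum in (\ref{iun}) over $\Ac_t$. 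As in Lemma \ref{recollement}, $\Ac$ is closed by binding, and for $\pi\in\Ac$ admissibility gives $X^{s,\pi}_t\geq -K_{s,\pi}$, hence the key integrability property $\E[\exp(-\gamma X^{s,\pi}_t)]<\infty$; these facts, together with stability by pairwise minimization of $\{J^\pi_t,\ \pi\in\Ac_t\}$, yield the dynamic programming principle exactly as in Proposition \ref{J(t) sous martingale}: the process $(\exp(-\gamma X^\pi_t)J(t))$ is a submartingale for every $\pi\in\Ac$, and $(J(t))$ is the largest adapted process with this property and terminal value $\exp(-\gamma\xi)$ (Proposition \ref{le plus grand}). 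One then obtains a c\`ad-l\`ag version $(J_t)$ with $0\leq J_t\leq 1$ (Proposition \ref{cadlag} and Lemma \ref{J borne}).

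Next I would exploit the Doob--Meyer decomposition of the bounded c\`ad-l\`ag submartingale $(J_t)$, which by the martingale representation above takes the form $dJ_t=Z_t dW_t+\int_{\R\backslash\{0\}}U_t(x)\tilde N_p(dt,dx)+dA_t$ with $A$ predictable nondecreasing. Applying It\^o's formula to $\exp(-\gamma X^\pi_t)J_t$ and isolating its finite variation part (the nonlinear jump contribution producing, after compensation, the term $\int_{\R\backslash\{0\}}(1-e^{-\gamma\pi_t x})(J_t+U_t(x))n(dx)$), the submartingale property forces $dA_t\geq \esssup_{\pi\in\Ac}\{\,\ldots\,\}dt$, where the bracket is the opposite of the generator of the announced BSDE. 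Setting $dK_t$ equal to the difference shows that $(J_t,Z_t,U_t,K_t)$ is a \emph{subsolution}, and the maximality argument of Theorem \ref{solution maximale} carries over verbatim: for any other subsolution $(\bar J,\bar Z,\bar U,\bar K)$, It\^o's formula together with the admissibility bound $\exp(-\gamma X^\pi_t)\leq C_\pi$ shows $(\exp(-\gamma X^\pi_t)\bar J_t)$ is a submartingale, whence $\bar J_t\leq J_t$ by Proposition \ref{le plus grand}.

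To upgrade ``maximal subsolution'' to ``maximal solution'' (i.e. $K\equiv 0$) in the bounded-coefficient case, I would follow Section \ref{section coefficient borne}. For each $k$, the value function $J^k_t$ over strategies bounded by $k$ is, by the verification result of Proposition \ref{bc} applied with the constraint set $[-k,k]$ (Tang and Li's theorem for existence and uniqueness, Royer's comparison theorem for jumps), the unique solution of a \emph{Lipschitz} BSDE, and Theorem \ref{J limite} gives $J_t=\lim_k\downarrow J^k_t$. After the exponential change of variables $y^k=\tfrac1\gamma\log J^k$, $z^k=\tfrac1\gamma Z^k/J^k$, $u^k=\tfrac1\gamma\log(1+U^k/J^k_{-})$, the approximating equations become quadratic BSDEs whose drivers increase to the driver of a limiting quadratic equation; the Poisson analogue of Morlais's monotone stability result (Proposition \ref{morlaisbis}) gives convergence of $(y^k,z^k,u^k)$ to a solution $(y,z,u)$. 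Translating back through $J^*=e^{\gamma y}$, $Z^*=\gamma J^* z$, $U^*(x)=(e^{\gamma u(x)}-1)J^*_{-}$ and using uniqueness of the Doob--Meyer decomposition to identify $(Z,U)$, one obtains that $(J,Z,U)$ is a genuine solution with $K\equiv0$; combined with the subsolution maximality this yields the claim.

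The hard part will be the last paragraph: establishing the monotone stability/convergence result of Proposition \ref{morlaisbis} in the setting of a \emph{general} Poisson random measure, whose L\'evy measure $n(dx)$ need not be finite. One must control the jump term $\int_{\R\backslash\{0\}}\frac{\exp(\gamma(u-\pi x))-1-\gamma(u-\pi x)}{\gamma}n(dx)$ uniformly in $k$, obtain the strong convergence of $(z^k,u^k)$ in $L^2(W)\times L^2(\tilde N_p)$, and pass to the limit in the nonlinear jump integral; the boundedness of the coefficients and of $\xi$ is precisely what makes the a priori estimates and the Kobylanski-type argument adapted to $\tilde N_p$ go through. All remaining steps are routine adaptations of the one-default proofs.
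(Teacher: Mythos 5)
Your proposal is correct and follows essentially the same route as the paper, which explicitly states that the proofs in the Poisson framework are identical to those of the one-default case (dynamic programming principle and closedness by binding, Doob--Meyer decomposition combined with the martingale representation for the filtration generated by $W$ and $N_p$, maximality among subsolutions as in Theorem \ref{solution maximale}, then the Lipschitz-BSDE approximation over $\Bc^k$ and the exponential change of variables to remove the process $K$ in the bounded-coefficient case). The only step you flag as genuinely hard --- the monotone stability result of Proposition \ref{morlaisbis} for a general Poisson random measure with possibly infinite L\'evy measure --- is in fact Morlais's original setting in \cite{mor09} (a Brownian motion and an independent Poisson point process), so it can be invoked directly from the cited literature rather than re-proved.
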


\begin{Remark}{\rm
The value function $J_0$ coincides with the value function associated with the set $\Theta_2$. 
}\end{Remark}

\vspace{1cm}

\appendix
\begin{center}{\Large{\textbf{Appendix}}}\end{center}

\section{Essential supremum}\label{Neveu}
\setcounter{equation}{0} \setcounter{Assumption}{0}
\setcounter{Theorem}{0} \setcounter{Proposition}{0}
\setcounter{Corollary}{0} \setcounter{Lemma}{0}
\setcounter{Definition}{0} \setcounter{Remark}{0}

Recall the following classical result (see Neveu \cite{nev75}):
\begin{Theorem}
Let $F$ be a non empty family of measurable real valued functions $f:\Omega\rightarrow \bar{\R}$ defined on a probability space $(\Omega,\Fc,\P)$. Then there exists a measurable function $g:\Omega\rightarrow \bar{\R}$ such that 
\begin{enumerate}[(i)]
\item for all $f\in F,~f\leq g$ a.s.,
\item if $h$ is a measurable function satisfying $f\leq h$ a.s., for all $f\in F$, then $g\leq h$ a.s. 
\end{enumerate}
This function $g$, which is unique a.s., is called the essential supremum of $F$ and is denoted $\esssup_{f \in F} f$.\\
Moreover there exists at least one sequence $(f_n)$ in $F$ such that $\esssup_{f\in F}f=\lim_{n\rightarrow \infty} f_n$ a.s. Furthermore, if $F$ is \emph{filtrante croissante} (i.e. $f,\,g\in F$ then there exists $h\in F$ such that both $f\leq h$ a.s., and $g\leq h$ a.s.), then the sequence $(f_n)$ may be taken nondecreasing and $\esssup_{f\in F} f=\lim_{n\rightarrow \infty}\uparrow f_n$ a.s.
\end{Theorem}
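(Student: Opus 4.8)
The plan is to follow the classical argument built around a countable maximizing subfamily, with the key preliminary device of replacing the functions by bounded ones so that expectations are finite and comparisons of integrals become legitimate.

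First I would reduce to the case of uniformly bounded functions. Let $\varphi:\bar\R\to[-1,1]$ be a strictly increasing bijection (for instance $\varphi(x)=\frac{2}{\pi}\arctan x$ with the convention $\varphi(\pm\infty)=\pm1$), and set $\tilde F=\{\varphi\circ f:f\in F\}$. Since $\varphi$ and $\varphi^{-1}$ are both strictly increasing, the relations $f\le h$ a.s.\ and $\varphi\circ f\le\varphi\circ h$ a.s.\ are equivalent, so any function $\tilde g$ satisfying (i)--(ii) for $\tilde F$ yields $g:=\varphi^{-1}\circ\tilde g$ satisfying (i)--(ii) for $F$; hence I may assume $|f|\le1$ for all $f\in F$. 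For any \emph{countable} subfamily $C\subset F$ the pointwise supremum $g_C:=\sup_{f\in C}f$ is measurable and bounded, hence integrable, and I define
\begin{equation*}
\alpha:=\sup\big\{\,\E[g_C]\ :\ C\subset F\ \text{countable}\,\big\}\in[-1,1].
\end{equation*}

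Next I would extract an optimizing family: pick countable families $C_n$ with $\E[g_{C_n}]\to\alpha$ and put $C^\ast:=\bigcup_n C_n$ (again countable) and $g:=g_{C^\ast}=\sup_{f\in C^\ast}f$. Since $g\ge g_{C_n}$ we get $\E[g]\ge\alpha$, while the definition of $\alpha$ gives $\E[g]\le\alpha$, so $\E[g]=\alpha$. I then verify (i): for arbitrary $f\in F$ the family $C^\ast\cup\{f\}$ is countable, so $\E[g\vee f]\le\alpha=\E[g]$; as $g\vee f\ge g$ with both integrable, this forces $g\vee f=g$ a.s., i.e.\ $f\le g$ a.s. Property (ii) is immediate: if $h\ge f$ a.s.\ for every $f\in F$, then $h\ge f$ a.s.\ for each of the countably many $f\in C^\ast$, and a countable union of null sets being null yields $h\ge\sup_{f\in C^\ast}f=g$ a.s. Uniqueness follows at once, since two functions each satisfying (i) and (ii) dominate one another a.s. For the approximating sequence I enumerate $C^\ast=\{f_n:n\in\N\}$, giving $\esssup_{f\in F}f=g=\sup_n f_n$ a.s., and set $g_n:=f_1\vee\cdots\vee f_n$, a nondecreasing sequence with $g_n\uparrow g$ a.s. In the \emph{filtrante croissante} case I would use the directedness hypothesis to pick $h_n\in F$ with $h_n\ge g_n$; then $g\ge h_n\ge g_n$ shows $(h_n)\subset F$ is nondecreasing with $\lim_n h_n=g$ a.s.

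The main obstacle is the verification of property (i): the function $g$ is manufactured from only countably many members of $F$, yet it must dominate the entire, possibly uncountable, family. The whole force of the argument is concentrated in the maximality of $\alpha$, which—through the integrability secured by the bounding step—upgrades the inequality $\E[g\vee f]\le\E[g]$ into the a.s.\ identity $g\vee f=g$. Without the passage to bounded functions this comparison of expectations would not be available, and the single countably-built function could not be shown to be an a.s.\ upper bound for all of $F$.
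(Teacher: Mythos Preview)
The paper does not supply its own proof of this theorem: it is recalled in the appendix as a classical result with a bare citation to Neveu \cite{nev75}. Your argument is the standard one (essentially Neveu's), and the existence and uniqueness parts, together with the extraction of the countable maximizing family, are correct as written.

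There is, however, a small gap in your treatment of the \emph{filtrante croissante} case. You write that you pick $h_n\in F$ with $h_n\ge g_n$ and then assert that $g\ge h_n\ge g_n$ ``shows $(h_n)\subset F$ is nondecreasing''. It does not: the fact that $(g_n)$ is nondecreasing and $h_n\ge g_n$ puts no order on the $h_n$ themselves. The fix is to build $(h_n)$ inductively: set $h_1=f_1$, and given $h_n$, use directedness to choose $h_{n+1}\in F$ with $h_{n+1}\ge h_n$ a.s.\ and $h_{n+1}\ge f_{n+1}$ a.s. Then $(h_n)$ is nondecreasing by construction, and an easy induction gives $h_n\ge f_1\vee\cdots\vee f_n=g_n$ a.s., so $g_n\le h_n\le g$ forces $h_n\uparrow g$ a.s.
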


\section{A classical lemma of analysis}
\label{appendice fonction affine}
\setcounter{equation}{0} \setcounter{Assumption}{0}
\setcounter{Theorem}{0} \setcounter{Proposition}{0}
\setcounter{Corollary}{0} \setcounter{Lemma}{0}
\setcounter{Definition}{0} \setcounter{Remark}{0}

\begin{Lemma}
The supremum of affine functions, whose coefficients are bounded by a constant $c>0$, is Lipschitz and the Lipschitz constant is equal to $c$.\\
More precisely, let $\Ac$ be the set of $[-c,c]^{n}\times[-k,k]$. Then, the function $f$ defined for any $y\in \R^n$ by 
\begin{equation*}
f(y)=\sup\limits_{(a,b)\in\Ac} \{a.y+b\}
\end{equation*}
is Lipschitz with Lipschitz constant $c$.
\label{fonction affine}
\end{Lemma}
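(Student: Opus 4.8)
The plan is to exploit the elementary but robust principle that a pointwise supremum of a family of functions sharing a common Lipschitz constant is itself Lipschitz with that same constant; the affine structure only enters through the bound on the slopes. Before invoking this, I would first record that $f$ is finite-valued, so that the Lipschitz inequality makes sense. Since $\Ac=[-c,c]^n\times[-k,k]$ is compact and $(a,b)\mapsto a\cdot y+b$ is continuous, the supremum defining $f(y)$ is in fact attained and finite for each $y$; explicitly, $a\cdot y\le c\|y\|_1$ and $b\le k$ give $|f(y)|\le c\|y\|_1+k<\infty$.

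Next I would verify that each individual affine map $g_{a,b}(y):=a\cdot y+b$, with $(a,b)\in\Ac$, is $c$-Lipschitz. For $y,y'\in\R^n$,
\[
|g_{a,b}(y)-g_{a,b}(y')|=\Big|\sum_{i=1}^n a_i\,(y_i-y'_i)\Big|\le\sum_{i=1}^n|a_i|\,|y_i-y'_i|\le c\,\|y-y'\|_1,
\]
using only the componentwise constraint $|a_i|\le c$; the intercept $b$ cancels and plays no role. This is the single computation where the choice of norm matters: the hypercube constraint $a\in[-c,c]^n$ produces the constant $c$ \emph{exactly} with respect to the $\ell^1$-norm (against the Euclidean norm one would only get $c\sqrt n$).

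The heart of the argument is the passage to the supremum. Fix $y,y'$. For every $(a,b)\in\Ac$,
\[
g_{a,b}(y)\le g_{a,b}(y')+c\,\|y-y'\|_1\le f(y')+c\,\|y-y'\|_1,
\]
and taking the supremum over $(a,b)$ on the left-hand side yields $f(y)\le f(y')+c\,\|y-y'\|_1$. Interchanging the roles of $y$ and $y'$ gives the reverse bound, so $|f(y)-f(y')|\le c\,\|y-y'\|_1$, which is the claim. I would then remark that the identical reasoning applies to an \emph{infimum} of such affine functions, as actually needed for the driver $f=\essinf_{\pi}f^\pi$ in Proposition \ref{bc}, since $\inf_{(a,b)}g_{a,b}=-\sup_{(a,b)}(-g_{a,b})$ and negation preserves the Lipschitz constant.

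There is no genuine obstacle in this lemma; it is a structural fact rather than a hard estimate. The only points demanding care are the finiteness of $f$ (handled by compactness of $\Ac$, which also lets one write $\max$ in place of $\sup$) and the bookkeeping that turns the slope bound $|a_i|\le c$ into a bona fide Lipschitz estimate with constant exactly $c$, which is what pins down the relevant norm.
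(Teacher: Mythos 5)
Your proof is correct and follows essentially the same route as the paper's: establish the one-sided bound $f(y)\le f(y')+c\,\|y-y'\|$ (the paper via subadditivity of the supremum, you via the uniform $c$-Lipschitz bound on each affine member, which is the same computation) and conclude by symmetry. Your additional remarks on finiteness of $f$ and on the fact that the constant is exactly $c$ only with respect to the $\ell^1$-norm (the paper leaves the norm implicit) are correct refinements, not deviations in method.
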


\begin{proof}
\begin{equation*}
\sup\limits_{(a,b)\in\Ac} \{a.y+b\}\leq \sup\limits_{(a,b)\in\Ac} \{a.(y-y')\}+\sup\limits_{(a,b)\in\Ac} \{a.y'+b\}.
\end{equation*}
Which implies 
\begin{equation*}
f(y)-f(y')\leq c ||y-y'||.
\end{equation*}
By symmetry, we have also
\begin{equation*}
f(y')-f(y)\leq c ||y-y'||,
\end{equation*}
which gives the desired result.
\end{proof}

\section{Proof of the closedness by binding of $\Ac'$} \label{Theta}
\setcounter{equation}{0} \setcounter{Assumption}{0}
\setcounter{Theorem}{0} \setcounter{Proposition}{0}
\setcounter{Corollary}{0} \setcounter{Lemma}{0}
\setcounter{Definition}{0} \setcounter{Remark}{0}

\begin{Lemma}
Let $\pi^1, \pi^2$ be two admissible strategies of $\Ac^{'}$ and $s\in [0,T]$. The strategy $\pi^3$ defined by
\begin{equation*}
\pi^3_t=
\begin{cases}
\pi^1_t& \text{if } t\leq s,\\
\pi^2_t& \text{if } t> s,
\end{cases}
\end{equation*}
belongs to $\Ac^{'}$.
\end{Lemma}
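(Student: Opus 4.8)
The plan is to verify directly the two defining properties of $\Ac^{'}$ for the concatenated strategy $\pi^3$. The first requirement, $\int_0^T|\pi^3_t\sigma_t|^2dt+\int_0^T\lambda_t|\pi^3_t\beta_t|^2dt<\infty$ a.s., is immediate: splitting each integral at the binding time $s$ and using that $\pi^3=\pi^1$ on $[0,s]$ and $\pi^3=\pi^2$ on $(s,T]$ reduces it to the corresponding finite integrals for $\pi^1$ and $\pi^2$. So the real content lies in the $p$-integrability condition \eqref{p-integrability}: for every $t\in[0,T]$ and every $p>1$, $\E[\sup_{u\in[t,T]}\exp(-\gamma p X^{t,\pi^3}_u)]<\infty$.

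First I would dispose of the easy case $t>s$. Since $\pi^3$ and $\pi^2$ agree on $(s,T]\supset[t,T]$, we have $X^{t,\pi^3}_u=X^{t,\pi^2}_u$ for all $u\in[t,T]$, so the quantity to control is exactly $\E[\sup_{u\in[t,T]}\exp(-\gamma p X^{t,\pi^2}_u)]$, which is finite because $\pi^2\in\Ac^{'}$. The main case is $t\le s$, and here I would exploit the additivity of the wealth (which holds because, the initial capital being $0$, the wealth is a pure stochastic integral): for $u\in(s,T]$,
\[
X^{t,\pi^3}_u=X^{t,\pi^1}_s+X^{s,\pi^2}_u ,
\]
while $X^{t,\pi^3}_u=X^{t,\pi^1}_u$ for $u\in[t,s]$. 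Hence, bounding the supremum over $[t,T]$ by the sum of the suprema over $[t,s]$ and $(s,T]$,
\[
\sup_{u\in[t,T]}e^{-\gamma p X^{t,\pi^3}_u}\le \sup_{u\in[t,T]}e^{-\gamma p X^{t,\pi^1}_u}+e^{-\gamma p X^{t,\pi^1}_s}\sup_{u\in[s,T]}e^{-\gamma p X^{s,\pi^2}_u}.
\]

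The expectation of the first term is finite since $\pi^1\in\Ac^{'}$. For the second term I would apply H\"older's inequality with a pair of conjugate exponents $q,q'>1$ (so $\tfrac1q+\tfrac1{q'}=1$), writing
\[
\E\Big[e^{-\gamma p X^{t,\pi^1}_s}\sup_{u\in[s,T]}e^{-\gamma p X^{s,\pi^2}_u}\Big]\le \Big(\E\big[e^{-\gamma pq X^{t,\pi^1}_s}\big]\Big)^{1/q}\Big(\E\Big[\sup_{u\in[s,T]}e^{-\gamma pq' X^{s,\pi^2}_u}\Big]\Big)^{1/q'}.
\]
Bounding $e^{-\gamma pq X^{t,\pi^1}_s}\le\sup_{u\in[t,T]}e^{-\gamma pq X^{t,\pi^1}_u}$, the first factor is finite by \eqref{p-integrability} applied to $\pi^1$ with exponent $pq$, and the second factor is finite by \eqref{p-integrability} applied to $\pi^2$ with exponent $pq'$; since $pq>1$ and $pq'>1$, both are legitimate instances of the hypothesis. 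Combining the two cases then shows that $\pi^3$ satisfies \eqref{p-integrability} for every $t$ and every $p$, so $\pi^3\in\Ac^{'}$.

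The crux of the argument — and the reason the statement is about $\Ac^{'}$ rather than a set defined by a mere $L^1$-type condition — is precisely this last H\"older step: the product of the two contributions is controlled only after passing to the strictly larger exponents $pq,pq'>p$. It is here that the assumption that \eqref{p-integrability} holds for \emph{all} $p>1$ (and not merely the integrability $\E[\exp(-\gamma(X^{\pi}_T+\xi))]<\infty$) is indispensable, exactly as emphasized in the remark preceding the statement; with only the $p=1$ condition one could not absorb the factor $e^{-\gamma p X^{t,\pi^1}_s}$ coming from the first leg of the strategy. I expect this to be the only genuinely delicate point, the remaining steps being routine splitting and monotonicity of the supremum.
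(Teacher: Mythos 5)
Your proof is correct and follows essentially the same route as the paper's: the same case split at $s$, the same additive decomposition of the wealth, and the same product-of-exponentials estimate, with the paper using Cauchy--Schwarz (the case $q=q'=2$ of your H\"older step) and exponent $2\gamma p$, a choice available precisely because (\ref{p-integrability}) holds for all $p>1$, as you rightly emphasize.
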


\begin{proof}
For any $u\in[0,T]$, we have for any $p>1$
\begin{enumerate}[(i)]
\item if $u>s$, then 
\begin{equation*}
\E[\sup_{r\in[u,T]} \exp(-\gamma pX_r^{u,\pi^3})]=\E[\sup_{r\in[u,T]} \exp(-\gamma pX_r^{u,\pi^2})] <\infty,
\end{equation*}
\item if $u \leq s$, then 
\begin{multline*}
\E[\sup_{r\in[u,T]} \exp(-\gamma pX_r^{u,\pi^3})]  \leq \E[\sup_{r\in[u,T]} \exp(-\gamma pX_r^{u,\pi^1})]\\+\E[\sup_{r\in[s,T]} \exp(-\gamma p(X_s^{u,\pi^1}+X_r^{s,\pi^2}))].
\end{multline*}
By Cauchy-Schwarz inequality,
\begin{multline*}
\E[\sup_{r\in[s,T]} \exp(-\gamma p(X_s^{u,\pi^1}+X_r^{s,\pi^2}))]\leq \E[\sup_{r\in[u,T]} \exp(-2\gamma pX_r^{u,\pi^1})]^{1/2}\\ \times \E[\sup_{r\in[s,T]} \exp(-2\gamma pX_r^{s,\pi^2})]^{1/2}.
\end{multline*}
Hence, $\E[\sup_{r\in[u,T]} \exp(-\gamma pX_r^{u,\pi^3})]<\infty$.
\end{enumerate}
\end{proof}

\section{Proof of the existence of a càd-làg modification of $(J_t)$}
\label{preuve de cadlag}
\setcounter{equation}{0} \setcounter{Assumption}{0}
\setcounter{Theorem}{0} \setcounter{Proposition}{0}
\setcounter{Corollary}{0} \setcounter{Lemma}{0}
\setcounter{Definition}{0} \setcounter{Remark}{0}

The proof is not so simple since we do not know if there exists an optimal strategy in $\Ac$.
Let $\mathbb{D}=[0,T] \cap \mathbb{Q}$, where $ \mathbb{Q}$ is the set of rational numbers. Since $(J(t))$ is a submartingale, the mapping $t \rightarrow J(t,\omega)$ defined on 
$\mathbb{D}$ has for almost every $\omega \in \Omega$ and for any $t$ of $[0,T[$ a finite right limit
\begin{equation*}
J(t^+,\omega)=\lim\limits_{s\in\mathbb{D}, s\downarrow t} J(s,\omega),
\end{equation*}
(see Karatzas and Shreve \cite{karshr91}, Proposition 1.3.14 or Dellacherie and Meyer \cite{delmey80}, Chapter 6). Note that it is possible to define $J(t^+,\omega)$ for any $(t,\omega) \in [0,T]\times \Omega$ by $J(T^+,\omega):= J(T,\omega)$ and
\begin{equation*}
J(t^+,\omega):= \limsup\limits_{s\in\mathbb{D}, s\downarrow t} J(s,\omega),~ t\in[0,T[.
\end{equation*}
From the right-continuity of the filtration $(\Gc_t)$, the process $(J(t^+))$ is $\G$-adapted. It is possible to show that $(J(t^+))$ is a $\G$-submartingale and even that the process $(\exp(-\gamma X_t^\pi)J(t^+))$ is a $\G$-submartingale for any $\pi\in \Ac$. Indeed, from Proposition \ref{le plus grand}, for any $s\leq t$ and for each sequence of rational numbers $(t_n)_{n\geq 1}$ converging down to $t$, we have
\begin{equation*}
\mathbb{E}\big[\exp(-\gamma X_{t_n}^\pi)J(t_n)\big|\Gc_s\big]\geq \exp(-\gamma X_{s}^\pi)J(s)~a.s.
\end{equation*}
Let $n$ tend to $+\infty$. By the Lebesgue theorem, we have that for any $s \leq t$,
\begin{equation}\label{AL}
\mathbb{E}\big[\exp(-\gamma X_{t}^\pi)J(t^+)\big|\Gc_s\big]\geq \exp(-\gamma X_{s}^\pi)J(s)~a.s.
\end{equation}
This clearly implies that for any $s \leq t$, $\mathbb{E}[\exp(-\gamma X_{t}^\pi)J(t^+)|\Gc_s]\geq \exp(-\gamma X_{s}^\pi)J(s^+)$ a.s.,
which gives the submartingale property of the process $(\exp(-\gamma X_t^\pi)J(t^+))$.
Using the right-continuity of the filtration $(\Gc_t)$ and inequality (\ref{AL}) applied to $\pi = 0$ and $s=t$, we get
\begin{equation*}
J(t^+)=\mathbb{E}\big[J(t^+)\big|\Gc_t\big]\geq J(t)~a.s.
\end{equation*}
On the other hand, by the characterization of $(J(t))$ (see Proposition \ref{le plus grand}), and since the process $(\exp(-\gamma X_t^\pi)J(t^+))$ is a $\G$-submartingale for any $\pi\in \Ac$, we have that for any $t\in[0,T]$,
\begin{equation*}
J(t^+)\leq J(t)~a.s.
\end{equation*}
Thus, for any $t\in[0,T]$,
\begin{equation*}
J(t^+)=J(t)~a.s.
\end{equation*}
Furthermore, the process $(J(t^+))$ is c\`{a}d-l\`{a}g. The result follows by taking $J_t=J(t^+)$.

\section{Proof of equality (\ref{esssupA=esssupAbar})}\label{esssup A Abar}
\setcounter{equation}{0} \setcounter{Assumption}{0}
\setcounter{Theorem}{0} \setcounter{Proposition}{0}
\setcounter{Corollary}{0} \setcounter{Lemma}{0}
\setcounter{Definition}{0} \setcounter{Remark}{0}

For any $\pi\in\Ac$, we define the strategy $\pi^k_t=\pi_t\mathds{1}_{|\pi_t|\leq k}$ for each $k\in\N$. The strategy $\pi^k$ is uniformly bounded but not necessarily admissible.                               
For that we define for each $(k,n)\in\N\times \N$ the stopping time
\begin{equation*}
\tau_{k,n}:=\inf\{t,|X_t^{\pi^k}|\geq n\}
\end{equation*}
and the strategy $\pi^{k,n}_t :=\pi^k_t\mathds{1}_{t \leq \tau_{k,n}}$. By construction, it is clear that the strategy $\pi^{k,n}\in\Ac^k$ for each $(k,n)$. Since $\pi_t=\lim_{k}\lim_{n}\pi^{k,n}_t$ $dt \otimes d\P$ a.s., the following equality
\begin{multline*}
\essinf\limits_{\pi\in \bar{\mathcal{A}}}\Big\{\frac{\gamma^2}{2}\pi_t^2\sigma_t^2\bar{J}_t-\gamma\pi_t(\mu_t\bar{J}_t+\sigma_t\bar{Z}_t)-\lambda_t(1-e^{-\gamma\pi_t\beta_t})(\bar{J}_t+\bar{U}_t)\Big\}=\\
\essinf\limits_{\pi\in \mathcal{A}}\Big\{\frac{\gamma^2}{2}\pi_t^2\sigma_t^2\bar{J}_t\gamma\pi_t-(\mu_t\bar{J}_t+\sigma_t\bar{Z}_t)-\lambda_t(1-e^{-\gamma\pi_t\beta_t})(\bar{J}_t+\bar{U}_t) \Big\}
\end{multline*}
holds $dt \otimes d\P$ a.s.

\section{Proof of optimality criterion (Proposition \ref{J martingale})}
\label{Proof of Proposition J martingale}
\setcounter{equation}{0} \setcounter{Assumption}{0}
\setcounter{Theorem}{0} \setcounter{Proposition}{0}
\setcounter{Corollary}{0} \setcounter{Lemma}{0}
\setcounter{Definition}{0} \setcounter{Remark}{0}

Suppose (\emph{i}). Hence, 
\begin{equation*}
J(0)=\inf\limits_{\pi\in\Ac}\E\big[\exp\big(-\gamma (X_T^\pi+\xi)\big)\big]=\E\big[\exp\big(-\gamma (X_T^{\hat{\pi}}+\xi)\big)\big].
\end{equation*}
As the process $(\exp(-\gamma X_t^{\hat{\pi}})J(t))$ is a submartingale and as $J(0)=\E[\exp(-\gamma (X_T^{\hat{\pi}}+\xi))]$, it follows that $(\exp(-\gamma X_t^{\hat{\pi}})J(t))$ is a martingale. \\
To show the converse, suppose that the process $(\exp(-\gamma X_t^{\hat{\pi}})J(t))$ is a martingale. Then, $\E[\exp(-\gamma X_T^{\hat{\pi}})J(T)]=J(0)$. Also, since the process $(\exp(-\gamma X_t^{\pi})J(t))$ is a submartingale for any $\pi\in \Ac$ and since $J(T)=\exp(-\gamma \xi)$, it is clear that $J(0)\leq \inf\limits_{\pi\in\Ac}\E[\exp(-\gamma (X_T^\pi+\xi))]$. Consequently,
\begin{equation*}
J(0)=\inf\limits_{\pi\in\Ac}\E\big[\exp\big(-\gamma (X_T^\pi+\xi)\big)\big]=\E\big[\exp\big(-\gamma (X_T^{\hat{\pi}}+\xi)\big)\big],
\end{equation*}
thus $\hat{\pi}$ is an optimal strategy.

\section{Characterization of the value function as the maximum solution of BSDE (\ref{Jk edsr})}\label{appendice edsr}
\setcounter{equation}{0} \setcounter{Assumption}{0}
\setcounter{Theorem}{0} \setcounter{Proposition}{0}
\setcounter{Corollary}{0} \setcounter{Lemma}{0}
\setcounter{Definition}{0} \setcounter{Remark}{0}

\noindent{\bf Step 1}:
Let us show that there exist $Z \in L^2(W)$ and $U \in L^2(M)$ such that $(J _t,Z _t,U _t)$ is a solution in $\Sc^{+,\infty} \times  L^2(W) \times  L^2(M)$ of BSDE (\ref{Jk edsr}).

From the Doob-Meyer decomposition, since the process $(J _t)$ is a bounded submartingale, there exist $Z \in L^2(W)$, $U \in L^2(M)$ and $(A _t)$ a nondecreasing process with $A _0=0$ such that
\begin{equation*}
dJ _t=Z _tdW_t+U _tdM_t+dA _t.
\end{equation*}
By the same technics as in the proof of Proposition \ref{A egal sup}, since for any $\pi\in \Cc $ the process $(\exp(-\gamma X^\pi_t)J (t))$ is a submartingale, we have
\begin{equation*}
dA _t\geq \esssup\limits_{\pi\in \Cc }\Big\{ \gamma\pi_t(\mu_tJ _t+\sigma_tZ _t)+\lambda_t(1-e^{-\gamma\pi_t\beta_t})(J _t+U _t)-\frac{\gamma^2}{2}\pi_t^2\sigma_t^2J _t\Big\}dt.
\end{equation*}
Since there exists an optimal strategy $\hat{\pi}\in \Cc $ from Proposition \ref{0-optimal}, the optimality criterion gives 
\begin{equation*}
dA _t=\Big\{ \gamma\hat{\pi}_t(\mu_tJ _t+\sigma_tZ _t)+\lambda_t(1-e^{-\gamma\hat{\pi}_t\beta_t})(J _t+U _t)-\frac{\gamma^2}{2}\hat{\pi}_t^2\sigma_t^2J _t\Big\}dt,
\end{equation*}
which implies
\begin{equation*}
dA _t= \esssup\limits_{\pi\in \Cc }\Big\{ \gamma\pi_t(\mu_tJ _t+\sigma_tZ _t)+\lambda_t(1-e^{-\gamma\pi_t\beta_t})(J _t+U _t)-\frac{\gamma^2}{2}\pi_t^2\sigma_t^2J _t\Big\}dt,
\end{equation*}
and $(J _t,Z _t,U _t)$ is solution of BSDE (\ref{Jk edsr}).\\

\noindent{\bf Step 2}:
Using similar arguments as in the proof of Theorem \ref{solution maximale}, one can derive that $(J _t,Z _t,U _t)$ is the \emph{maximal solution} in $\Sc^{+,\infty} \times L^2(W)\times L^2(M)$ of BSDE (\ref{Jk edsr}).


\begin{thebibliography}{}

\bibitem{biafrigra08} Biagni, S., M. Frittelli and M. Grasselli  (2008): ``Indifference price with general semimartingales'', preprint.

\bibitem{biejearut04} Bielecki, T., M. Jeanblanc and M. Rutkowski (2004): ``Hedging of defaultable claims'', Lectures notes in mathematical finance, Springer.

\bibitem{biejearut04a} Bielecki, T., M. Jeanblanc and M. Rutkowski (2004): ``Stochastic methods in credit risk modelling'', Lectures notes in Mathematics, Springer, {\bf 1856}, 27-128.

\bibitem{bierut04} Bielecki, T. and M. Rutkowski (2004): ``Credit risk: modeling, valuation and hedging'', Springer Finance.

\bibitem{biejea08} Bielecki, T. and M. Jeanblanc (2008): ``Indifference prices in indifference pricing'', {\it Theory and Applications, Financial Engineering}, Princeton University Press. R. Carmona editor volume.

\bibitem{blajea04} Blanchet-Scalliet, C. and M. Jeanblanc (2004): ``Hazard rate for credit risk and hedging defaultable contingent claims'', {\it Finance and Stochastics}, {\bf 8}, 145-159.

\bibitem{boueli08} Bouchard, B. and R. Elie (2008): ``Discrete-time approximation of decoupled Forward-Backward SDE with jumps'', {\it Stochastic Processes and their Applications}, {\bf 118}, 53-75.

\bibitem{car08} Carmona, R. (2008): ``Volume on indifference pricing'', Princeton University Press.

\bibitem{deletal02} Delbaen, F., P. Grandits, T. Rheinlander, D. Samperi, M. Schweizer and C. Stricker (2002): ``Exponential Hedging and Entropic Penalties'', {\it Mathematical Finance}, {\bf 12}, 99-123.

\bibitem{delmey80} Dellacherie, C. and P.A. Meyer (1980): ``Probabilit\'{e}s et potentiel'', Th\'{e}orie des martingales, Hermann.

\bibitem{dufeps92} Duffie, D. and L. Epstein (1992): ``Stochastic differential utility'', {\it Econometrica}, {\bf 60}, 353-394. 

\bibitem{eketem76} Ekeland, I. and R. Temam (1976): ``Convex analysis and variational problems'', North-Holland-Elsevier.

\bibitem{nek79} El Karoui, N. (1979): ``Les aspects probabilistes du contr\^{o}le stochastique'', Ecole d'Et\'{e} de Probabilit\'{e}s de Saint-Flour IX, Springer-Verlag.

\bibitem{nekque95} El Karoui, N. and M.C. Quenez (1995): ``Dynamic programming and pricing of contingent claims in an incomplete market'', {\it SIAM Journal on Control and Optimization}, {\bf 33}, 29-66.

\bibitem{nekpenque97} El Karoui, N., S. Peng and M.C. Quenez (1997): ``Backward stochastic differential equations in finance'', {\it Mathematical finance}, {\bf 7}, 1-71.

\bibitem{elljeayor00} Elliott, M., M. Jeanblanc and M. Yor (2000): ``On models of default risk'', {\it Mathematical Finance}, {\bf 10}, 179-195.

\bibitem{jeayorche09} Jeanblanc, M., M. Yor and M. Chesney (2009): ``Mathematical methods for financial markets'', Springer Finance.

\bibitem{hodneu89} Hodges, S.D. and A. Neuberger (1989): ``Optimal replication of contingent claim under transaction costs'', {\it Review of Futures Markets}, {\bf 8}, 222-239.

\bibitem{huimkmul05} Hu, Y., P. Imkeller and M. Muller (2005): ``Utility maximization in incomplete markets'',
{\it The Annals of Applied Probability}, {\bf15}, 1691-1712.

\bibitem{karlehshr87} Karatzas, I., J.P. Lehoczky and S. Shreve (1987): ``Optimal portfolio and consumption decisions for a small investor on a finite horizon'', {\it SIAM Journal on Control and Optimization}, {\bf25}, 1557-1586.

\bibitem{karlehshrxue91} Karatzas, I., J.P. Lehoczky, S. Shreve and G. Xu (1991): ``Martingale and duality methods for utility maximization in an incomplete market'', {\it SIAM Journal on Control and Optimization}, {\bf29}, 702-730.

\bibitem{karshr91} Karatzas, I. and S. Shreve (1991): ``Brownian motion and stochastic calculus'', Springer Verlag.

\bibitem{kob00} Kobylanski, M. (2000): ``Backward stochastic differential equations and partial differential equations with quadratic growth'', {\it The Annals of Probability}, {\bf 28}, 558-602. 

\bibitem{krasch99} Kramkov, D. and W. Schachermayer (1999): ``The asymptotic elasticity of utility functions and optimal investment in incomplete markets'', {\it The Annals of Applied Probability}, {\bf 9}, 904-950.

\bibitem{kus99} Kusuoka, S. (1999):  ``A remark on default risk models'', {\it Advances in Mathematical Economics}, {\bf 1}, 69-82.

\bibitem{limque10} Lim, T. and M.C. Quenez (2010): ``Portfolio optimization in a default model under full/partial information'', preprint.

\bibitem{luk01} Lukas, S. (2001): ``On pricing and hedging defaultable contingent claims'', Thesis, Humbolt University.

\bibitem{mer71} Merton, R.C. (1971): ``Optimum consumption and portfolio rules in a continuous-time model'', {\it Journal of Economic Theory}, {\bf 3}, 373-413.

\bibitem{mor09} Morlais, M.A. (2009): ``Utility maximization in a jump market model'', {\it Stochastics and Stochastics Reports}, {\bf 81}, 1-27. 

\bibitem{nev75} Neveu, J. (1975): ``Discrete-parameter martingales'', English translation, North-Holland, Amsterdam and American Elsevier, New York.

\bibitem{par97} Pardoux, E. (1997): ``Generalized discontinuous backward stochastic differential equation'', in: N, El Karoui, L. Mazliak (Eds.), Backward Stochastic Differential Equations, in: Pitman Res. Notes Math. Ser., vol. 364, Longman, Harlow, 207-219.

\bibitem{pen90} Peng, S. (1990): ``A general stochastic maximum principle for optimal control problems'', {\it SIAM Journal on control and optimization}, {\bf 28}, 966-979. 

\bibitem{pro90} Protter, P. (1990):   ``Stochastic integration and differential equations'', Springer Verlag.

\bibitem{rounek00} Rouge, R. and N. El Karoui (2000): ``Pricing via utility maximization and entropy'', {\it Mathematical Finance}, {\bf 10}, 259-276.

\bibitem{roy06} Royer, M. (2006): ``Backward stochastic differential equations with jumps and related non-linear expectations'', {\it Stochastic processes and their applications}, {\bf 116}, 1358-1376.

\bibitem{sch01} Schachermayer, W. (2001): ``Optimal investment in incomplete markets when wealth may become negative'', {\it Annals of Applied Probability}, {\bf 11}, 694-734.

\bibitem{tanli94} Tang, S.H. and X. Li (1994): ``Necessary conditions for optimal control of stochastic systems with random jumps'', {\it SIAM Journal on Control and Optimization}, {\bf 32}, 1447-1475.


\end{thebibliography}
\end{document}